\title{Structural Parameters, Tight Bounds, and Approximation for $(k,r)$-Center\thanks{Research partially supported by the PSL research project MULTIFAC.}}
\author{Ioannis Katsikarelis}
\author{Michael Lampis}
\author{Vangelis Th.\ Paschos}
\affil{Université Paris-Dauphine, PSL Research University, CNRS, UMR 7243 \\ LAMSADE, 75016, Paris, France, \texttt{\{ioannis.katsikarelis|michail.lampis|paschos\}@lamsade.dauphine.fr}}
\theoremstyle{plain}
\newtheorem{theorem}{Theorem}
\newtheorem{lemma}[theorem]{Lemma}
\newtheorem{corollary}[theorem]{Corollary}
\newlength{\defbaselineskip}
\newcommand{\setlinespacing}[2]%
           {\setlength{\baselineskip}{#1 \defbaselineskip}}
\newlength{\btw}
\newlength{\stw}
\newsavebox\tmpbox 
\newcommand{\cw}{\ensuremath\textrm{cw}}
\newcommand{\tw}{\ensuremath\textrm{tw}}
\newcommand{\pw}{\ensuremath\textrm{pw}}
\newcommand{\td}{\ensuremath\textrm{td}}
\newcommand{\vc}{\ensuremath\textrm{vc}}
\newcommand{\fvs}{\ensuremath\textrm{fvs}}
\newcommand{\KC}{$(k,r)$-\textsc{Center}}
\newcommand{\dl}{\ensuremath\textrm{dl}}
\newcommand{\DL}{\ensuremath\textrm{DL}}
\date{}
\begin{document}

\maketitle

\begin{abstract} In \KC\ we are given a (possibly edge-weighted) graph and are
asked to select at most $k$ vertices (centers), so that all other vertices are
at distance at most $r$ from a center. In this paper we provide a number of
tight fine-grained bounds on the complexity of this problem with respect to
various standard graph parameters. Specifically:
\begin{itemize}
\item For any $r\ge 1$, we show an algorithm that solves the problem in
$O^*((3r+1)^{\cw})$ time, where $\cw$ is the clique-width of the input graph,
as well as a tight SETH lower bound matching this algorithm's performance. As a
corollary, for $r=1$, this closes the gap that previously existed on the
complexity of \textsc{Dominating Set} parameterized by $\cw$.
\item We strengthen previously known FPT lower bounds, by showing that \KC\ is
W[1]-hard parameterized by the input graph's vertex cover (if edge weights are
allowed), or feedback vertex set, even if $k$ is an additional parameter.  Our
reductions imply tight ETH-based lower bounds.  
Finally, we devise an algorithm parameterized by vertex cover for unweighted graphs.
\item We show that the complexity of the problem parameterized by tree-depth is
$2^{\Theta(\td^2)}$, by showing an algorithm of this complexity and a tight
ETH-based lower bound.
\end{itemize}
We complement these mostly negative results by providing \emph{FPT
approximation schemes} parameterized by clique-width or treewidth, which work
efficiently independently of the values of $k,r$. In particular, we give
algorithms which, for any $\epsilon>0$, run in time
$O^*((\tw/\epsilon)^{O(\tw)})$, $O^*((\cw/\epsilon)^{O(\cw)})$ and return a
$(k,(1+\epsilon)r)$-center if a $(k,r)$-center exists, thus circumventing the
problem's W-hardness.  \end{abstract}

\section{Introduction}\label{sec:intro}

In this paper we study the \KC\ problem: given a graph $G=(V,E)$ and a weight
function $w: E\to \mathbb{N}^+$ which satisfies the triangle inequality and
defines the length of each edge, we are asked if there exists a set $K$ (the \textit{center-set}) of at
most $k$ vertices of $V$, so that $\forall u\in V\setminus K$ we have
$\min_{v\in K}d(v,u)\le r$, where $d(v,u)$ denotes the shortest-path distance
from $v$ to $u$ under weight function $w$. If $w$ assigns weight $1$ to all
edges we say that we have an instance of \emph{unweighted} \KC.
\KC\ is an extremely well-investigated optimization problem with numerous applications.
It has a long history, especially from the point of view of approximation
algorithms, where the objective is typically to minimize $r$ for a given $k$
\cite{Gonzalez85,HochbaumS86,V01,KhullerS00,ChechikP15,FederG88,PanigrahyV98,Krumke95,KhullerPS00,AgarwalP02,EisenstatKM14}.
The converse objective (minimizing $k$ for a given $r$) has also been
well-studied, with the problem being typically called $r$-\textsc{Dominating
Set} in this case~\cite{Slater76,Bar-IlanKP93,BrandstadtD98,LokshtanovMPRS13,CoelhoMW15}.

Because \KC\ generalizes \textsc{Dominating Set} (which corresponds to the case
$r=1$), the problem can already be seen to be hard, even to approximate (under
standard complexity assumptions).  In particular, the optimal $r$ cannot be
approximated in polynomial time by a factor better than $2$ (even on planar
graphs \cite{FederG88}), while $k$ cannot be approximated by a factor better
than $\ln n$ \cite{Moshkovitz15}. Because of this hardness, we are strongly
motivated to investigate the problem's complexity when the input graph has some
restricted structure. 

In this paper our goal is to perform a complete analysis of the complexity of
\KC\ that takes into account this input structure by using the framework of
parameterized complexity.  In particular, we provide \emph{fine-grained} upper
and lower bound results on the complexity of \KC\ with respect to the most
widely studied parameters that measure a graph's structure: treewidth \textbf{$\tw$},
clique-width \textbf{$\cw$}, tree-depth \textbf{$\td$}, vertex cover
\textbf{$\vc$}, and feedback vertex set \textbf{$\fvs$}. In addition to the
intrinsic value of determining the precise complexity of \KC, this approach is
further motivated by the fact that FPT algorithms for this problem have often
been used as building blocks for more elaborate approximation algorithms
\cite{DemaineFHT05,EisenstatKM14}.  Indeed, (some of) these questions have
already been considered, but we provide a number of new results that build on
and improve the current state of the art.  Along the way, we also close a gap
on the complexity of the flagship \textsc{Dominating Set} problem parameterized
by clique-width.  Specifically, we prove the following:
\begin{itemize}
\item \emph{\KC\ can be solved (on unweighted graphs) in time $O^*((3r+1)^{\cw})$ (if
a clique-width expression is supplied with the input), but it cannot be solved
in time $O^*((3r+1-\epsilon)^{\cw})$ for any fixed $r\ge1$, unless the Strong
Exponential Time Hypothesis (SETH) \cite{ImpagliazzoP01,ImpagliazzoPZ01} fails.}

The algorithmic result relies on standard techniques (dynamic programming on
clique-width, fast subset convolution), as well as several problem-specific
observations which are required to obtain the desired table size. The SETH
lower bound follows from a direct reduction from \textsc{SAT}. A noteworthy
consequence of our lower bound result is that, \emph{for the case of
\textsc{Dominating Set}, it closes the gap between the complexity of the best
known algorithm ($O^*(4^\cw)$ \cite{BodlaenderLRV10}) and the best previously
known lower bound ($O^*((3-\epsilon)^\cw)$ \cite{LokshtanovMS11a})}.
\item \emph{\KC\ cannot be solved in time $n^{o(\vc+k)}$ on edge-weighted graphs, or
time $n^{o(\fvs+k)}$ on unweighted graphs, unless the Exponential Time
Hypothesis (ETH) is false.}

It was already known that an FPT algorithm
parameterized just by $\tw$ (for unbounded $r$) is unlikely to be possible~\cite{BorradaileL16}. These results show that the same holds for the two more
restrictive parameters $\fvs$ and $\vc$, even if $k$ is also added as a
parameter. They are (asymptotically) tight, since it is easy to obtain
$O^*(n^\fvs)$, $O^*(n^\vc)$, and $O^*(n^k)$ algorithms. We remark that \KC\ is
a rare example of a problem that turns out to be hard parameterized by $\vc$.
We complement these lower bounds by an FPT algorithm for the unweighted case,
running in time $O^*(5^\vc)$.
\item \emph{\KC\ can be solved in time $O^*(2^{O(\td^2)})$ for unweighted graphs, but
if it can be solved in time $O^*(2^{o(\td^2)})$, then the ETH is false.}

Here the upper bound follows from known connections between a graph's tree-depth and
its diameter, while the lower bound follows from a reduction from
$3$-\textsc{SAT}. We remark that this is a somewhat uncommon example of a
parameterized problem whose parameter dependence turns out to be exponential in
the \emph{square} of the parameter.
\end{itemize}
The results above, together with the recent work of~\cite{BorradaileL16} which showed tight bounds of $O^*((2r+1)^{\tw})$ regarding
the problem's complexity parameterized by~$\tw$, give a complete, and often
fine-grained, picture on \KC\ for the most important graph parameters. One of
the conclusions that can be drawn is that, as a consequence of the problem's
hardness for $\vc$ (in the weighted case) and $\fvs$, there are few
cases where we can hope to obtain an FPT algorithm without bounding the value
of $r$. In other words, as $r$ increases the
complexity of exactly solving the problem quickly degenerates away from the
case of \textsc{Dominating Set}, which is FPT for all considered parameters. 

A further contribution of this paper is to complement this negative view by
pointing out that it only applies if one insists on solving the problem
\emph{exactly}. If we allow algorithms that return a
$(1+\epsilon)$-approximation to the optimal $r$, for arbitrarily small
$\epsilon>0$ and while respecting the given value of $k$, we obtain the
following:
\begin{itemize}
\item \emph{There exist algorithms which, for any $\epsilon>0$, when given a graph
that admits a $(k,r)$-center, return a $(k,(1+\epsilon)r)$-center in time
$O^*((\tw/\epsilon)^{O(\tw)})$, or $O^*((\cw/\epsilon)^{O(\cw)})$, assuming a tree
decomposition or clique-width expression is given in the input.}
\end{itemize}
The $\tw$ approximation algorithm is based on a technique introduced in
\cite{Lampis14}, while the $\cw$ algorithm relies on a new extension of an idea
from \cite{GurskiW00}, which may be of independent interest.
Thanks to these approximation algorithms, we arrive at an improved
understanding of the complexity of \KC\ by including the question of
approximation, and obtain algorithms which continue to work efficiently even
for large values of $r$.  Figure \ref{fig:param_relations} illustrates the
relationships between parameters and Table \ref{parameter_table} summarizes our results.
\begin{table}[htbp]
\begin{minipage}[b]{0.3\linewidth}
\centering
\includegraphics[width=40mm]{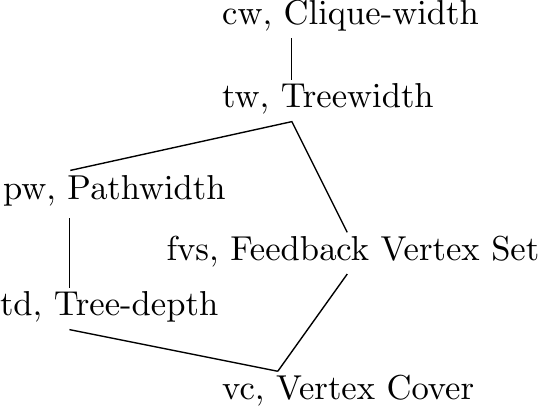}
\captionof{figure}{Relationships of parameters. Algorithmic results are inherited
 downwards, hardness results upwards.}
\label{fig:param_relations}
\end{minipage}
\hfill
\begin{minipage}[b]{0.66\linewidth}
\centering
\begin{tabular}{|c|c|c|c|c|c|c|}
\hline
              & cw            & tw                           & fvs           & td          & vc          \\ \hline
FPT exact     & \ref{thm:cw_dp} (w/u) & \ref{thm:tw-exact} (w/u)           &               & \ref{thm:td-alg} (u) & \ref{thm:vc_algo} (u) \\ \hline
FPT-AS        & \ref{thm_cw_approx} (w/u) & \ref{thm:tw_approx} (w/u)                  &               &             &             \\ \hline
SETH LB       & \ref{cw_SETH_LB} (u)   &                     &               &             &             \\ \hline
ETH LB        &               &                                & \ref{thm:W_hard_FVSk} (w/u) & \ref{thm:eth_TD_LB} (u) & \ref{thm:W_hard_VCk} (w) \\ \hline
W{[}1{]}-hard &               &                                 & \ref{thm:W_hard_FVSk} (w/u) &             & \ref{thm:W_hard_VCk} (w) \\ \hline
\end{tabular}
    \caption{A summary of our results (theorem numbers) for all considered parameters. Initials u/w denote the unweighted/weighted variants of the problem.}
    \label{parameter_table}
\end{minipage}
\end{table}

\subparagraph{Related Work:} Our work follows  upon recent work by~\cite{BorradaileL16}, which showed that \KC\ can be solved in
$O^*((2r+1)^\tw)$, but not faster (under SETH), while its connected variant can
be solved in $O^*((2r+2)^\tw)$, but not faster. This paper in turn generalized
previous results on \textsc{Dominating Set} for which a series of papers had
culminated into an $O^*(3^\tw)$ algorithm~\cite{TelleP93,AlberN02,RooijBR09},
while on the other hand,~\cite{LokshtanovMS11a} showed that
an $O^*((3-\epsilon)^\pw)$ algorithm would violate the SETH, where $\pw$
denotes the input graph's pathwidth.  The complexity of \KC\ by the related
parameter branchwidth had previously been considered in~\cite{DemaineFHT05} where an
$O^*((2r+1)^{\frac{3}{2}\textrm{bw}})$ algorithm is given. Moreover,~\cite{Marx05}
showed the problem parameterized by the number $k$ of centers to be W[1]-hard
in the $L_{\infty}$ metric, in fact analysing \textsc{Covering
Points with Squares}, a geometric variant. On clique-width, a
$O^*(4^{\textrm{cw}})$-time algorithm for \textsc{Dominating Set} was
given in~\cite{BodlaenderLRV10}, while~\cite{OumSV14} notes that the lower
bound of~\cite{LokshtanovMS11a} for pathwidth/treewidth would also imply no
$(3-\epsilon)^{\textrm{cw}}\cdot n^{O(1)}$-time algorithm exists for
clique-width under SETH as well, since clique-width is at most 1 larger than
pathwidth. For the edge-weighted variant, \cite{Feldmann15} shows that a $(2-\epsilon)$-approximation is W[2]-hard for parameter $k$ and NP-hard for graphs of \emph{highway dimension} $h=O(\log^2n)$, while also offering a 3/2-approximation algorithm of running time $2^{O(kh\log(h))}\cdot n^{O(1)}$, exploiting the similarity of this problem with that of solving \textsc{Dominating Set} on graphs of bounded $\vc$. Finally, for unweighted graphs, \cite{Leitert17} provides efficient  (linear/polynomial) algorithms computing $(r+O(\mu))$-dominating sets and $+O(\mu)$-approximations for \KC, where $\mu$ is the \emph{tree-breadth} or \emph{cluster diameter} in a layering partition of the input graph, while \cite{EisenstatKM14} gives a polynomial-time bicriteria approximation scheme for graphs of bounded genus.

\section{Definitions and Preliminaries}\label{sec:prel}
Dealing with \KC\ 
we allow, in general, the weight function $w$ to be non-symmetric. We also require edge weights to
be strictly positive integers but, as we will see (Lemma~\ref{lem:zero-ok}),
this is not a significant restriction.

We use standard graph-theoretic notation. For a graph $G=(V,E)$, $n=|V|$
denotes the number of vertices, $m=|E|$ the number of edges and for a subset $X\subseteq V$,
$G[X]$ denotes the graph induced by $X$. 
Further, we assume the reader has some familiarity with standard definitions
from parameterized complexity theory, such as the classes FPT, W[1] (see
\cite{CyganFKLMPPS15,FlumG06,DowneyF13}). For a parameterized problem with
parameter $k$, an FPT-AS is an algorithm which for any $\epsilon>0$ runs in
time $O^*(f(k,\frac{1}{\epsilon}))$ (i.e.\ FPT time when parameterized by
$k+\frac{1}{\epsilon}$) and produces a solution at most a multiplicative factor
$(1+\epsilon)$ from the optimal (see \cite{Marx08}). We use $O^*(\cdot)$ to
imply omission of factors polynomial in $n$. In this paper we present
approximation schemes with running times of the form $(\log
n/\epsilon)^{O(k)}$. These can be seen to imply an FPT running time by a
well-known win-win argument:
\begin{lemma}\label{lem:fpt-logn}
If a parameterized problem with parameter $k$ admits, for some $\epsilon>0$, an
algorithm running in time $O^*((\log n /\epsilon)^{O(k)})$, then it also admits an
algorithm running in time $O^*((k/\epsilon)^{O(k)})$.
\end{lemma}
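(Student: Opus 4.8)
The plan is a standard ``win-win'' case analysis: we simply run the hypothesized algorithm and argue that, on every input, the $\log n$ sitting in its exponent can be charged either to the parameter $k$ or to a polynomial overhead in $n$. Concretely, suppose the problem admits an algorithm with running time $\mathrm{poly}(n)\cdot(\log n/\epsilon)^{ck}$ for a fixed constant $c$; throughout we take $0<\epsilon\le 1$, as is standard for approximation schemes. The whole argument turns on comparing $k$ with $\sqrt{\log n}$.

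First I would dispose of the regime $k\ge\sqrt{\log n}$. Here $\log n\le k^{2}$, so $(\log n/\epsilon)^{ck}\le (k^{2}/\epsilon)^{ck}\le (k/\epsilon)^{2ck}$, the last step using $\epsilon\le 1$. Thus on such instances the hypothesized algorithm already runs in time $O^*((k/\epsilon)^{O(k)})$, and nothing more needs to be done.

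Next, the regime $k<\sqrt{\log n}$. Then the exponent obeys $ck\log\log n< c\sqrt{\log n}\,\log\log n$, and since $\log\log n=o(\sqrt{\log n})$ there is an absolute constant $n_{0}$, depending only on $c$, such that $c\sqrt{\log n}\,\log\log n\le\log n$ for all $n\ge n_{0}$. Hence $(\log n)^{ck}=2^{ck\log\log n}\le 2^{\log n}=n$, and the running time is at most $\mathrm{poly}(n)\cdot n\cdot(1/\epsilon)^{ck}=O^*((1/\epsilon)^{O(k)})\le O^*((k/\epsilon)^{O(k)})$ (using $k\ge 1$). The leftover instances with $n<n_{0}$ have size bounded by an absolute constant, so for them $\log n=O(1)$ and the hypothesized running time is itself already $O^*((1/\epsilon)^{O(k)})$ (alternatively, such instances are solved by brute force). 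Combining the cases yields the claimed $O^*((k/\epsilon)^{O(k)})$ bound.

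The argument is entirely elementary, so there is no real ``hard part''; the one spot that needs a moment's attention is the seam between the two main regimes — the estimate $c\sqrt{\log n}\,\log\log n\le\log n$ holds only past a $c$-dependent threshold, which is exactly why the finitely many small instances have to be peeled off and handled separately.
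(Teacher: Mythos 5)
Your proof is correct and follows essentially the same win-win argument as the paper: split on whether $k$ exceeds $\sqrt{\log n}$, absorb $(\log n)^{O(\sqrt{\log n})}$ into the polynomial factor in one case, and use $\log n\le k^2$ in the other. The extra care you take with the threshold $n_0$ is harmless but not needed, since $(\log n)^{O(\sqrt{\log n})}=n^{O(\log\log n/\sqrt{\log n})}$ is already $O^*(1)$ for all $n$.
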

 \begin{proof}
 We consider two cases: if $k\le \sqrt{\log n}$ then $(\log n/\epsilon)^{O(k)} =
 (1/\epsilon)^{O(k)} (\log n)^{O(\sqrt{\log n})} = O^*((1/\epsilon)^{O(k)})$. If
 on the other hand, $k> \sqrt{\log n}$, we have $\log n \le k^2$, so $O^*((\log n
 /\epsilon)^{O(k)}) = O^*((k/\epsilon)^{O(k)})$. \end{proof}
A \emph{tree decomposition} of a graph $G=(V,E)$ is a pair $(\mathcal{X},T)$ with $T=(I,F)$ a tree and $\mathcal{X}=\{X_i|i\in I\}$ a family of subsets of $V$ (called \emph{bags}), one for each node of $T$, with the following properties:
 \begin{enumerate}[1)]
  \item $\bigcup_{i\in I}X_i=V$;
  \item for all edges $(v,w)\in E$, there exists an $i\in I$ with $v,w\in X_i$;
  \item for all $i,j,k\in I$, if $j$ is on the path from $i$ to $k$ in $T$, then $X_i\cap X_k\subseteq X_j$.
 \end{enumerate}
 The \emph{width} of a tree decomposition $((I,F),\{X_i|i\in I\})$ is $\max_{i\in I}|X_i|-1$. The \emph{treewidth} of a graph $G$ is the minimum width over all tree decompositions of $G$, denoted by $\textrm{tw}(G)$ (\cite{BodlaenderK08,Bodlaender06,Bodlaender00}).

Moreover, for rooted $T$, let $G_i=(V_i,E_i)$ denote the \emph{terminal subgraph} defined by node $i\in I$, i.e.\ the induced subgraph of $G$ on all vertices in bag $i$ and its descendants in $T$. Also let $N_{i}(v)$ denote the neighborhood of vertex $v$ in $G_i$ and $d_i(u,v)$ denote the distance between vertices $u$ and $v$ in $G_i$, while $d(u,v)$ (absence of subscript) is the distance in $G$.

In addition, a tree decomposition can be converted to a \emph{nice} tree decomposition of the same width (in $O(\textrm{tw}^2\cdot n)$ time and with $O(\textrm{tw}\cdot n)$ nodes): the tree here is rooted and binary, while nodes can be of four types: 
\begin{enumerate}[a)]
 \item Leaf nodes $i$ are leaves of $T$ and have $|X_i|=1$;
 \item Introduce nodes $i$ have one child $j$ with $X_i=X_j\cup\{v\}$ for some vertex $v\in V$ and are said to \emph{introduce} $v$;
 \item Forget nodes $i$ have one child $j$ with $X_i=X_j\setminus\{v\}$ for some vertex $v\in V$ and are said to \emph{forget} $v$;
 \item Join nodes $i$ have two children denoted by $i-1$ and $i-2$, with $X_i=X_{i-1}=X_{i-2}$.
\end{enumerate}
Nice tree decompositions were introduced by Kloks in \cite{Kloks94} and using them does not in general give any additional algorithmic possibilities, yet algorithm design becomes considerably easier.

We will also make use of the notion of \emph{cliquewidth} (see \cite{CourcelleMR00}): the set of graphs of cliquewidth $\textrm{cw}$ is the set of vertex-labelled graphs that can be inductively constructed by using the following operations:
\begin{enumerate}[1)]
  \item Introduce: $i(l)$, for $l\in[1,\textrm{cw}]$ is the graph consisting of a single vertex with label $l$;
  \item Join: $\eta(G,a,b)$, for $G$ having cliquewidth $\textrm{cw}$ and $a,b\in[1,\textrm{cw}]$ is the graph obtained from $G$ by adding all possible edges between vertices of label $a$ and vertices of label $b$;
  \item Rename: $\rho(G,a,b)$, for $G$ having cliquewidth $\textrm{cw}$ and $a,b\in[1,\textrm{cw}]$ is the graph obtained from $G$ by changing the label of all vertices of label $a$ to $b$;
  \item Union: $G_1\cup G_2$, for $G_1,G_2$ having cliquewidth $\textrm{cw}$ is the disjoint union of graphs $G_1,G_2$.
 \end{enumerate} Note we here assume the labels are integers in $[1,\textrm{cw}]$, for ease of exposition.
 
A \emph{cliquewidth expression} of width $\textrm{cw}$ for $G=(V,E)$ is a recipe for constructing a $\textrm{cw}$-labelled graph isomorphic to $G$. More formally, a cliquewidth expression is a rooted binary tree $T_G$, such that each node $t\in T_G$ has one of four possible types, corresponding to the operations given above. In addition, all leaves are introduce nodes, each introduce node has a label associated with it and each join or rename node has two labels associated with it. For each node $t$, the graph $G_t$ is defined as the graph obtained by applying the operation of node $t$ to the graph (or graphs) associated with its child (or children). All graphs $G_t$ are subgraphs of $G$ and for all leaves of label $l$, their associated graph is $i(l)$.

 Additionally, we will use the parameters
\emph{vertex cover number} and \emph{feedback vertex set number}  of a graph
$G$, which are the sizes of the minimum vertex set whose deletion leaves the
graph edgeless, or acyclic, respectively. Finally, we will consider the related
notion of \emph{tree-depth} \cite{NesetrilM06}, which is defined as the minimum
height of a rooted forest whose completion (the graph obtained by connecting
each node to all its ancestors) contains the input graph as a subgraph. We will
denote these parameters for a graph $G$ as
$\tw(G),\pw(G),\cw(G),\vc(G),\fvs(G)$, and $\td(G)$, and will omit~$G$ if it is
clear from the context. We recall the following well-known relations between these parameters~\cite{BodlaenderGHK95,CourcelleO00} which
justify the hierarchy given in Figure \ref{fig:param_relations}:
\begin{lemma}\label{lem:relations} \cite{BodlaenderGHK95,CourcelleO00} For any
graph $G$ we have $\tw(G) \le \pw(G) \le \td(G) \le \vc(G)$, $\tw(G)\le \fvs(G)
\le vc(G)$, $\cw(G)\le \pw(G)+1$, and $\cw(G) \le 2^{\tw(G)+1}+1$.  \end{lemma}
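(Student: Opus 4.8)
The plan is to prove each inequality by an explicit transformation between the relevant decompositions; all of these relations are folklore, and I would point to \cite{BodlaenderGHK95,CourcelleO00} for the precise statements, but the arguments are short enough to reconstruct.

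\emph{The easy comparisons.} For $\tw(G)\le\pw(G)$, note that a path decomposition is a tree decomposition whose tree happens to be a path, so the minimum width over all tree decompositions is at most the minimum over path decompositions. For $\fvs(G)\le\vc(G)$, deleting a vertex cover leaves an edgeless --- hence acyclic --- graph, so every vertex cover is in particular a feedback vertex set. For $\tw(G)\le\fvs(G)$ (up to an additive unit), if $S$ is a minimum feedback vertex set then $G-S$ is a forest and so admits a tree decomposition of width at most $1$; adding $S$ to every bag turns it into a tree decomposition of $G$ of width at most $|S|+1$.

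\emph{The tree-depth comparisons.} For $\td(G)\le\vc(G)$, given a vertex cover $S=\{s_1,\dots,s_c\}$, form a rooted forest by placing $s_1,\dots,s_c$ on a single ancestor-chain in this order and hanging every vertex of $V\setminus S$ as a leaf below $s_c$; since every edge of $G$ has an endpoint in $S$ and all of $S$ lies on one chain, the completion of this forest contains $G$, and its height equals $c$ (up to the usual $\pm 1$ convention). For $\pw(G)\le\td(G)$, take a rooted forest $F$ of minimum height whose completion contains $G$, run a depth-first search of $F$, and for each leaf $\ell$, in the order it is visited, create the bag consisting of $\ell$ together with all of its ancestors in $F$. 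Each bag has at most $\td(G)$ vertices; the bags containing a fixed vertex $v$ are exactly those of the leaves in $v$'s subtree, which are consecutive in DFS order; and for any edge $uv$ of $G$ one of $u,v$ is an ancestor of the other in $F$, hence both appear in the bag of any leaf below the deeper endpoint. So this is a valid path decomposition of width at most $\td(G)-1$.

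\emph{The clique-width bounds.} For $\cw(G)\le\pw(G)+O(1)$, scan a path decomposition $X_1,\dots,X_t$ from left to right, maintaining the invariant that each vertex currently in the active bag carries a distinct label from a fixed palette, while every already-forgotten vertex has been renamed to one common ``retired'' label. When a vertex $v$ first enters a bag $X_i$, introduce it with a free active label and, for each neighbour $u$ of $v$ already present --- necessarily active, since $u$ and $v$ share $X_i$ --- apply a Join between the two labels; when $v$ is forgotten, rename its label to ``retired''. Edges are added only between two active, uniquely labelled vertices, so no spurious edges appear, and $|X_i|+1$ labels suffice throughout. For $\cw(G)\le 2^{\tw(G)+1}+O(1)$ I would use the Courcelle--Olariu construction: root a tree decomposition of width $\tw(G)$ and build $G$ bottom-up, labelling each already-introduced vertex $v$ by the subset of the current bag adjacent to $v$ (at most $2^{\tw(G)+1}$ such subsets, plus one extra label for vertices that have left the current bag). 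The one genuinely delicate point --- and the step I expect to be the main obstacle --- is to check that these adjacency-pattern labels stay correct through the Union operation at join nodes: two vertices coming from different child-graphs that carry the same label have, by definition, the same adjacency to the current bag, hence the same future behaviour, so merging the two graphs via Union and replaying the pending Joins produces exactly the edges of $G$ and no others. Verifying this invariant (and that it is restored after each introduce/forget of a bag vertex) is the crux; the remainder is routine bookkeeping.
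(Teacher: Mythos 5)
The paper does not actually prove this lemma---it is stated with citations to \cite{BodlaenderGHK95,CourcelleO00} and used only qualitatively to justify the parameter hierarchy of Figure~\ref{fig:param_relations}---so there is no in-paper argument to compare against; your reconstruction of the folklore proofs is essentially the standard one and is sound in all its structural ideas (path decompositions as tree decompositions, vertex covers as feedback vertex sets, the chain-plus-leaves forest for $\td\le\vc$, the DFS-over-leaves path decomposition for $\pw\le\td$, the one-label-per-active-vertex scheme for clique-width versus pathwidth, and the Courcelle--Olariu adjacency-pattern labelling for $\cw\le 2^{\tw+1}+1$).

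Two points of friction with the lemma \emph{as stated}, both about additive constants rather than ideas. First, your $\tw\le\fvs+1$ is in fact the correct folklore bound (a tree with an edge has $\tw=1$ and $\fvs=0$), so your hedge ``up to an additive unit'' is not sloppiness on your part but a genuine off-by-one in the statement you were asked to prove; the paper never uses the inequality at this precision. Second, your clique-width construction as written needs one distinct label per vertex of the current bag \emph{plus} a retired label, i.e.\ $\pw+2$ labels, whereas the lemma claims $\cw\le\pw+1$; to shave the extra label one must retire a bag vertex as soon as all of its edges have been realized (so that at the moment a full bag forces $\pw+1$ active labels, at least one occupant no longer needs a private label), and you do not argue this. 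Similarly, the $2^{\tw+1}+1$ bound is only sketched: you correctly identify the preservation of the adjacency-pattern invariant across union/join nodes as the crux, but you defer its verification, and the exact label count ($2^{\tw+1}$ patterns plus one junk label, with the current bag's vertices absorbed into the pattern labels) is where that constant is actually earned. For the purposes of this paper none of this matters, since the lemma is only invoked to transfer algorithms downward and hardness upward in the hierarchy, where additive constants are irrelevant.
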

We also recall here the two main complexity assumptions used in this paper
\cite{ImpagliazzoP01,ImpagliazzoPZ01}. The Exponential Time Hypothesis (ETH)
states that 3-\textsc{SAT} cannot be solved in time $2^{o(n+m)}$ on instances
with $n$ variables and $m$ clauses. The Strong Exponential Time Hypothesis
(SETH) states that for all $\epsilon>0$, there exists an integer $k$ such that
$k$-\textsc{SAT} cannot be solved in time $(2-\epsilon)^n$ on instances of
$k$-\textsc{SAT} with $n$ variables.

Finally, \textsc{$k$-Multicolored Independent Set} is defined as follows: we are given a graph $G=(V,E)$, with $V$ partitioned into $k$ cliques $V=V_1\uplus\dots\uplus V_k$, $|V_i|=n, \forall i\in[1,k]$, and are asked to find an $S\subseteq V$, such that $G[S]$ forms an independent set and $|S\cap V_i|=1,\forall i\in[1,k]$.

\section{Clique-width}

\subsection{Lower bound based on SETH}\label{sec_cw_LB} 

In this section we prove that for any fixed constant
$r\ge1$, the existence of any algorithm for \textsc{$(k,r)$-Center} of running
time $O^*((3r+1-\epsilon)^{\textrm{cw}})$, for some $\epsilon>0$,
would imply the existence of some algorithm for \textsc{SAT} of running time
$O^*((2-\delta)^n)$, for some $\delta>0$.

Before we proceed, let us recall the high-level
idea behind the SETH lower bound for \textsc{Dominating Set} given in
\cite{LokshtanovMS11a}, as well its generalization to \KC\ given in
\cite{BorradaileL16}. In both cases the key to the reduction is the
construction of long paths, which are conceptually divided into blocks of
$2r+1$ vertices. The intended solution consists of selecting, say, the $i$-th
vertex of a block of a path, and repeating this selection in all blocks of this
path. This allows us to encode $(2r+1)^t$ choices, where $t$ is the number of
paths we make, which ends up being roughly equal to the treewidth of the
construction. 

The reason this construction works in the converse direction is that, even
though the optimal \KC\ solution may ``cheat'' by selecting the $i$-th vertex
of a block, and then the $j$-th vertex of the next, one can see that we must
have $j\le i$.  Hence, by making the paths that carry the solution's encoding
long enough we can ensure that the solution eventually settles into a pattern
that encodes an assignment to the original formula (which can be ``read'' with
appropriate gadgets).

In our lower bound construction for clique-width we need to be able to ``pack''
more information per unit of width: instead of encoding $(2r+1)$ choices for
each unit of treewidth, we need to encode $(3r+1)$ choices for each label. Our
high-level plan to achieve this is to use \emph{a pair} of long paths for each
label. Because we only want to invest one label for each pair of paths we are
forced to periodically (every $2r+1$ vertices) add cross edges between them, so
that the connection between blocks can be performed with a single join
operation. See the paths $A_1,B_1$ in Figure \ref{fig:global_construction} for
an illustration.

Our plan now is to encode a solution by selecting a pair of vertices that will
be repeated in each block, for example every $i$-th vertex of $A_1$ and every
$j$-th vertex of $B_1$. One may naively expect that this would allow us to
encode $(2r+1)^2$ choices for each label (which would lead to a SETH lower
bound that would contradict the algorithm of Section \ref{sec_cw_DP}). However,
because of the cross edges, the optimal \KC\ solution is not as well-behaved on
a pair of cross-connected paths as it was on a path, and this makes it much
harder to execute the converse direction of the reduction. Our strategy will
therefore be to identify $(3r+1)$ canonical selection pairs, order them, and
show that any valid solution must be well-behaved with respect to these pairs.
We will then use these pairs to encode $(3r+1)^{\cw}$ choices and
still get the converse direction of the reduction to work (if all paths are
sufficiently long).

We next describe the construction of a graph $G$, given some $\epsilon<1$ and an
instance $\phi$ of \textsc{SAT} with $n$ variables and $m$ clauses. We first
choose an integer $p\ge\dfrac{1}{(1-\lambda)\log_{2}(3r+1)}$, for $\lambda=\log_{3r+1}(3r+1-\epsilon)<1$, for reasons that become
apparent in the proof of Theorem \ref{cw_SETH_LB}. Note that for the results of this section, both $r$ and $p$ are considered constants. We then group the variables
of $\phi$ into $t=\lceil\frac{n}{\gamma}\rceil$ groups $F_1,\dots,F_t$, for
$\gamma=\lfloor\log_2(3r+1)^p\rfloor$, being also the maximum size of any such
group. Our construction uses a main Block gadget $\hat{G}$ and three smaller gadgets $\hat{T}_N$, $\hat{X}_N$ and $\hat{U}_N$ as parts.
\subparagraph{Guard gadget $\hat{T}_N$:} This gadget has $N$ \emph{input}
vertices and its purpose is to allow for any selection of a single input vertex
as a center to cover all vertices within the gadget, while offering no paths of
length $\le r$ between the inputs. Thus if at least two guard gadgets are attached to a set of $N$ input vertices, any minimum-sized center-set will select one of them to cover all vertices in the gadgets, without interfering with whether the other inputs are covered by some outside selection.
 \subparagraph{Construction and size/cw bounds for $\hat{T}_N$:}  Given vertices $v_1,\dots,v_N$, we construct
 the gadget as follows: we first make $\lfloor\frac{r}{2}\rfloor$ vertices
 $u_i^{1},\dots,u_i^{\lfloor\frac{r}{2}\rfloor}$ forming a path for each $v_i$
 and we connect each $u_i^1$ to each $v_i$. We then make another path on
 $\lceil\frac{r}{2}\rceil$ vertices, called
 $w^1,\dots,w^{\lceil\frac{r}{2}\rceil}$, and we make vertices
 $u_i^{\lfloor\frac{r}{2}\rfloor}$ adjacent to the starting vertex $w^1$ of this
 path for all $i\in[1,N]$. Finally, if $r$ is even, we make all vertices
 $u_i^{\lfloor\frac{r}{2}\rfloor}$ into a clique, while for odd $r$ the
 construction is already complete. Figure \ref{fig:guard_gadget} provides an
 illustration. The number of vertices in the gadget is $|\hat{T}_N|=N\lfloor\frac{r}{2}\rfloor+\lceil\frac{r}{2}\rceil$, while the gadget can also be constructed by a clique-width expression using at most this number of labels, by handling each vertex as an individual label.
 \begin{figure}[htbp]
 \centerline{\includegraphics[width=70mm]{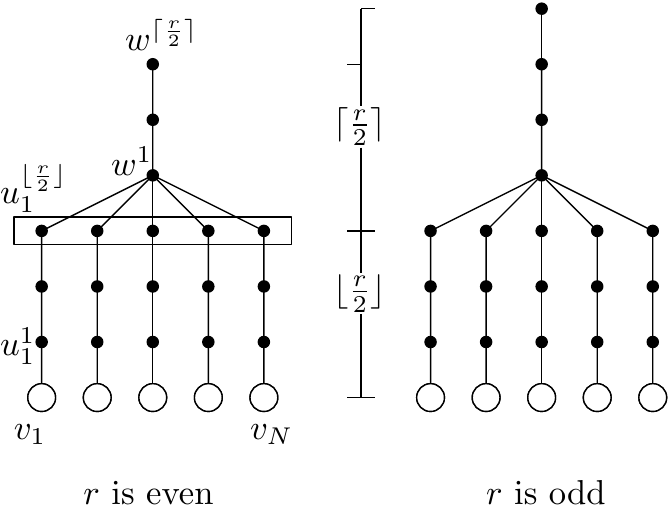}}
 \caption{A general
 picture of the guard gadget $\hat{T}_N$ for even and odd $r$. Note the box
 indicating vertices forming a clique for the case of even $r$.}
 \label{fig:guard_gadget} \end{figure}
\begin{lemma}\label{guard_gadget_proof}
 All vertices of gadget $\hat{T}_N$ are at distance $\le r$ from any input vertex $v_i,i\in[1,N]$, while $d(v_i,v_j)=r+1$, for any $i\not=j\in[1,N]$.
\end{lemma}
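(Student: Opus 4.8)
The plan is to prove both claims by direct shortest‑path accounting on the gadget's skeleton, after fixing the following picture of $\hat T_N$: it consists of $N$ \emph{private branches} $P_i=(u_i^1,\dots,u_i^{\lfloor r/2\rfloor})$, with $P_i$ hanging off the input $v_i$ via the edge $v_iu_i^1$, together with the \emph{shared spine} $(w^1,\dots,w^{\lceil r/2\rceil})$; the only edges joining a branch to anything outside it are the ``tip'' edges $u_i^{\lfloor r/2\rfloor}w^1$ (present for all $r$) and, when $r$ is even, the clique edges among the tips $u_\cdot^{\lfloor r/2\rfloor}$. Throughout I will use $\lfloor r/2\rfloor+\lceil r/2\rceil=r$. (The degenerate case $r=1$, where $\lfloor r/2\rfloor=0$ and the gadget reduces to a single vertex $w^1$ adjacent to all inputs, is immediate, so one may assume $r\ge 2$.)

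For the first claim I would bound $d(v_i,\cdot)$ on each of the three vertex types. Walking down $v_i$'s own branch gives $d(v_i,u_i^\ell)=\ell\le\lfloor r/2\rfloor\le r$. Continuing onto the spine via $u_i^{\lfloor r/2\rfloor}\to w^1$ gives $d(v_i,w^\ell)\le\lfloor r/2\rfloor+\ell\le\lfloor r/2\rfloor+\lceil r/2\rceil=r$. For a foreign‑branch vertex $u_j^\ell$ with $j\neq i$, split on parity: for odd $r$ route through $w^1$, obtaining $d(v_i,u_j^\ell)\le(\lfloor r/2\rfloor+1)+1+(\lfloor r/2\rfloor-\ell)=2\lfloor r/2\rfloor+2-\ell\le r$ since $\ell\ge1$ and $r=2\lfloor r/2\rfloor+1$; for even $r$ use the tip‑clique edge $u_i^{\lfloor r/2\rfloor}u_j^{\lfloor r/2\rfloor}$, obtaining $d(v_i,u_j^\ell)\le\lfloor r/2\rfloor+1+(\lfloor r/2\rfloor-\ell)=2\lfloor r/2\rfloor+1-\ell\le r$ since $r=2\lfloor r/2\rfloor$. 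This covers every vertex of $\hat T_N$, and by the symmetry of the construction the same holds for every input $v_i$.

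For the second claim the upper bound $d(v_i,v_j)\le r+1$ comes from gluing two of the half‑routes above: through $w^1$ when $r$ is odd (length $2(\lfloor r/2\rfloor+1)=r+1$) and through a tip‑clique edge when $r$ is even (length $2\lfloor r/2\rfloor+1=r+1$). The matching lower bound is the one point needing care. Since $v_i$'s only gadget neighbour is $u_i^1$ and $v_j$'s only gadget neighbour is $u_j^1$, any $v_i$–$v_j$ path must leave branch $P_i$ and enter branch $P_j$, hence must use at least one branch‑crossing edge, which is either a tip‑clique edge or a tip–$w^1$ edge. In the first case the path traverses all of $P_i$ up to a tip and all of $P_j$, costing at least $2\lfloor r/2\rfloor+1=r+1$ edges; in the second it must reach $w^1$ from $v_i$ and then leave $w^1$ toward $v_j$, each half costing at least $\lfloor r/2\rfloor+1$, for a total of at least $2\lfloor r/2\rfloor+2\ge r+1$; a path using several such crossings is only longer. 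Hence $d(v_i,v_j)=r+1$. The main obstacle is precisely this cut‑structure step — one has to observe that the branches are pairwise separated except at $w^1$ and along the top clique, enumerate the edge types that cross between branches, and check that each forces a full ``descent and ascent'' of the private branches; everything else is a routine length count once the parity split is set up.
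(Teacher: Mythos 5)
Your proof is correct and follows essentially the same route as the paper's: direct shortest-path accounting along the private branches and the shared spine, with a parity split between the odd-$r$ (route via $w^1$) and even-$r$ (route via the tip clique) cases. The one place you go beyond the paper is the explicit cut-structure argument for the lower bound $d(v_i,v_j)\ge r+1$, which the paper only asserts implicitly via ``$v_j$ is at distance $1$ from $u_j^1$''; your added care there is welcome but does not change the nature of the argument.
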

 \begin{proof}
  The distance from any input vertex $v_i$, with $i\in[1,N]$, to vertex $u_i^{\lfloor\frac{r}{2}\rfloor}$ is $d(v_i,u_i^{\lfloor\frac{r}{2}\rfloor})=\lfloor\frac{r}{2}\rfloor$, while the distance from $u_i^{\lfloor\frac{r}{2}\rfloor}$ to $w^{\lceil\frac{r}{2}\rceil}$ is $d(u_i^{\lfloor\frac{r}{2}\rfloor},w^{\lceil\frac{r}{2}\rceil})=\lceil\frac{r}{2}\rceil$, thus $d(v_i,w^{\lceil\frac{r}{2}\rceil})=\lfloor\frac{r}{2}\rfloor+\lceil\frac{r}{2}\rceil=r$. Further, the distance from $u_i^{\lfloor\frac{r}{2}\rfloor}$ to any vertex $u_j^1$, for $j\not=i$, is $d(u_i^{\lfloor\frac{r}{2}\rfloor},u_j^1)=\lfloor\frac{r}{2}\rfloor$ for even $r$, and $d(u_i^{\lfloor\frac{r}{2}\rfloor},u_j^1)=1+\lfloor\frac{r}{2}\rfloor$ for odd $r$. Thus $d(v_i,u_j^1)=\lfloor\frac{r}{2}\rfloor+\lfloor\frac{r}{2}\rfloor=r$ for even $r$, and $d(v_i,u_j^1)=\lfloor\frac{r}{2}\rfloor+1+\lfloor\frac{r}{2}\rfloor=r$ for odd $r$. Thus any vertex on these paths is at distance $\le r$ from $v_i$, while any other input vertex $v_j$, being at distance $1$ from its corresponding $u_j^1$, is at distance $r+1$ from $v_i$.
 \end{proof}
\subparagraph{Clique gadget $\hat{X}_N$:} This gadget again has $N$ input
vertices and the aim now is to make sure that any selection of a single input
vertex as a center will cover all other vertices of the gadget, that no
selection of any single vertex that is not an input would suffice instead,
while all paths connecting the inputs are of distance exactly $r$.
 \subparagraph{Construction and size/cw bounds for $\hat{X}_N$:} Given
 vertices $v_1,\dots,v_N$, we first connect them to each other by paths on $r-1$
 new vertices, so that the distances between any two of them are exactly $r$. We
 then make all \emph{middle} vertices that lie at distance
 $\lfloor\frac{r}{2}\rfloor$ from some vertex $v_i$ on these paths adjacent to
 each other (into a clique): for even $r$ the middle vertex of each path is at
 distance $r/2$ from the path's endpoint vertices $v_i,v_j$, while for odd $r$
 the middle vertices are the two vertices at distance $(r-1)/2$ from one of the
 path's endpoints $v_i,v_j$. Thus all vertices on these paths are at distance at
 most $r$ from any vertex $v_i$, as they lie within distance
 $<\lfloor\frac{r}{2}\rfloor$ from a middle vertex on their path, which is at
 distance exactly $\lfloor\frac{r}{2}\rfloor+1$ from any input vertex $v_i$.
 Next, we make another vertex $u_{i,j}^{l}$ for each middle vertex of these
 paths between $v_i,v_j$ (and $l\in[1,2]$ for odd $r$) and we make it adjacent
 to its corresponding middle vertex. We then add another vertex $x$ that we also
 connect to all $v_i$ vertices by paths using $r-1$ new vertices (thus at distance $r$ from each $v_i$), making the
 middle vertices on these new paths adjacent to each other as well (into
 another, disjoint clique). See Figure \ref{fig:clique_gadget} for an
 illustration. The size of the gadget is $|\hat{X}_N|=N+1+{N \choose 2}\cdot r+N(r-1)$ for even $r$ and $|\hat{X}_N|=N+1+{N \choose 2}\cdot(r+1)+N(r-1)$ for odd $r$. The gadget can also be constructed by a clique-width expression using at most this number of labels, by handling each vertex as an individual label.
 \begin{figure}[htbp]
 \centerline{\includegraphics[width=100mm]{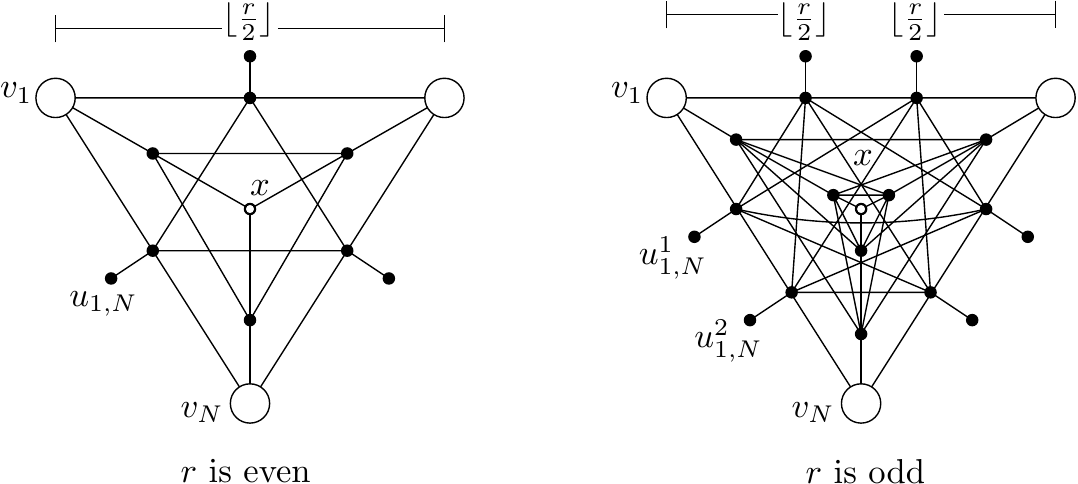}} \caption{A
 general picture of the clique gadget $\hat{X}_N$ for even and odd $r$.}
 \label{fig:clique_gadget} \end{figure}
 \begin{lemma}\label{clique_gadget_proof}
  Any minimum-sized center-set restricted to the vertices of $\hat{X}_N$, will select exactly one of the input vertices $v_i,i\in[1,N]$, to cover all other vertices in $\hat{X}_N$.
 \end{lemma}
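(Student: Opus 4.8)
The plan is to reduce the lemma to two self-contained facts about the gadget: (A) every input vertex $v_i$ lies within distance $r$ of every vertex of $\hat{X}_N$, and (B) no vertex of $\hat{X}_N$ other than an input vertex lies within distance $r$ of every vertex of $\hat{X}_N$ --- more precisely, any non-input $w$ fails to be within distance $r$ of $x$ or of at least one pendant vertex $u_{i,j}^{l}$. Granting these, the lemma is immediate: a center-set of $\hat{X}_N$ is nonempty so has size at least $1$; by (A) a center-set of size exactly $1$ exists, so a minimum one has size $1$; and by (B) its unique vertex is an input vertex. In the global construction this is applied to $K \cap V(\hat{X}_N)$ for an optimal center-set $K$, once the proof of the main theorem has established that the private interior of $\hat{X}_N$ cannot be covered from outside the gadget.

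For (A) I would extend to the pendants and the $x$-side the distance bookkeeping already recorded in the description of $\hat{X}_N$. The equality $d(v_i,v_j)=r$ takes care of all interior vertices of the paths incident to $v_i$; for a path between two other inputs one routes through the clique of middle vertices --- a middle vertex of a path at $v_i$ is at distance $\lfloor r/2 \rfloor$ from $v_i$ and adjacent to the middle vertex (or, for odd $r$, vertices) of that path --- which keeps every interior vertex, hence every attached pendant $u_{j,k}^{l}$, within distance $r$ of $v_i$. The vertex $x$, the interior vertices of the $x$--$v_i$ paths, and the second, disjoint middle-vertex clique are handled in exactly the same manner via $d(v_i,x)=r$.

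For (B) the structural point is that $\{v_1, \dots, v_N\}$ is the only vertex cut separating the \emph{paths-and-pendants} part of $\hat{X}_N$ (the $v_i$--$v_j$ paths, their clique of middle vertices, and the pendants) from the \emph{$x$-star} part (the vertex $x$, the $x$--$v_i$ paths, and their clique). Hence if $w$ lies in the first part, every walk from $w$ to $x$ passes through some input $v_l$, so $d(w,x) = \min_{l}\bigl( d(w,v_l) + d(v_l,x) \bigr) \ge 1 + r > r$ and $w$ misses $x$; and if $w$ lies in the second part, $w$ is close only to the inputs reachable via its own $x$-path and the $x$-clique, so it misses the pendant $u_{j,k}^{l}$ of any pair $\{j,k\}$ avoiding those inputs --- such a pair exists because $N$ is large enough. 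Since the only vertices in neither part are the $v_i$ themselves, (B) follows.

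The step I expect to demand the most attention is not a single hard argument but the bookkeeping: confirming that the inputs genuinely form the unique cut between the two halves of $\hat{X}_N$ (so that every cross-half distance incurs a full additive $+r$), tracking the even/odd parity of $r$ through all the distance estimates in (A) and (B), and checking the smallest admissible values of $r$ and $N$ by direct inspection.
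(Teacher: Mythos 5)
Your proposal is correct and follows essentially the same route as the paper's proof: it verifies that any single input $v_i$ covers the whole gadget via the two middle-vertex cliques, and rules out non-inputs by the same two-case split (vertices on the inter-input paths and their pendants miss $x$ at distance $\ge r+1$ through the input cut; vertices on the $x$-star miss a pendant $u_{j,k}^l$ for a pair $\{j,k\}$ avoiding the nearby input, at distance $\ge 2+2\lfloor r/2\rfloor > r$). Your explicit framing of $\{v_1,\dots,v_N\}$ as the separating cut is a slightly cleaner way to organize the same distance bookkeeping.
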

  \begin{proof}
  Assume the selection of any single vertex from $\hat{X}_N$ in any minimum-sized center-set: if the selected vertex is on one of the original paths, then vertex $x$ is not
  covered, as its distance is at least $r+1$ from any such vertex (via the
  closest $v_i$), while if the selected vertex, say $w$, is on one of the new
  paths between some $v_i$ and $x$, there will be at least one vertex
  $u_{j,k}^{l}$ on a path between two input vertices $v_j,v_k$ for $l\in[1,2]$
  that is not covered, as the distances from $w$ to $v_j,v_k$ are at least
  $1+\lfloor\frac{r}{2}\rfloor$ and the distances from there to the $u_{j,k}^{l}$
  are also $1+\lfloor\frac{r}{2}\rfloor$. On the other hand, if the selected vertex is one of the inputs $v_i$, then all other vertices are covered: the distance from $v_i$ to any other input $v_j$ or $x$ is exactly $r$, thus all vertices on these paths are covered (including vertices of the type $u_{i,j}^l$), while for vertices on paths not originating at $v_i$, the distance from $v_i$ to some middle vertex on a path adjacent to it is $\lfloor\frac{r}{2}\rfloor$ and the distance from there to any vertex on some other path (or even adjacent to a middle vertex) is $\le\lfloor\frac{r}{2}\rfloor$, giving an overall distance of $\le r$.
 \end{proof}
\subparagraph{Assignment gadget $\hat{U}_N$:} Once more, there are $N$ input
vertices $v_1,\dots,v_N$ for this gadget, while the purpose here is to ensure
that, assuming all input vertices have already been covered, any minimum-sized
center-set will select exactly $N-1$ of them to cover all other vertices in the
gadget.
 \subparagraph{Construction and size/cw bound for $\hat{U}_N$:}  We first connect all input vertices to each other by two
 distinct paths each containing $r$ new vertices, so that all distances between
 any pair of input vertices are exactly $r+1$. Let the vertices on these paths
 between $v_i,v_j$ be $u_{i,j}^{l}$ and $\hat{u}_{i,j}^{l}$ for $l\in[1,r]$.
 Then, for odd $r$, we also attach a path of  $\lfloor\frac{r}{2}\rfloor$
 vertices to the middle vertex of each path, that is, the vertices
 $u_{i,j}^{\lfloor\frac{r}{2}\rfloor+1}$ and
 $\hat{u}_{i,j}^{\lfloor\frac{r}{2}\rfloor+1}$ that are at distance
 $\lfloor\frac{r}{2}\rfloor+1$ from both endpoints $v_i,v_j$ of their paths. We
 call the vertices on these new paths $w_{i,j}^{m}$ and $\hat{w}_{i,j}^{m}$ for
 $m\in[1,\lfloor\frac{r}{2}\rfloor]$. For even $r$, we make a vertex
 $w_{i,j}^{r/2}$ (resp.\ $\hat{w}_{i,j}^{r/2}$) for each path and attach two
 paths of $r/2-1$ vertices to it, naming the vertices on these paths
 $w_{i,j}^{o,m}$ (resp.\ $\hat{w}_{i,j}^{o,m}$) for $o\in[1,2]$ and
 $m\in[1,r/2-1]$, finally attaching the other endpoint vertex $w_{i,j}^{1,1}$
 (resp.\ $\hat{w}_{i,j}^{1,1}$) to $u_{i,j}^{r/2}$ (resp.\
 $\hat{u}_{i,j}^{r/2}$) and also $w_{i,j}^{2,1}$ (resp.\ $\hat{w}_{i,j}^{2,1}$)
 to $u_{i,j}^{r/2+1}$ (resp.\ $\hat{u}_{i,j}^{r/2+1}$), being the vertices at
 distance $r/2$ from one of the two endpoints $v_i$ (and $r/2+1$ from the other
 $v_j$). Thus between any two inputs $v_i,v_j$, there are two vertices
 $w_{i,j}^{\lfloor\frac{r}{2}\rfloor},\hat{w}_{i,j}^{\lfloor\frac{r}{2}\rfloor}$
 at distance exactly $r$ from both. See Figure \ref{fig:assignment_gadget} for an illustration. The size of the gadget is $|\hat{U}_N|=N+2{N \choose 2}\cdot(r+\lfloor\frac{r}{2}\rfloor)$ for odd $r$ and $|\hat{U}_N|=N+2{N \choose 2}\cdot(2r-1)$ for even $r$, while the gadget can also be constructed by a clique-width expression using at most this number of labels, by handling each vertex as an individual label.
 \begin{figure}[htbp]
 \centerline{\includegraphics[width=130mm]{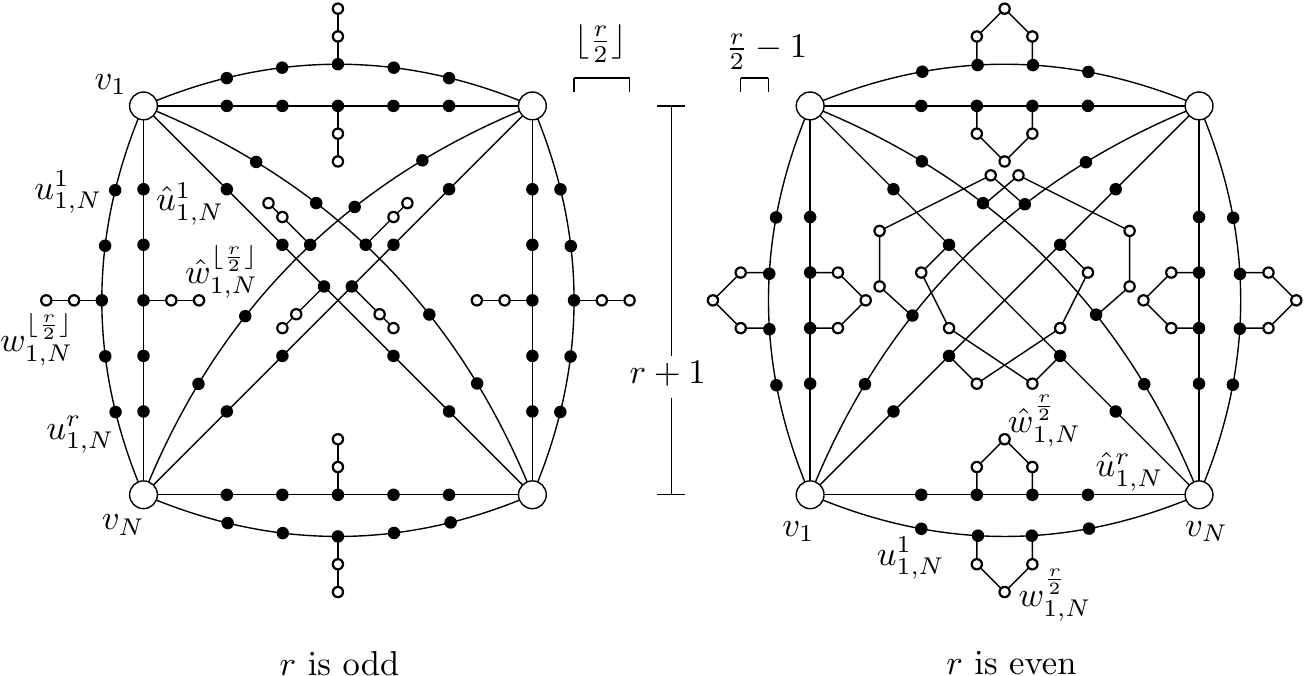}} \caption{A
 general picture of the assignment gadget $\hat{U}_N$ for odd and even $r$.}
 \label{fig:assignment_gadget} \end{figure}
\begin{lemma}\label{assignment_gadget_proof}
 Assuming all input vertices $v_1,\dots,v_N$ need not be covered by this selection, any minimum-sized
center-set restricted to the vertices of $\hat{U}_N$ will select exactly $N-1$ of the input vertices to cover all other vertices in $\hat{U}_N$.
\end{lemma}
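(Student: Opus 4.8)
The plan is to pin down exactly the cheapest way to $r$-dominate the set $V(\hat{U}_N)\setminus\{v_1,\dots,v_N\}$ using vertices of $\hat{U}_N$: I want to show that $N-1$ vertices suffice, that no set of fewer vertices works, and that every dominating set of size exactly $N-1$ consists of $N-1$ input vertices. The lemma then follows, since in any global solution the part lying inside $\hat{U}_N$ must $r$-dominate these vertices (they can only be reached from within, as argued below), so it has size at least $N-1$, and an exchange argument lets us assume it has size exactly $N-1$.

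For the upper bound I take $S=\{v_1,\dots,v_{N-1}\}$ and verify that every non-input vertex is at distance at most $r$ from some $v_i\in S$. Every such vertex lies on one of the two length-$(r+1)$ paths joining a pair $v_i,v_j$, or on a pendant structure attached to the middle of one of these paths. If $N\notin\{i,j\}$ then both endpoints are in $S$, and a direct computation (using that the middle of each path is at distance $\lfloor r/2\rfloor+1$ from the endpoints, and the pendant paths have length $\lfloor r/2\rfloor$) shows every vertex of the $\{i,j\}$-gadget is within $r$ of $v_i$ or $v_j$; if $j=N$ then only $v_i$ is available, and the construction was tuned precisely so that the farthest pendant vertices $w_{i,N}^{\lfloor r/2\rfloor},\hat{w}_{i,N}^{\lfloor r/2\rfloor}$ sit at distance exactly $r$ from $v_i$, so coverage still holds. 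I would carry out the odd-$r$ and even-$r$ subcases separately, reading off the distances from the gadget description.

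For the lower bound the crucial objects are, for each pair $\{i,j\}$, the two extremal pendant vertices $\alpha_{ij}:=w_{i,j}^{\lfloor r/2\rfloor}$ and $\beta_{ij}:=\hat{w}_{i,j}^{\lfloor r/2\rfloor}$, both at distance exactly $r$ from $v_i$ and from $v_j$. I will establish, purely from the gadget's geometry, that (i) the ball of radius $r$ around $\alpha_{ij}$ (and around $\beta_{ij}$) is contained in $\{v_i,v_j\}$ together with the ``$\{i,j\}$-block'' --- the interiors of the two $v_i$--$v_j$ paths with their pendants --- using that this block is joined to the rest of the graph only through $v_i,v_j$ and that every input--input distance is exactly $r+1$; and (ii) $d(\alpha_{ij},\beta_{ij})=2r$, with every shortest $\alpha_{ij}$--$\beta_{ij}$ path passing through $v_i$ or $v_j$, so any vertex within distance $r$ of both $\alpha_{ij}$ and $\beta_{ij}$ must itself be $v_i$ or $v_j$. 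In particular, a non-input vertex can $r$-dominate at most one of all the $\alpha$'s and $\beta$'s, whereas $v_i$ dominates exactly the $2(N-1)$ of them incident to $i$.

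The proof then finishes with a counting argument. Let $S\subseteq V(\hat{U}_N)$ $r$-dominate all non-input vertices, put $d=|S\cap\{v_1,\dots,v_N\}|$ and $k=N-d$. For each of the ${k\choose 2}$ pairs $\{i,j\}$ with $v_i,v_j\notin S$, facts (i)--(ii) force at least two vertices of $S\setminus\{v_1,\dots,v_N\}$ into the $\{i,j\}$-block; these blocks are pairwise disjoint, so $|S|\ge d+2{k\choose 2}=N+k(k-2)$. Over integers $k\ge 0$ this is minimized at $k=1$, giving $|S|\ge N-1$, with equality only when $k=1$ and $S$ contains no non-input vertex, i.e.\ $S$ consists of $N-1$ inputs; combined with the upper bound this proves the claim. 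I expect the main obstacle to be facts (i)--(ii): carefully determining which vertices can $r$-dominate the extremal pendant vertices, which requires bookkeeping of distances inside the gadget and a separate treatment of odd and even $r$.
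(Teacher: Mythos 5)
Your proposal is correct, but it proves the lemma by a genuinely different route than the paper. The paper argues by induction on $N$: the base case $N=2$ uses exactly your key observation (the two vertices $w_{1,2}^{\lfloor r/2\rfloor},\hat{w}_{1,2}^{\lfloor r/2\rfloor}$ are at distance $r$ from both inputs and $2r$ from each other, so no single internal vertex covers both), and the inductive step views $\hat{U}_N$ as $\hat{U}_{N-1}$ plus a new input $v_N$ with its paths, reducing to the two uncovered paths between $v_N$ and the one unselected input of $\hat{U}_{N-1}$. You instead make the same structural fact about the extremal pendant vertices $\alpha_{ij},\beta_{ij}$ carry a global counting argument: each non-input vertex covers at most one of all the $\alpha$'s and $\beta$'s, each pair of unselected inputs forces two non-input selections inside its (pairwise disjoint) block, and hence $|S|\ge N+k(k-2)$ where $k$ is the number of unselected inputs, minimized only at $k=1$ with no non-input selections. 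This buys you something: the paper's induction step tacitly assumes that the restriction of an optimal solution for $\hat{U}_N$ to $\hat{U}_{N-1}$ has the canonical form guaranteed by the inductive hypothesis, which takes some extra care to fully justify, whereas your bound is a single self-contained inequality that also immediately yields uniqueness of the optimal structure (exactly $N-1$ inputs, no internal vertices). It is also more robust when the gadget is embedded in the global construction, since $\alpha_{ij},\beta_{ij}$ are at distance at least $r+1$ from everything outside the gadget. The price is the more delicate bookkeeping in your facts (i)--(ii) and the odd/even case analysis, which you correctly identify as the main remaining work; those computations do check out against the gadget as defined (with the degenerate cases $r\in\{1,2\}$, where the pendant paths are empty and the extremal vertices coincide with or are adjacent to the path middles, handled the same way).
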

 \begin{proof}
  The claim on the gadget's function is shown by induction on $N$: for all $N\ge2$ any minimum-sized center-set will select exactly $N-1$ input vertices, while if any non-input vertex is selected there will be at least $N$ vertices required. The base case is $N=2$ and we have two input vertices $v_1,v_2$ and two paths between them, with selection of either $v_1$ or $v_2$ indeed covering all vertices on these paths. On the other hand, as on each path between $v_1,v_2$ there is a vertex $w_{1,2}^{\lfloor\frac{r}{2}\rfloor}$ or $\hat{w}_{1,2}^{\lfloor\frac{r}{2}\rfloor}$ at distance exactly $r$ from both $v_1,v_2$ (and the distance between them is $2r$), selection of any single vertex on these paths will cover all vertices on its path but not all vertices on the other path and thus at least two selections will be required. For the induction step, assuming the claim holds for $N-1$, we extend it for $N$: the gadget $\hat{U}_N$ is constructed by the gadget $\hat{U}_{N-1}$ by adding a new input vertex $v_N$ and two paths from it to every other input vertex $v_1,\dots,v_{N-1}$. Let $v_i$ with $i\in[1,N-1]$ be the vertex that is not selected from the original $N-1$ input vertices of the $\hat{U}_{N-1}$ gadget. Then in $\hat{U}_N$, all vertices that were already in $\hat{U}_{N-1}$ have been covered, as well as all vertices on the paths between vertices $v_j$ with $j\in[1,N-1],j\neq i$ and $v_{N}$. The only vertices that still need to be covered are the ones on the two paths between $v_i$ and $v_N$. Again, as on each path between $v_i,v_N$ there is a vertex $w_{i,N}^{\lfloor\frac{r}{2}\rfloor}$ or $\hat{w}_{i,N}^{\lfloor\frac{r}{2}\rfloor}$ at distance exactly $r$ from both $v_i,v_N$, any selection of a single vertex from inside these two paths will be insufficient to cover both $w_{i,N}^{\lfloor\frac{r}{2}\rfloor}$ and $\hat{w}_{i,N}^{\lfloor\frac{r}{2}\rfloor}$, while selecting either $v_i$ or $v_N$ indeed covers all vertices.
 \end{proof}
\subparagraph{Block gadget $\hat{G}$:} This gadget is the main building block of our construction and uses the above gadgets as inner components. We first make $p$ pairs of paths $A_1, B_1,\dots,A_p,B_p$ consisting of $2r+1$ vertices each, named $a_i^0,\dots,a_i^{2r}$ and $b_i^0,\dots,b_i^{2r}$ for every pair $A_i,B_i$ with $i\in[1,p]$. We then make two copies of the guard gadget $\hat{T}_N$ for $A_i$, where the $N=2r+1$ inputs are the vertices $a_0,\dots,a_{2r}$ and repeat the same for $B_i$. 
Figure \ref{fig:path_guards} provides an illustration. 
We refer to all vertices in these gadgets $\hat{T}_{2r+1}$ as the \emph{guards}. \\
 \begin{figure}[htbp]
  \centerline{\includegraphics[width=56mm]{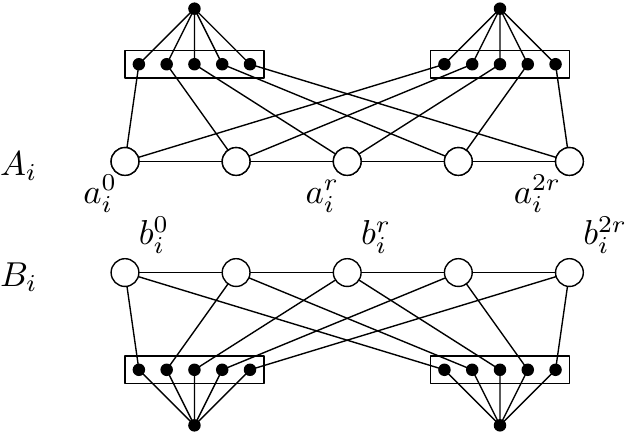}} \caption{The paths
 $A_i,B_i$ along with their attached guard gadgets for $r=2$. Note the boxes
 indicating cliques (for even $r$).} \label{fig:path_guards} \end{figure}
\subparagraph{Canonical pairs:} We next define $3r+1$ \emph{canonical pairs} of numbers $(\alpha_y\in[0,2r],\beta_y\in[0,2r])$, indexed and ordered by $y\in[1,3r+1]$: for $1\le y\le 2r$, the pair is given by $(\alpha_y=\lfloor\frac{y}{2}\rfloor,\beta_y=\lfloor\frac{y}{2}\rfloor)$ if $y$ is odd, while if $y$ is even the pair is given by $(\alpha_y=y/2-1,\beta_y=2r-y/2+1)$. For $2r+1\le y\le 3r+1$ the pair is given by $(\alpha_y=y-r-1,\beta_y=y-r-1)$. As an example, the pairs in the correct order for $r=2$ are (0,0),(0,4),(1,1),(1,3),(2,2),(3,3),(4,4).

In our construction, these pairs will correspond to the indices of the vertices of paths $A_i$ ($\alpha_y$) and $B_i$ ($\beta_y$) that a (canonical) minimum-sized center-set can select. As already mentioned, in the final construction there will be a number of consecutive such pairs of paths $A_i^j,B_i^j$ connected in a path-like manner. We first provide a substitution lemma showing that any minimum-sized center-set does not need to make a selection that does not correspond to one of the canonical pairs followed by a lemma showing that any center-set only making selections based on canonical pairs will have to respect the ordering of the pairs it uses, i.e.\ if some pair with index $y$ is used for selection from paths $A_i^j,B_i^j$, then any pair used in the following paths $A_i^{j+1},B_i^{j+1}$ must be of index $y'\le y$. 
This will form the basis of a crucial argument for showing that any solution will eventually settle into a specific pattern that encodes an assignment without alternations.\footnote{In fact, the subsequent proof of the converse direction (from $(k,r)$-center to assignment, Lemma \ref{cw_SETH_LB_BWD_lem}) does not require the substitution of Lemma \ref{canonical_pair_lemma} given here, as the structure of the graph itself can enforce all choices made to conform to the above canonical pairs. We offer the substitution lemma as further confirmation of the correctness of our approach. }
\begin{lemma}\label{canonical_pair_lemma}
 In a series of $M$ pairs of paths $A_i^j,B_i^j,j\in[1,M]$, where the last vertices of $A_i^j,B_i^j$ are joined with the first vertices of $A_i^{j+1},B_i^{j+1},j\in[1,M-1]$, any center-set $K$ of size $|K|=2m$ that contains one vertex from each path $A_i^j,B_i^j$ can be substituted by a center-set $K'$ of size $|K'|=|K|=2m$, where the indices of each pair of selected vertices is canonical.
\end{lemma}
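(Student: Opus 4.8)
The plan is to strip away the gadgets, reduce to a purely combinatorial statement about the sequence of selected indices, and then normalise that sequence by a one‑directional rewriting.

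\emph{Reducing to the index sequence.} By hypothesis $K$ already contains exactly one vertex from each of the $2M$ paths; write $\alpha^j,\beta^j\in[0,2r]$ for the indices of the vertices chosen on $A_i^j,B_i^j$. By Lemma~\ref{guard_gadget_proof}, picking any vertex of a path puts all of both guard gadgets attached to that path within distance $r$, so the guards impose no further condition and it suffices to keep all of $a^j_0,\dots,a^j_{2r},b^j_0,\dots,b^j_{2r}$ within distance $r$ of $K$. I would therefore first compute, for each block $j$, which of its own path‑vertices are covered by $\{a^j_{\alpha^j},b^j_{\beta^j}\}$ — essentially the windows $[\alpha^j-r,\alpha^j+r]$ and $[\beta^j-r,\beta^j+r]$ — and which ``boundary'' path‑vertices must instead be reached from block $j-1$ or block $j+1$ across the cross‑connected junctions. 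This yields, for every consecutive pair of blocks, a compact system of inequalities: a block ``owes'' each neighbour the prefix/suffix of indices it fails to cover itself and ``pays'' each neighbour the prefix/suffix it can reach over the junction, the latter governed by how far its chosen vertices sit from the junction.

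\emph{Rewriting to canonical pairs.} I would then sweep $j=1,\dots,M$ and replace each $(\alpha^j,\beta^j)$ by the canonical pair $(\alpha_{y^j},\beta_{y^j})$ of largest index $y^j\le y^{j-1}$ that still pays block $j-1$ everything it owes (using the already‑fixed canonical pair of block $j-1$) while paying block $j+1$ at least as much as the original pair did. The heart of the argument is to show that such a $y^j$ always exists, i.e.\ that the $3r+1$ canonical pairs — the balanced pairs $(k,k)$ for every $k\in[0,2r]$ and the complementary pairs $(k,2r-k)$ for $k\le r-1$ — are precisely the ``extreme'' selections: every behaviour a valid block can have is represented by a canonical pair, and replacing a pair by its representative never increases what it owes a neighbour nor decreases what it pays one. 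I expect this to need separate (but parallel) treatment for even and odd $r$, since the internal distances of the gadgets, and hence the exact values $\alpha_y,\beta_y$, depend on the parity of $r$.

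\emph{Conclusion.} Once the rewriting is in place, validity of $K'$ follows mechanically: own‑block vertices stay covered by the window condition, boundary vertices stay covered because the sweep only moved payments up and debts down, and the guards are covered since $K'$ again contains one vertex per path. Since the rewriting changes the identity but not the number of chosen vertices in each path and leaves $K\setminus\bigcup_j(A_i^j\cup B_i^j)$ untouched, $|K'|=|K|=2m$.

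\emph{Main obstacle.} The difficulty is entirely in the rewriting step: the cross edges at the junctions make a chosen vertex spill into \emph{both} paths of the neighbouring block, so pairs cannot be rounded independently — rounding block $j$ changes the debts and payments of blocks $j\pm1$. Making the left‑to‑right sweep go through therefore amounts to proving that the family of $3r+1$ canonical pairs is closed under exactly the rounding operations that the coverage inequalities force, and pinning down, for every $(\alpha,\beta)$ and every parity of $r$, which canonical pair is the correct target; this is the precise reason the ordered list of canonical pairs is defined the way it is. (As the footnote observes, this lemma is not used in the converse direction of the reduction, so it ultimately serves as a consistency check on that choice.)
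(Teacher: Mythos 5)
Your setup is fine, but the proposal stops exactly where the proof has to happen, and the mechanism you sketch is not the one that works. You frame the substitution as a left-to-right sweep maintaining an ``owes/pays'' invariant, choosing at each block the largest canonical index $y^j\le y^{j-1}$ that still covers the debts of block $j-1$ and pays block $j+1$ at least as much as before; you then name the existence of such a $y^j$ as the main obstacle and leave it unresolved. That existence claim is precisely the content of the lemma, and it is not obvious that a canonical pair satisfying \emph{both} constraints simultaneously always exists (nor that the monotonicity $y^j\le y^{j-1}$ can be enforced during the substitution at all --- that ordering is the subject of the separate Lemma~\ref{order_pair_lemma}, which applies only \emph{after} all selections are canonical; building it into the substitution entangles the two lemmas). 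The paper's proof avoids the sweep entirely: after a symmetry reduction (the cross edges at each junction make the selection $(a^l,b^o)$ equivalent to $(a^o,b^l)$, so one may assume $l\le o$), it partitions \emph{all} pairs $(l,o)$ with $l\le o$ into $3r+1$ equivalence classes, one per canonical pair --- e.g.\ for odd $y\in[1,2r]$ the class of $(\lfloor y/2\rfloor,\lfloor y/2\rfloor)$ consists of all $(\lfloor y/2\rfloor,\beta)$ with $\lfloor y/2\rfloor\le\beta\le 2r-\lceil y/2\rceil$ --- and shows by examining the extremal members that swapping any member of a class for its canonical representative changes the set of covered vertices not at all. Substitution is then block-by-block and free of any accounting, because coverage is preserved exactly, not merely ``not decreased.''

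Two smaller points. First, your prediction that the values $\alpha_y,\beta_y$ depend on the parity of $r$ is wrong: the canonical pairs are defined uniformly (the balanced pairs $(k,k)$, $k\in[0,2r]$, and the anti-diagonal pairs $(k,2r-k)$, $k\in[0,r-1]$, which you do enumerate correctly); parity of $r$ only enters the internal gadgets, not the path indices. Second, your worry that ``pairs cannot be rounded independently'' is exactly the issue the equivalence-class formulation dissolves --- the interchangeability is proved with respect to the neighbouring blocks' selections (the paper's extremal-case analysis compares a canonical choice in block $j$ followed by the extreme non-canonical member of the same class in block $j+1$), so once the classes are in hand the substitutions commute. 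Without identifying that the classes have \emph{identical} coverage, the sweep you propose cannot be closed.
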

 \begin{proof}
  First, observe that due to the connection of both final vertices from each pair of paths $A_i^j,B_i^j$ with both first vertices of the following pair $A_i^{j+1},B_i^{j+1}$, any selection $a^l\in A_i^j$ and $b^o\in B_i^j$ is equivalent in terms of the vertices it covers with the opposite selection $a^o\in A_i^j$ and $b^l\in B_i^j$, i.e.\ the same vertices from preceding/succeeding pairs of paths $A_i^{j-1},B_i^{j-1}$/$A_i^{j+1},B_i^{j+1}$ are covered by both selections, while within $A_i^j,B_i^j$ the vertices covered are the opposite. Due to this symmetry, any center-set $K$ in which the index $l\in[0,2r]$ of the selected vertex $a^l\in A_i^j$ is larger than the index $o\in[0,2r]$ of the selected vertex $b^o\in B_i^j$, for any $j\in[1,M]$, can be substituted by a center-set $K'$ of the same size in which $l\le o$ (by replacing one with the other), without affecting the outcome. 

  Given such a center-set, we claim that any pair of selections $a_j^l\in A_i^j,b_j^o\in B_i^j$ can in fact be substituted by a canonical pair, by showing that \emph{all} pairs can be partitioned in equivalence classes, each being represented by a canonical pair: the center-set can alternate between any of the pairs within each class without any change in vertex coverage and we can thus replace any pair of selections by the canonical pair representing the class to which it belongs.

  Consider the canonical pairs for $y\in[1,2r]$ and odd: the pair is given by $(\alpha_y=\lfloor\frac{y}{2}\rfloor,\beta_y=\lfloor\frac{y}{2}\rfloor)$ and we define the corresponding class of non-canonical pairs to include all pairs where $(\alpha_y=\lfloor\frac{y}{2}\rfloor,\lfloor\frac{y}{2}\rfloor\le\beta_y\le2r-\lceil\frac{y}{2}\rceil)$, i.e.\ any pair where $\alpha_y$ is the same as the canonical representative, yet $\beta_y$ can now range from the same value $\lfloor\frac{y}{2}\rfloor$ up to $2r-\lceil\frac{y}{2}\rceil$. To see why any selections within the class are interchangeable, consider the two extreme cases: let $a_j^{\lfloor\frac{y}{2}\rfloor}\in A_i^j, b_j^{\lfloor\frac{y}{2}\rfloor}\in B_i^j$ be a selection followed by $a_{j+1}^{\lfloor\frac{y}{2}\rfloor}\in A_i^{j+1},b_{j+1}^{2r-\lceil\frac{y}{2}\rceil}$ in the subsequent pair of paths. Both selections $a_j^{\lfloor\frac{y}{2}\rfloor},b_j^{\lfloor\frac{y}{2}\rfloor}$ will be at distance $0<d=r-\lfloor\frac{y}{2}\rfloor\le r$ from the middle vertices $a_j^r,b_j^r$ on their paths $A_i^j,B_i^j$ and all vertices $a_j^{\lfloor\frac{y}{2}\rfloor+1},b_j^{\lfloor\frac{y}{2}\rfloor+1},\dots,a_j^{2r-d},b_j^{2r-d}$ from the same paths will be covered by these selections. As the selection from $A_i^{j}$ always matches the one from $A_i^{j+1}$, the remaining vertices on these paths will be covered by the subsequent selection $a_{j+1}^{\lfloor\frac{y}{2}\rfloor}\in A_i^{j+1}$, as well as all vertices on path $B_i^{j+1}$ up to position $r-\lfloor\frac{y}{2}\rfloor-2$, meaning the selection from this path can be up to distance $r+1$ from this ``last'' covered vertex, giving the index of the furthest possible choice from $B_i^{j+1}$ as $2r-\lfloor\frac{y}{2}\rfloor-1=2r-\lceil\frac{y}{2}\rceil$, being exactly the extremal case for this class. Observe also that this selection will not cover more vertices of the subsequent paths $A_i^{j+2},B_i^{j+2}$ as it can reach up to vertices at position $r-\lfloor\frac{y}{2}\rfloor-2$ in both these paths, which are exactly already covered by the selection of $a_{j+2}^{\lfloor\frac{y}{2}\rfloor}\in A_i^{j+2}$, meaning any other intermediate selections would indeed produce the same result as well.

  Next, consider the canonical pairs for $y\in[2r+2,3r+1]$: the pair is given by $(\alpha_y=y-r-1,\beta_y=y-r-1)$ and we define the corresponding class of non-canonical pairs to include all pairs where $(3r+2-y\le\alpha_y\le y-r-1,\beta_y=y-r-1)$, i.e.\ any pair where $\beta_y$ is the same as the canonical representative, yet now $\alpha_y$ can range from the same value $y-r-1$ down to $3r+2-y$. To see why any selections within the class are interchangeable, consider the two extreme cases, as before: let $a_j^{y-r-1}\in A_i^j,b_j^{y-r-1}\in B_i^j$ be a selection followed by $a_{j+1}^{3r+2-y}\in A_i^{j+1}, b_{j+1}^{y-r-1}\in B_i^{j+1}$ in the subsequent pair of paths. Both selections $a_j^{y-r-1},b_j^{y-r-1}$ will be at distance $0\le d=3r+1-y\le r$ from the final vertices $a_j^{2r},b_j^{2r}$ on their paths $A_i^j,B_i^j$ and all vertices $a_{j+1}^0,b_{j+1}^0,\dots,a_{j+1}^{r-d-1},b_{j+1}^{r-d-1}$ from the following paths $A_i^{j+1},B_i^{j+1}$ will be covered by these selections. As the selection from $B_i^j$ always matches the one from $B_i^{j+1}$, all the remaining vertices of $B_i^{j+1}$ are covered, as well as vertices $a_{j+1}^{d'},\dots,a_{j+1}^{2r}$ from $A_i^{j+1}$, where $d'=(y-r-1)-r+2d+2=4r+3-y$ is the maximum distance in $A_i^{j+1}$ that selection $b_{j+1}^{y-r-1}$ can reach, meaning the selection from this path $A_i^{j+1}$ can be up to distance $r+1$ from this ``first'' covered vertex ($a_{j+1}^{4r+3-y}$), giving the index of the nearest possible choice from $A_i^{j+1}$ as $4r+3-y-(r+1)=3r+2-y$, being exactly the extremal case for this class.

  The final observation required for the claim to be shown is that indeed all possible pairs $(l,o)\in[0,2r]^2$, where $l\le o$, either exactly match some canonical pair, or are contained in one of the classes given above. As any opposite selections are symmetrical and within each class the actual selections are interchangeable, any arbitrary center-set $K$ can be substituted by a center-set $K'$ of the same size, where all indices of each pair of selections is canonical.
  \end{proof}
\begin{lemma}\label{order_pair_lemma}
 In a series of $M$ pairs of paths $A_i^j,B_i^j,j\in[1,M]$, where the last vertices of $A_i^j,B_i^j$ are joined with the first vertices of $A_i^{j+1},B_i^{j+1}$ for $j\in[1,M-1]$, in any center-set $K$ that only selects vertices whose indices correspond to canonical pairs, the index $y$ of any canonical pair selected in some pair of paths $A_i^j,B_i^j$ must be larger than, or equal to the index $y'$ of any canonical pair selected in its following pair of paths $A_i^{j+1},B_i^{j+1}$. 
\end{lemma}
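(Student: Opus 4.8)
The plan is to show that already the requirement that $K$ cover every vertex of the two blocks $A_i^j,B_i^j$ and $A_i^{j+1},B_i^{j+1}$ pins down $y'\le y$; no block other than $j$ and $j+1$ is close enough to the vertices on which the argument hinges, and the guard gadgets only force $K$ to pick exactly one vertex per path (and, by Lemma~\ref{guard_gadget_proof}, never create a walk of length $\le r$ between two distinct path-vertices). The first step is to describe the coverage of a canonical selection. A selected vertex $a_j^{\alpha}$ covers $a_j^{\alpha-r},\dots,a_j^{\alpha+r}$ on its own path and, through the junction edges, reaches a prefix of both paths of block $j+1$ and a suffix of both paths of block $j-1$, of lengths controlled by $\alpha$; adding the symmetric contribution of $b_j^{\beta}$, one finds that the reach of a canonical pair $(\alpha,\beta)$ into block $j+1$ is governed by $\max(\alpha,\beta)$ (it covers $a_{j+1}^{0},\dots,a_{j+1}^{\max(\alpha,\beta)-r-1}$ and likewise on $B_i^{j+1}$, the list being empty when $\max(\alpha,\beta)\le r$) and its reach into block $j-1$ by $\min(\alpha,\beta)$. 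The delicate point, which I would check by a short finite computation, is to account for the ``detour'' walks that the single-join block connections create (such as $b_j^{2r}-a_{j+1}^{0}-a_j^{2r}-\cdots$) and to confirm that, for canonical pairs, no such walk of length $\le r$ reaches the critical vertices below.

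From the coverage profiles I extract two inequalities; write $(\alpha,\beta)=(\alpha_y,\beta_y)$ and $(\alpha',\beta')=(\alpha_{y'},\beta_{y'})$. (A) If some coordinate of $(\alpha,\beta)$ is $<r$, then the matching path of $A_i^j,B_i^j$ has a vertex near its right end ($a_j^{\alpha+r+1}$, resp.\ $b_j^{\beta+r+1}$) at distance $>r$ from every selection except possibly one in block $j+1$, so covering it forces $\min(\alpha',\beta')$ to be at most that coordinate. (B) Symmetrically, if some coordinate of $(\alpha',\beta')$ is $>r$, then the matching path of $A_i^{j+1},B_i^{j+1}$ has a vertex near its left end that only a selection in block $j$ can cover within distance $r$, forcing $\max(\alpha,\beta)$ to be at least that coordinate.

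Now assume $y'>y$ for contradiction and split on the regime of $y$, recalling that for odd $y\le 2r$ both coordinates of $(\alpha_y,\beta_y)$ equal $\lfloor y/2\rfloor<r$; for even $y\le 2r$, $\alpha_y=y/2-1<r$ and $\beta_y=2r-y/2+1>r$; and for $y\ge 2r+1$ both coordinates equal $y-r-1\ge r$. If $y$ is odd, (A) gives $\min(\alpha',\beta')\le\lfloor y/2\rfloor$, which forces $y'\le y+1$, with equality possible only if $y'=y+1$ is even; but then $\beta_{y'}=2r-\lfloor y/2\rfloor>r$, so (B) would need $\max(\alpha_y,\beta_y)=\lfloor y/2\rfloor\ge 2r-\lfloor y/2\rfloor$, impossible since $\lfloor y/2\rfloor\le r-1$. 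If $y$ is even, then every canonical pair of index larger than $y$ has both coordinates at least $y/2$, so (A) with $\alpha_y=y/2-1$ already excludes $y'>y$. If $y\ge 2r+1$, then any $y'>y$ is $\ge 2r+2$, so both coordinates of $(\alpha_{y'},\beta_{y'})$ equal $y'-r-1>r$, and (B) would need $y-r-1=\max(\alpha_y,\beta_y)\ge y'-r-1$, again impossible. This exhausts the cases.

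The step I expect to be the main obstacle is the first one: making the distance bookkeeping airtight, i.e.\ verifying that the extra adjacencies contributed by the guard gadgets and by the single-join connections between consecutive blocks never produce an unexpected short route to the ``hard-to-cover'' vertices on which (A) and (B) rest. Once the coverage profile of each canonical selection is fixed, inequalities (A)--(B) and the three-case check are routine finite verifications.
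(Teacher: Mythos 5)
Your proposal is correct and follows essentially the same route as the paper's proof: both arguments hinge on the same critical vertices (a vertex $a_j^{\alpha_y+r+1}$ near the right end of block $j$ when a coordinate is below $r$, and a vertex such as $b_{j+1}^{\beta_{y'}-r-1}=b_{j+1}^{r-y'/2}$ near the left end of block $j+1$ when a coordinate of the next pair exceeds $r$) together with a three-way case split on the regime of the canonical index. Your packaging of the coverage bookkeeping into the two transfer inequalities (A) and (B) is just a cleaner reorganization of the paper's direct case analysis, and the detour verification you flag does go through (the guard gadgets keep all path vertices at mutual distance exactly $r+1$ and every junction detour has length at least $r+1$).
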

 \begin{proof}
  Consider two consecutive pairs of paths $A_i^j,B_i^j$ and $A_i^{j+1},B_i^{j+1}$ and let $a_j^{\alpha_y},b_j^{\beta_y}$ and $a_{j+1}^{\alpha_{y'}},b_{j+1}^{\beta_{y'}}$ be the selections from $A_i^j,B_i^j$ and $A_i^{j+1},B_i^{j+1}$, respectively, with $y,y'\in[1,3r+1]$.

  First, for any pair $(\alpha_y=y-r-1,\beta_y=y-r-1)$, where $y\in[2r+1,3r+1]$, both selections $a_j^{\alpha_y},b_j^{\beta_y}$ will be at distance $0\le d=3r+1-y\le r$ from the final vertices $a_j^{2r},b_j^{2r}$ on their paths $A_i^j,B_i^j$ and all vertices $a_{j+1}^0,b_{j+1}^0,\dots,a_{j+1}^{r-d-1},b_{j+1}^{r-d-1}$ from the following paths $A_i^{j+1},B_i^{j+1}$ will be covered by these selections. Thus for both paths $A_i^{j+1},B_i^{j+1}$, the corresponding distance $d'$ of the selections $a_{j+1}^{\alpha_{y'}},b_{j+1}^{\beta_{y'}}$ from the final vertices $a_{j+1}^{2r},b_{j+1}^{2r}$ will be the same for both and at least equal to $d$, which in turn implies both $\alpha_{y'}\le\alpha_y$ and $\beta_{y'}\le\beta_y$, that gives $y'\le y$ (note that $y'$ can be within $[1,2r]$ as long as both inequalities hold).

  Next, for any pair $(\alpha_y=\lfloor\frac{y}{2}\rfloor,\beta_y=\lfloor\frac{y}{2}\rfloor)$, with $y\in[1,2r]$ and odd, both selections $a_j^{\alpha_y},b_j^{\beta_y}$ will be at distance $0< d=r-\lfloor\frac{y}{2}\rfloor\le r$ from the middle vertices $a_j^{r},b_j^{r}$ on their paths $A_i^j,B_i^j$ and all vertices $a_j^{\alpha_y+1},b_j^{\beta_y+1},\dots,a_j^{2r-d},b_j^{2r-d}$ from the same paths will be covered by these selections. Thus in at least one of the following paths $A_i^{j+1},B_i^{j+1}$ there must be some selection $a_{j+1}^{\alpha_{y'}}$ or $b_{j+1}^{\beta_{y'}}$ at distance $d'$, that is at least equal to $d$, from either $a_{j+1}^{r}$ or $b_{j+1}^{r}$. As for all pairs $(\alpha_y,\beta_y)$ it is always $\beta_y\ge\alpha_y$, if this selection is from $B_i^{j+1}$, then it is $\beta_{y'}\le\beta_y$ and $\alpha_{y'}\le\alpha_y$ which gives $y'\le y$, while if this selection is from $A_i^{j+1}$, we have $\alpha_{y'}\le\alpha_y$, which means either also $\beta_{y'}\le\beta_y$ and thus $y'\le y$, or $\beta_{y'}=2r-y'/2+1$ and $y'=y+1$. In this case, observe that $b_{j+1}^{\beta_{y'}}$ covers all vertices $b_{j+1}^{r-y'/2+1},\dots,b_{j+1}^{2r-y'/2}$, while $a_{j+1}^{\alpha_{y'}}$ can cover all vertices from $b_{j+1}^{0}$ (for $d'>1$) up to $b_{j+1}^{r-y'/2-1}$ from $B_i^{j+1}$, thus leaving vertex $b_{j+1}^{r-y'/2}$ at distance $>r$ from any selected vertex.
 
  Finally, for even $y\in[1,2r]$ and any pair $(\alpha_y=y/2-1,\beta_y=2r-y/2+1)$, selected vertex $a_j^{\alpha_y}$ covers all vertices $a_j^{\alpha_y+1},\dots,a_j^{\alpha_y+r}$, while selected vertex $b_j^{\beta_y}$ can only cover all vertices from $a_j^{\alpha_y+r+2}$ to $a_j^{2r}$ (for $y<2r$), thus requiring at least one selection from the following pair of paths $A_i^{j+1},B_i^{j+1}$, at distance at most $y/2-1$ from the first vertex on its path $a_{j+1}^0$ or $b_{j+1}^0$. In either case, we have $\alpha_{y'}\le\alpha_y$ and thus also $y'\le y$.
 \end{proof}
Now we can continue the description of the block gadget~$\hat{G}$. For each pair of paths $A_i,B_i$ with $i\in[1,p]$ we make $3r+1$ vertices~$u_i^y$, for $y\in[1,3r+1]$, that we connect to vertices~$a_i^{\alpha_y}$ and~$b_i^{\beta_y}$ by paths of length $r+1$, i.e.\ each vertex~$u_i^y$ is at distance $r+1$ from the vertices in $A_i,B_i$, whose indices match the numbers in the pair corresponding to its own index~$y$ (via one path for one such vertex in~$A_i$ or~$B_i$). Let the~$r$ intermediate vertices on the path from each~$u_i^y$ to some vertex in~$A_i$ be called $v_i^{y,1}\dots,v_i^{y,r}$ and the~$r$ intermediate vertices on the other path to some vertex in $B_i$ be called $v_i^{y,r+1}\dots,v_i^{y,2r}$.

Next, we add another vertex $q_i$ that we attach to all vertices $u_i^y$ by paths of length $r-1$ (making the distances between them and $q_i$ equal to $r$) and then we also attach $3r+1$ paths of length $r$ to $q_i$, naming the $(2r-1)(3r+1)$ vertices on these paths $q_i^{1},\dots,q_i^{(2r-1)(3r+1)}$. Let $U_i$ be the set of all $u_i^y$ vertices for all $y\in[1,3r+1]$ and $U$ be the union of all $U_i$ for $i\in[1,p]$. We then make use of the assignment gadget $\hat{U}_N$ described above by making a copy of $\hat{U}_N$ for each $U_i$, where the $N=3r+1$ inputs are identified with the vertices $u_i^y$.

Then, for every set $S\subset U$ that contains exactly one vertex from each~$U_i$ (the number of such sets being $(3r+1)^p$) we make a vertex~$x_S$. Here we make use of the clique gadget~$\hat{X}_N$ described above, where these~$x_S$ vertices act as inputs and $N=(3r+1)^p$. Let~$X$ be the set containing all vertices in the gadget, including all~$x_S$ vertices. Then, for every vertex~$x_S$ we make a copy of the guard gadget~$\hat{T}_N$, where the $N=p+1$ inputs are~$x_S$ and the vertices in~$U$ for which $u_i^y\in S$ (one from each~$U_i$). %See Figure \ref{fig:block_gadget} in Appendix \ref{append_cw_LB} for an illustration. 
This concludes the construction of the block gadget~$\hat{G}$. 
 \begin{figure}[htbp]
  \centerline{\includegraphics[width=110mm]{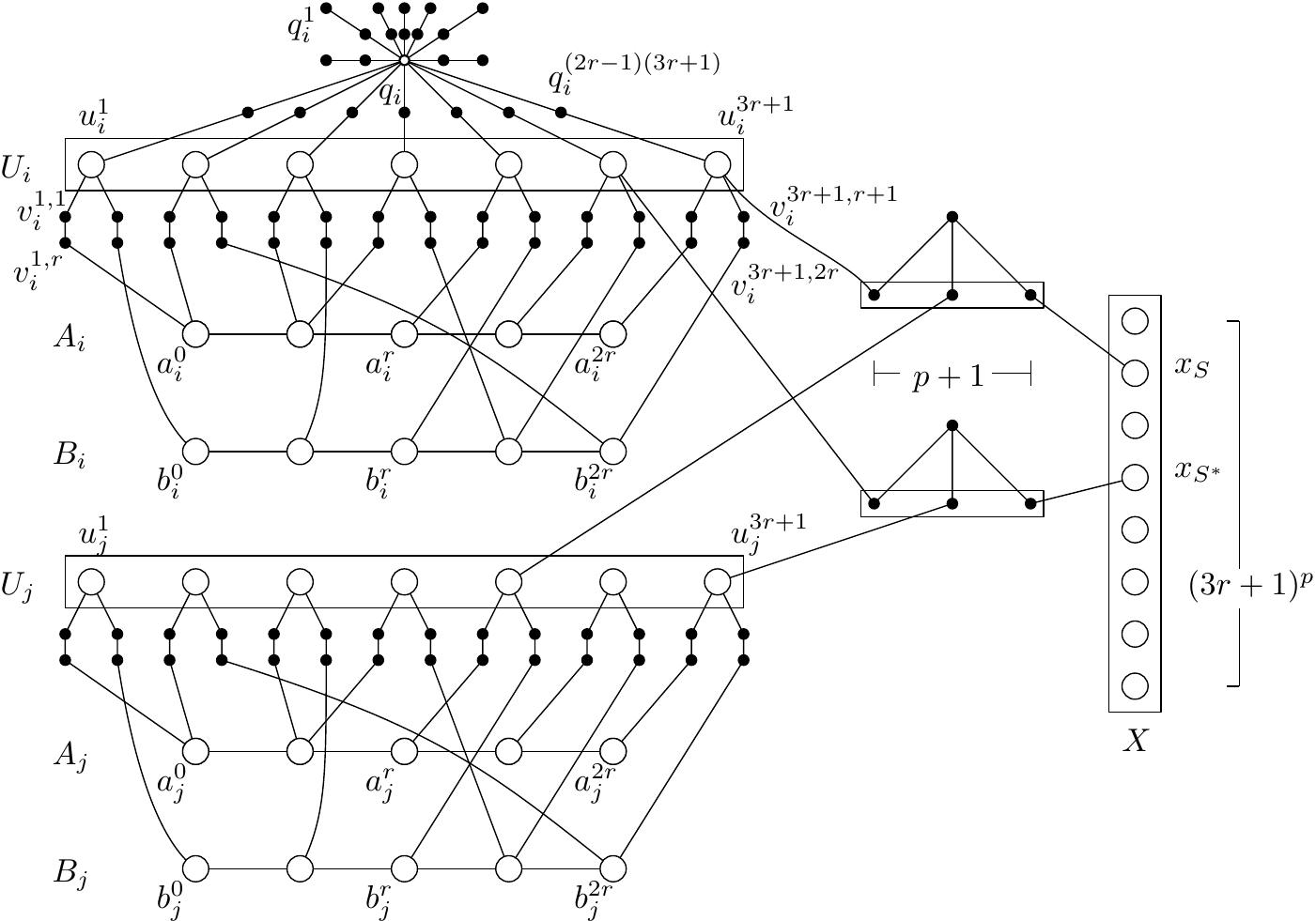}}
  \caption{Inside a block gadget $\hat{G}$: two pairs of paths $A_i,B_i,A_j,B_j$ with attached $U_i,U_j$, clique gadget $X$ and guard gadgets between $U$ and $X$, while guard vertices and all $q_j$ are omitted for clarity. Note boxes around vertices in $U_i,U_j,X$ indicate a gadget, while boxes around the $p+1$ vertices in the guard gadgets indicate cliques (for even $r$).}
  \label{fig:block_gadget}
 \end{figure}
 \begin{figure}[htbp]
 \centerline{\includegraphics[width=100mm]{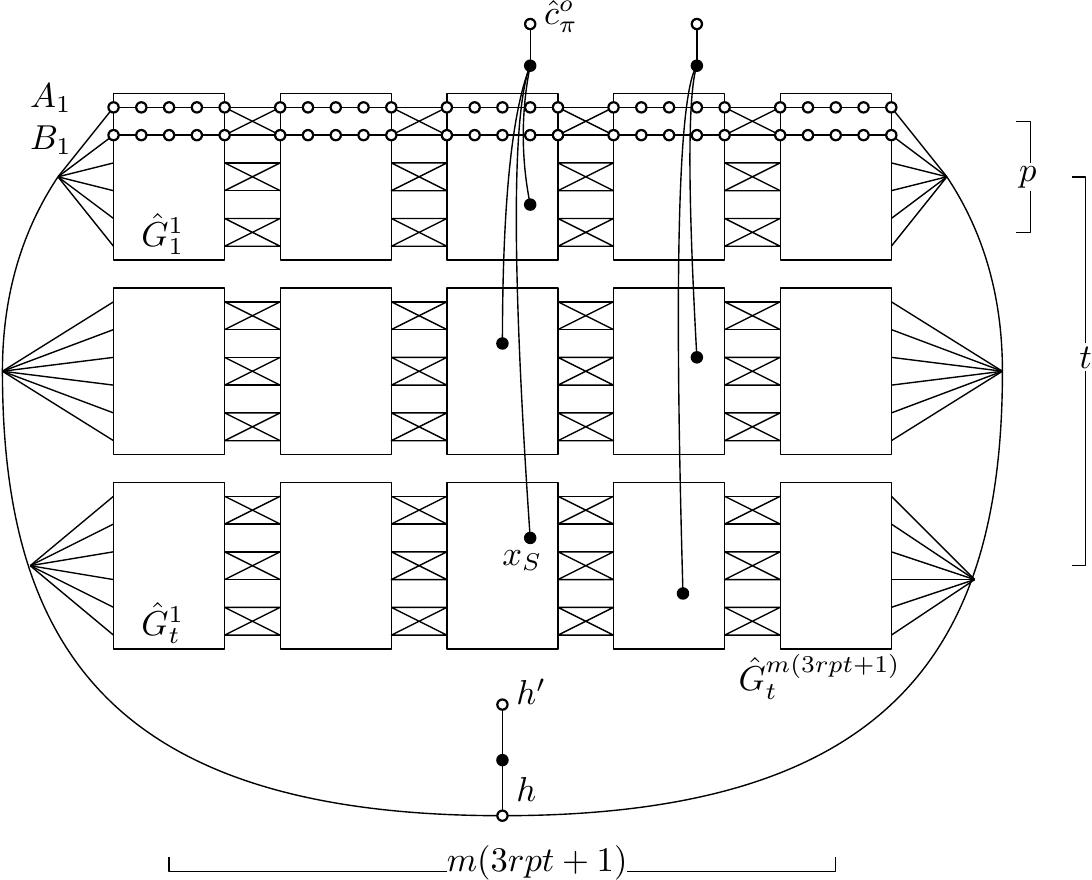}}
 \caption{A simplified picture of the complete construction. Note boxes indicate block gadgets $\hat{G}$, while there is no vertex anywhere between $h$ and the first/last vertices of the long paths.}
 \label{fig:global_construction}
\end{figure}
\subparagraph{Global construction:} Graph $G$ is then constructed as follows. For every group $F_{\tau}$ of variables of $\phi$ with $\tau\in[1,t]$, we make $m(3r p t+1)$ copies of the gadget $\hat{G}$ that we call $\hat{G}_{\tau}^{\mu}$ for $1\le \mu\le m(3r p t+1)$. Then, for each $\tau\in[1,t]$ we connect gadgets $\hat{G}_{\tau}^1,\hat{G}_{\tau}^2,\dots,\hat{G}_{\tau}^{m(3r p t+1)}$ in a path-like manner: for every $1\le\mu<m(3r p t+1)$ and $1\le i\le p$, we connect both vertices $a_i^{2r}$, $b_i^{2r}$ in $\hat{G}_{\tau}^{\mu}$ to both vertices $a_i^0$, $b_i^0$ in $\hat{G}_{\tau}^{\mu+1}$. We then make another vertex $h$ that we connect to all vertices $a_i^1,b_i^1$ in $\hat{G}_{\tau}^1$ for all $i\in[1,p]$ and $\tau\in[1,t]$, as well as to all vertices $a_i^{2r},b_i^{2r}$ in $\hat{G}_{\tau}^{m(3r p t+1)}$ for all $i\in[1,p]$ and $\tau\in[1,t]$, i.e.\ to all the first and last vertices on the long paths created upon connecting gadgets $\hat{G}_{\tau}^1,\hat{G}_{\tau}^2,\dots,\hat{G}_{\tau}^{m(3r p t+1)}$. We also attach a path of length $r$ to vertex $h$, the final vertex of this path named $h'$.

Next, for every $1\le\tau\le t$ we associate a set $S\subset U$ that contains exactly one vertex from each $U_i$ with an assignment to the variables in group $F_{\tau}$ and as there are at most $2^{\gamma}=2^{\lfloor\log_2(3r+1)^p\rfloor}$ assignments to the variables in $F_{\tau}$ and $(3r+1)^p\ge2^{\gamma}$ sets $S$, the association can be unique for each $\tau$. For each clause $C_{\pi}$ of $\phi$ with $\pi\in[1,m]$, we make $3r p t+1$ vertices $\hat{c}_{\pi}^o$ for $0\le o<3r p t+1$. Then, to each $\hat{c}_{\pi}^o$ we attach a path of length $r-1$ and we consider every assignment to the variables of group $F_{\tau}$ for every $\tau\in[1,t]$ that satisfies the clause $C_{\pi}$: a vertex $x_S$ in $\hat{G}_{\tau}^{m o+\pi}$, for every $0\le o<3rpt+1$, is adjacent to the endpoint of this path (thus being at distance $r$ from $\hat{c}_{\pi}^o$), where $S$ is the subset associated with this assignment within the gadget. 
This concludes our construction, while Figure \ref{fig:global_construction} provides an illustration.
\begin{lemma}\label{cw_SETH_LB_FWD_lem}
  If $\phi$ has a satisfying assignment, then $G$ has a $(k,r)$-center of size $k=((3r+3)p+1)m(3r p t+1)t+1$.
 \end{lemma}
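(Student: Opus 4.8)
The plan is to read an explicit $(k,r)$-center $K$ off of a satisfying assignment $\sigma$ of $\phi$ and then verify coverage gadget by gadget. For each group $F_\tau$ let $S_\tau\subset U$ be the subset (one vertex per $U_i$) associated with $\sigma|_{F_\tau}$, and for each $i\in[1,p]$ let $y_i^\tau\in[1,3r+1]$ be the index with $u_i^{y_i^\tau}\in S_\tau$. In every block $\hat G_\tau^\mu$ I would put into $K$: the two path vertices $a_i^{\alpha_{y_i^\tau}},b_i^{\beta_{y_i^\tau}}$ dictated by the canonical pair $y_i^\tau$, for each $i$; the hub $q_i$, for each $i$; all $3r$ vertices $u_i^y$ with $y\neq y_i^\tau$ (that is, every input of $\hat U_{3r+1}$ except the one picked out by $S_\tau$), for each $i$; and the clique-gadget input $x_{S_\tau}$. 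Finally add the global vertex $h$. This gives $2p+p+3rp+1=(3r+3)p+1$ vertices per block, and there are $m(3rpt+1)t$ blocks, so $|K|=((3r+3)p+1)m(3rpt+1)t+1$, exactly the claimed $k$. Every auxiliary gadget $\hat T_N,\hat X_N,\hat U_N$ then ends up with the ``right'' inputs selected, so Lemmas \ref{guard_gadget_proof}, \ref{clique_gadget_proof} and \ref{assignment_gadget_proof} apply directly.

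The routine part of the verification is the gadget interiors. Each $\hat T_{2r+1}$ hanging off a path has one of its inputs in $K$ (the selected path vertex), so it is covered by Lemma \ref{guard_gadget_proof}; each $\hat X_{(3r+1)^p}$ contains $x_{S_\tau}$, so by Lemma \ref{clique_gadget_proof} all its vertices, in particular every other $x_{S'}$, are within distance $r$; each $\hat U_{3r+1}$ has exactly $N-1$ of its inputs in $K$, so Lemma \ref{assignment_gadget_proof} covers all its non-input vertices, while the lone unselected input $u_i^{y_i^\tau}$ lies at distance exactly $r$ from $q_i\in K$. For a guard gadget $\hat T_{p+1}$ attached to an $x_S$: if $S=S_\tau$ its input $x_{S_\tau}$ is in $K$, and otherwise $S$ differs from $S_\tau$ in some coordinate $i$, so $S$ contains some $u_i^{y}$ with $y\neq y_i^\tau$, which is in $K$; either way the gadget (and its inputs, covered as above or inside $\hat X$) is within distance $r$. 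Each $q_i\in K$ covers its pendant paths of length $r$ and the legs towards the $u_i^y$'s; and the $r$ internal vertices of the leg joining the unselected $u_i^{y_i^\tau}$ to $a_i^{\alpha_{y_i^\tau}}$ (resp.\ to $b_i^{\beta_{y_i^\tau}}$) are covered from the selected far endpoint --- this is precisely why the path selection is made to match the index \emph{not} selected inside $U_i$ --- while for $y\neq y_i^\tau$ the corresponding legs are covered from $u_i^y\in K$. Finally $h\in K$ covers $h'$ and its pendant path.

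The main obstacle is showing that the long ``horizontal'' paths are covered with no gap across block boundaries. Since the same index $\alpha_{y_i^\tau}$ (resp.\ $\beta_{y_i^\tau}$) is selected in every block of group $\tau$, and the cross-edges identify $\{a_i^{2r},b_i^{2r}\}$ of $\hat G_\tau^\mu$ with $\{a_i^0,b_i^0\}$ of $\hat G_\tau^{\mu+1}$, for an internal block one must check that the selections in blocks $\mu-1,\mu,\mu+1$ jointly cover $a_i^0,\dots,a_i^{2r}$ of $\hat G_\tau^\mu$ (and symmetrically for $B$). This reduces to a short case analysis over the three families of canonical pairs ($y\le 2r$ odd, $y\le 2r$ even, $y\in[2r+1,3r+1]$), reusing essentially the distance estimates already computed in the proofs of Lemmas \ref{canonical_pair_lemma} and \ref{order_pair_lemma}; the definition of the canonical pairs is exactly what makes the ``cover from the left block'', ``cover from this block'' and ``cover from the right block'' ranges abut with neither gap nor failure. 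For the two end blocks $\hat G_\tau^1$ and $\hat G_\tau^{m(3rpt+1)}$ the chain is open, and there the vertex $h$ --- joined to all $a_i^1,b_i^1$ in the first blocks and all $a_i^{2r},b_i^{2r}$ in the last --- covers the prefix $a_i^0,\dots,a_i^{r}$ (resp.\ the suffix $a_i^{r},\dots,a_i^{2r}$), which together with the block's own selection closes the remaining gap in every case.

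The last point is the clause vertices. Each $\hat c_\pi^o$ sits at distance $r$ from the vertex $x_S$ of $\hat G_\tau^{mo+\pi}$ whenever $S$ is the subset associated with an assignment of $F_\tau$ that satisfies $C_\pi$. Since $\sigma\models\phi$, some literal of $C_\pi$ is true under $\sigma$; that literal's variable lies in a unique group $F_\tau$, hence $\sigma|_{F_\tau}\models C_\pi$ and $S_\tau$ is one such $S$. As $x_{S_\tau}\in K$, it covers $\hat c_\pi^o$ together with its length-$(r-1)$ leg. Having run through all vertex types, $K$ is a $(k,r)$-center of the stated size, which completes the argument.
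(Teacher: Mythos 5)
Your proposal constructs exactly the same center-set as the paper (per block: the canonical pair $a_i^{\alpha_{y}},b_i^{\beta_{y}}$ matching the one $u_i^y\in S_\tau$, the hub $q_i$, the $3r$ complementary $U_i$-vertices, and $x_{S_\tau}$, plus the global $h$) and verifies coverage gadget-by-gadget in the same way, so it is correct and essentially identical to the paper's proof. The only difference is cosmetic: you spell out the cross-block coverage of the long paths and the role of $h$ at the two ends slightly more explicitly than the paper does.
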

 \begin{proof}
  Given a satisfying assignment for $\phi$ we show the existence of a $(k,r)$-center $K$ of $G$ of size $|K|=k=((3r+3)p+1)m(3r p t+1)t+1$. Set $K$ will include the vertex $h$ and $(3r+3)p+1$ vertices from each gadget $\hat{G}_{\tau}^{\mu}$, for $\tau\in[1,t]$ and $\mu\in[1,m(3rpt+1)]$. For each group $F_{\tau}$ of variables we consider the restriction of the assignment for $\phi$ to these variables and identify the set $S$ associated with this restricted assignment. We first add vertex $x_S$ to $K$ and then, for every $i\in[1,p]$, we also add vertex $q_i$ and all $u_i^z\in U_i\setminus S$, as well as the two vertices $a_i^{\alpha_y}$ and $b_i^{\beta_y}$ from the paths $A_i,B_i$ that are connected to the only $u_i^y\in S$ for this $i$ by paths of length $r+1$, i.e.\ the vertices whose indices $\alpha_y,\beta_y\in[0,2r]$ correspond to the pair $(\alpha_y,\beta_y)$ for the index $y\in[1,3r+1]$ of this vertex. In total, we have $3r$ vertices $u_i^y$, plus the three $a_i^{\alpha_y}$, $b_i^{\beta_y}$ and $q_i$ for each $i\in[1,p]$, with the addition of vertex $x_S$ completing the selection within each $\hat{G}_{\tau}^{\mu}$. Repeating the above for all $m(3rpt+1)t$ gadgets completes the selection for $G$ and what remains is to show that $K$ is indeed a $(k,r)$-center of $G$.
 
  First, our selection of one vertex from each path $A_i,B_i$ ensures that all guard vertices are within distance $r$ from some selected vertex. Next, for each $\tau\in[1,t]$ and $i\in[1,p]$, consider the ``long paths'' formed by joining both vertices $a_i^{2r},b_i^{2r}$ in $\hat{G}_{\tau}^{\mu}$ to both $a_i^0,b_i^0$ in $\hat{G}_{\tau}^{\mu+1}$. Since the associations between sets $S$ and partial assignments to variables of $F_{\tau}$ are consistent for each $\tau$, the patterns of selection of vertices from each $A_i$ and $B_i$ are repeating, i.e.\ $K$ contains every $(2r+1)$-th vertex on each path $A_i$ and $B_i$, meaning all vertices on these paths are within distance $r$ from some selected vertex, apart from the first/last $r-1$ depending on the actual selection pattern, yet these vertices are within distance $r$ from selected vertex $h$.
 
  Next, within each gadget $\hat{G}_{\tau}^{\mu}$, our selection of vertices $q_i$ for every $i\in[1,p]$ brings all $q_i^{1},\dots,q_i^{(2r-1)(3r+1)}$ and $u_i^j$ vertices within distance $r$ from $K$. Further, our selection of $a_i^{\alpha_y},b_i^{\beta_y}$ covers vertices $v_i^{y,1}\dots,v_i^{y,2r}$ on the two paths from $a_i^{\alpha_y},b_i^{\beta_y}$ to $u_i^y\in S$, while all other $v_i^{z,w}$ vertices are covered by our selection of each $u_i^z\in U_i\setminus S$. This selection of all $u_i^z\in U_i\setminus S$ also covers all vertices in the assignment gadgets as well as the guard gadgets $\hat{T}_p$ between the $u_i^z$ and $x_{S'}$ for all $S'\neq S$, while selection of $x_S$ covers all vertices in $X$ and all vertices in the guard gadget where $x_S$ is an input.
 
  Finally, concerning the clause vertices $\hat{c}_{\pi}^{o}$ for $\pi\in[1,m]$ and $o\in[0,3rpt]$, observe that if the given assignment for $\phi$ satisfies the clause $C_{\pi}$, there will be some literal contained therein that is set to true, that corresponds to a variable in a group $F_{\tau}$ for some $\tau\in[1,t]$ and a matching partial assignment for the variables in $F_{\tau}$ associated with some set $S^*$. Our set $K$ contains vertex $x_{S^*}$ in $\hat{G}_{\tau}^{mo+\pi}$ for every $o\in[0,3rpt]$ and vertices $\hat{c}_{\pi}^{o}$ are within distance $r$ from $x_{S^*}$, through a path whose vertices are also covered by $x_{S^*}$.
 \end{proof}
 \begin{lemma}\label{cw_SETH_LB_BWD_lem}
  If $G$ has a $(k,r)$-center of size $k=((3r+3)p+1)m(3r p t+1)t+1$, then $\phi$ has a satisfying assignment.
 \end{lemma}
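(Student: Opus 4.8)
The plan is to reverse-engineer a satisfying assignment of $\phi$ from the structure that is forced on \emph{any} $(k,r)$-center $K$ with $|K|=k$. The first, and hardest, task is to show that $K$ has a completely rigid shape. To cover the pendant vertex $h'$ one must select a vertex on the length-$r$ path hanging off $h$, and replacing any such choice by $h$ itself only enlarges the covered set, so we may assume $h\in K$. Now $V(G)$ decomposes into $\{h\}\cup(h'\text{-path})$, the vertex sets of the $m(3rpt+1)t$ pairwise disjoint block gadgets $\hat G^\mu_\tau$, and the pairwise disjoint pendant paths attached to the $\hat c^o_\pi$. Using the three gadget lemmas I would argue that $K$ must place at least $(3r+3)p+1$ vertices inside every block $\hat G^\mu_\tau$: for each of the $p$ pairs $A_i,B_i$, the two attached copies of $\hat T_{2r+1}$ force a selection on each of the two paths (Lemma \ref{guard_gadget_proof}); the $3r+1$ length-$r$ pendant paths attached to $q_i$ force $q_i\in K$; given $q_i\in K$ the inputs $u_i^y$ of the assignment gadget attached to $U_i$ are already covered, so Lemma \ref{assignment_gadget_proof} forces $3r$ of the $u_i^y$ into $K$; and the clique gadget $\hat X_{(3r+1)^p}$ on the $x_S$ vertices forces one more vertex by Lemma \ref{clique_gadget_proof}. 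Since $|K|=k=((3r+3)p+1)m(3rpt+1)t+1$, all of these lower bounds hold with equality simultaneously; in particular $K$ contains \emph{no} vertex of any clause-path, and inside each block the selection is exactly the canonical configuration used in the proof of Lemma \ref{cw_SETH_LB_FWD_lem}: $q_i$, the $3r$ vertices of $U_i\setminus S$, a single $x_S$, and one pair $a_i^{\alpha},b_i^{\beta}$ per $i$.

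Next I would pin down that each such pair is a \emph{canonical} pair and that it is consistent with the chosen $S$. For the unique index $y_i$ with $u_i^{y_i}\notin K$, the vertex $v_i^{y_i,1}$ (adjacent to $u_i^{y_i}$ on the path towards $A_i$) is at distance exactly $r$ from $a_i^{\alpha_{y_i}}$ and at distance strictly more than $r$ from every other vertex of $A_i\cup B_i$, from $q_i$, and from all other $u$'s; since it must be covered and no spare budget is available, $a_i^{\alpha_{y_i}}\in K$, and symmetrically $b_i^{\beta_{y_i}}\in K$, i.e.\ the pair is canonical with index $y_i$. A parallel ``no spare budget'' argument applied to the guard gadgets $\hat T_{p+1}$ lying between $U$ and the $x_S$'s shows the unique $x_S\in K$ must satisfy $S=\{u_1^{y_1},\dots,u_p^{y_p}\}$: otherwise the guard attached to $\{u_1^{y_1},\dots,u_p^{y_p}\}$ together with its own $x$-vertex would be uncovered. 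Hence every block $\hat G^\mu_\tau$ encodes, through the indices $y_1,\dots,y_p$ of its canonical pairs (faithfully mirrored in $S$), exactly one assignment $\sigma_{\tau,\mu}$ to the variables of the group $F_\tau$.

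The third step is the stabilization argument, which is where the long chains pay off. For fixed $\tau$ and fixed $i$ the paths $A_i^\mu,B_i^\mu$ over $\mu=1,\dots,m(3rpt+1)$ are chained exactly in the path-like fashion of Lemma \ref{order_pair_lemma}, and we have just shown that the selections are canonical pairs; hence by Lemma \ref{order_pair_lemma} the index $y_i(\tau,\mu)$ is non-increasing in $\mu$, so it can strictly decrease for at most $3r$ values of $\mu$. Summing over $i\in[1,p]$ and $\tau\in[1,t]$, the total number of ``bad'' positions $\mu$ at which some $y_i(\tau,\cdot)$ drops is at most $3rpt$. Partition the block indices $[1,m(3rpt+1)]$ into the $3rpt+1$ consecutive windows $W_o=\{mo+1,\dots,m(o+1)\}$ for $o\in[0,3rpt]$; each bad position lies strictly inside at most one window, so at least one window $W_{o^*}$ is clean, meaning that for every $\tau$ and $i$ the index $y_i(\tau,\mu)$ — and therefore $\sigma_{\tau,\mu}$ — is constant over $\mu\in W_{o^*}$. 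Define the global assignment $\sigma$ by letting $\sigma|_{F_\tau}$ be this common value $\sigma^*_\tau$ for each $\tau$. To check $\sigma\models\phi$, fix a clause $C_\pi$: the vertex $\hat c^{o^*}_\pi$ must be covered, and since $K$ contains no clause-path vertex it is covered by some $x_S\in K$ adjacent to the endpoint of its pendant path; by construction such an $x_S$ lies in a block $\hat G^{mo^*+\pi}_\tau$ with $S$ associated to an $F_\tau$-assignment satisfying $C_\pi$. Since $mo^*+\pi\in W_{o^*}$, that $x_S$ encodes $\sigma^*_\tau=\sigma|_{F_\tau}$, so $\sigma|_{F_\tau}$, hence $\sigma$, satisfies $C_\pi$; as $\pi$ was arbitrary, $\sigma$ satisfies $\phi$.

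I expect the main obstacle to be entirely in the first two paragraphs: the global-to-local budget accounting that forces the rigid canonical configuration in every single block, which must rule out all conceivable forms of cheating — selecting interior vertices of the guard, clique, or assignment gadgets, shifting a path selection off a canonical position, or choosing an $x_S$ whose set $S$ is inconsistent with the selected $u_i^{y_i}$'s. This requires combining Lemmas \ref{guard_gadget_proof}, \ref{clique_gadget_proof} and \ref{assignment_gadget_proof} with careful distance bookkeeping around the pendant-path constructions on the $q_i$ and $\hat c^o_\pi$ vertices; note that one cannot simply invoke the gadget lemmas verbatim (they speak of \emph{minimum-sized} center-sets), so the equality-everywhere consequence of the exact budget $k$ is what promotes ``locally minimal'' behaviour to ``globally forced'' behaviour. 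Once the canonical structure is established, the monotonicity of Lemma \ref{order_pair_lemma} and the window pigeonhole make the remainder routine.
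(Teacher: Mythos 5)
Your proposal follows essentially the same route as the paper's own proof: force $h\in K$, use the exact budget $k$ together with the per-block lower bound of $(3r+3)p+1$ (guards forcing one vertex per path $A_i,B_i$, pendant paths forcing $q_i$, the assignment gadget forcing $3r$ vertices of $U_i$, the clique gadget forcing one $x_S$) to make every block's selection rigid and canonical, then apply Lemma~\ref{order_pair_lemma} plus the pigeonhole over the $3rpt+1$ windows and read the assignment off a clean window, verifying clauses via the $\hat c^{o}_\pi$ vertices. The only cosmetic difference is that you force the canonical pair by locating the single uncovered-unless $v_i^{y_i,1}$ (and its mirror on the $B$-side) rather than counting uncovered $v$-vertices for a hypothetical non-canonical selection as the paper does; this is the same idea and both are correct.
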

 \begin{proof}
  Given a $(k,r)$-center $K$ of $G$ of size $|K|=k=((3r+3)p+1)m(3r p t+1)t+1$, we show the existence of a satisfying assignment for $\phi$. First, observe that $K$ must contain vertex $h$, as all vertices on the path attached to it and $h'$ must be within distance $r$ from $K$. Next we require an averaging argument: as the remaining number of vertices in $K$ is $((3r+3)p+1)m(3rpt+1)t$ and there are $m(3rpt+1)t$ gadgets $\hat{G}_{\tau}^{\mu}$, if there are more than $(3r+3)p+1$ vertices selected from some block gadget, then there will be less than these selected from some other block gadget. We will show that not all vertices within a block gadget can be covered by less than $(3r+3)p+1$ selected vertices, which implies that exactly this number is selected from each block gadget in any center-set of size $k$.
 
  Consider a gadget $\hat{G}_{\tau}^{\mu}$. First, observe that at least one of the $x_S$ vertices must be selected to cover all vertices in $X$ and that any such selection indeed covers all vertices in $X$, as well as all vertices in the guard gadget of which it is an input. This leaves $(3r+3)p$ vertices to cover all other $p$ groups of vertices in the gadget. Observe also that for every $i\in[1,p]$, any minimum-sized center-set must contain one vertex from each path $A_i,B_i$ to cover all the guards, as well as $q_i$ to cover $q_i^{1},\dots,q_i^{(2r-1)(3r+1)}$, as any single selection of some guard vertex will not be sufficient to cover all other guard vertices attached to the same path. This leaves $3rp$ vertices to cover the $v_i^{y,w}$ vertices on the paths between the $a_i^j,b_i^l$ that have not been selected from each pair $A_i,B_i$, as well as all vertices in the guard gadgets in which each $x_{S^*}$ that was not selected is an input. Due to the structure of the assignment gadgets $\hat{U}$, there must be $3r$ vertices selected from each $U_i$, thus completing the set $K$ (and the averaging argument). We then claim that the selections from the paths $A_i,B_i$ must match (complement) these selections from $U_i$, that in turn must match (also complement) the selection of $x_S$ from $X$.

  First, suppose that for some $i\in[1,p]$ the selections $a_i^j,b_i^l$ with $j,l\in[0,2r]$ from $A_i,B_i$ do not correspond to some canonical pair for some $y\in[1,3r+1]$ with $(\alpha_y=j,\beta_y=l)$.\footnote{In fact, Lemma \ref{canonical_pair_lemma} already shows that any $(k,r)$-center that does not make selections based only on canonical pairs can always be substituted for a $(k,r)$-center that does. Nevertheless, we show here that this requirement is also enforced by the structure of the graph, mostly for completeness.} Vertex $a_i^j$ will cover at most $2r$ vertices $v_i^{z,1},\dots,v_i^{z,r}$ and $v_i^{z',1},\dots,v_i^{z',r}$, for $z\neq z'$ (if index $j$ happens to be included in two pairs and only the first $r$ for inclusion in a single pair), but not vertices $v_i^{z,r+1}$ or $v_i^{z',r+1}$. Similarly, vertex $b_i^l$ will also cover at most $2r$ vertices $v_i^{w,r+1},\dots,v_i^{w,2r}$ and $v_i^{w',r+1},\dots,v_i^{w',2r}$, again for $w\neq w'$, but not vertices $v_i^{w,1}$ or $v_i^{w',1}$. Note that since the two choices do not correspond to some canonical pair all these indices will be different: $z\neq z'\neq w\neq w'$. Now, as there are no more selections from $A_i,B_i$, all vertices $v_i^{y',1}$ and $v_i^{y',r+1}$ are definitely not covered for $y'\neq z,z',w,w'$ (due to any adjacent selections being at distance at least $r+1$). In short, there is at least one vertex $v_i^{y,1}$ or $v_i^{y,r+1}$ (or both) that is not covered for each $y\in[1,3r+1]$. Since the number of selections from $U_i$ is $3r$ and no other selection would reach these vertices (e.g.\ from some other $U_{i'}$, due to the guard gadgets employed anywhere in between sets $U$ and $X$), there will be at least one such vertex that is not covered, implying the selections from every $A_i,B_i$ must match some canonical pair $y$ and the $3r$ selections from $U_i$ must complement this $y$. Further, suppose the selection $x_S$ from $X$, for $S=\{u_1^{y_1},\dots,u_p^{y_p}\}$, does not match the $3r$ selections from each $U_i$, i.e.\ that for some $i\in[1,p]$, it is $u_i^{y_i}\in K\cap U_i$ and $u_i^{z}\notin K\cap U_i$ for all $z\neq y_i$. Then for set $S^*=S\setminus\{u_i^{y_i}\}\cup\{u_i^z\}$ we have that $x_{S^*}\notin K$ and also $S^*\not\subset K$. This means all vertices in the guard gadget attached to $x_{S^*}$ are not covered.
 
 Next, we require that there exists at least one $o\in[0,3rpt]$ for every $\tau\in[1,t]$ for which $K\cap \{\bigcup_{i\in[1,p]} A_i\cup B_i\}$ is the same in all gadgets $\hat{G}_{\tau}^{mo+\pi}$ with $\pi\in[1,m]$, i.e.\ that there exists a number of successive copies of the gadget for which the pattern of selection of vertices from the paths $A_i,B_i$ does not change. As noted above, set $K$ must contain two vertices $a_i^{\alpha_y},b_i^{\beta_y}$ from each $A_i$ and $B_i$, such that the indices $\alpha_y,\beta_y$ of these two selections match the pair corresponding to the index $y$ of some $u_i^y$. Consider the ``long paths'' consisting of paths $A_i,B_i$ sequentially joined with their followers in the next gadget on the same row. Depending on the starting selection, observe that the pattern can ``shift towards the left'' a number of times in each pair of paths $A_i,B_i$, as the first and last $r-1$ vertices will be covered by $h$. That is, a pattern can be selected on some pair $A_i,B_i$ within some gadget $\hat{G}_{\tau}^{\mu}$ and a different pattern can be selected on the pair $A_i,B_i$ following it in gadget $\hat{G}_{\tau}^{\mu+1}$, without affecting whether all vertices on the long paths are covered. As shown by Lemma \ref{order_pair_lemma}, this can only happen if the  index $y'$ that gives the pair of indices $(\alpha_{y'},\beta_{y'})$ of the second pattern is smaller than or equal to the index $y$ that gives the pair of indices $(\alpha_y,\beta_y)$ of the first pattern, or $y'\le y$.
 
 As there are $3r+1$ different indices $y$ and pairs $(\alpha_y,\beta_y)$, the ``shift to the left'' can happen at most $3r$ times for each $i\in[1,p]$, thus at most $3rp$ times for each $\tau\in[1,t]$, or $3rpt$ times over all $\tau$. By the pigeonhole principle, there must thus exist an $o\in[0,3rpt]$ such that no such shift happens among the gadgets $\hat{G}_{\tau}^{mo+\pi}$, for all $\tau\in[1,t]$ and $\pi\in[1,m]$.
 
 Our assignment for $\phi$ is then given by the selections for $K$ in each gadget $\hat{G}_{\tau}^{mo+1}$ for this $o$: for every group $F_{\tau}$ we consider the selection of $x_S\in X$ that corresponds to a set $S\subset U$, that in turn is associated with a partial assignment for the variables in $F_{\tau}$. In this way we get an assignment to all the variables of $\phi$. To see why this also satisfies every clause $C_{\pi}$ with $\pi\in[1,m]$, consider clause vertex $\hat{c}_{\pi}^o$: this vertex is at distance $r$ from some selected vertex $x_S$ in some gadget $\hat{G}_{\tau}^{mo+\pi}$. Since the pattern for selection from paths $A_i,B_i$ remains the same in all gadgets $\hat{G}_{\tau}^{mo+1},\dots,\hat{G}_{\tau}^{mo+\pi}$, so does the set $U$ and also selection of vertices $x_S$, giving the same assignment for the variables of $F_{\tau}$ associated with $S$. 
\end{proof}
\begin{lemma}\label{cw_SETH_LB_cw_bound}
 Graph $G$ has clique-width $\textrm{cw}(G)\le tp+f(r,\epsilon)$, for $f(r,\epsilon)=O(r^p)$.
\end{lemma}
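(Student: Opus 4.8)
The plan is to exhibit a clique-width expression for $G$ that keeps $tp$ permanent ``carry'' labels together with a pool of $f(r,\epsilon)=O(r^p)$ reusable ``working'' labels, building the block gadgets $\hat G_\tau^\mu$ one at a time in \emph{column-major} order: first $\hat G_1^1,\hat G_2^1,\dots,\hat G_t^1$, then $\hat G_1^2,\dots,\hat G_t^2$, and so on up to column $M:=m(3rpt+1)$. The reason this order works is that the only edges of $G$ leaving a single block $\hat G_\tau^\mu$ are of three kinds: (i) the four edges of the $K_{2,2}$ joining $\{a_i^{2r},b_i^{2r}\}$ of $\hat G_\tau^\mu$ to $\{a_i^0,b_i^0\}$ of $\hat G_\tau^{\mu+1}$ (same row, consecutive columns); (ii) edges from the global vertex $h$ to $\{a_i^1,b_i^1\}$ of the first-column blocks and to $\{a_i^{2r},b_i^{2r}\}$ of the last-column blocks; and (iii) for each $\mu$, edges from a \emph{single} clause-path endpoint to some $x_S$ vertices inside the blocks $\hat G_1^\mu,\dots,\hat G_t^\mu$ of that column --- here one uses that $\mu\mapsto(\pi,o)$ with $\mu=mo+\pi$, $\pi\in[1,m]$, $o\in[0,3rpt]$, is a bijection, so each block-column is tied to exactly one clause vertex $\hat c_\pi^o$ whose pendant path touches no other column. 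Hence, sweeping columns left to right, we only need to remember, for every $i\in[1,p]$, the boundary pair $\{a_i^{2r},b_i^{2r}\}$ of the last block built in each of the $t$ rows (the $tp$ carry labels $\ell_{\tau,i}$), plus at most one clause vertex and its length-$(r-1)$ path at a time.

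Next I would argue that a \emph{single} block $\hat G_\tau^\mu$ can be constructed with $f(r,\epsilon)=O(r^p)$ working labels, the dominant term being the $N=(3r+1)^p$ inputs $x_S$ of the clique gadget $\hat X_N$: each $x_S$ must stay in its own label throughout the block, since it is joined in turn to the internal structure of $\hat X_N$, to the private guard gadget $\hat T_{p+1}$ attached to it, to its $p$ inputs among the $u_i^y$, and (for the satisfying $S$) to the column's clause endpoint. Building $\hat X_N$ itself on only $N+O(1)$ labels is done by growing its $\binom N2+N$ connecting paths one at a time with $O(1)$ temporary labels and accumulating each of its two middle-vertex cliques into a single persistent label (joining every new middle vertex to that label before renaming it in), attaching and immediately discarding the pendants $u_{i,j}^l$ as their middle vertex appears. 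Every other component --- the paths $A_i,B_i$ and their four guard gadgets $\hat T_{2r+1}$, the $u_i^y$ and $q_i$ with their pendant paths, the assignment gadget $\hat U_{3r+1}$, and the $N$ guard gadgets $\hat T_{p+1}$ --- is built the same way with $O(r^2+rp)$ working labels drawn from a pool \emph{disjoint} from the parked $x_S$- and carry-labels, and the moment a vertex has all of its incident edges it is renamed into one shared ``garbage'' label. Thus the peak working-label count inside a block is $N+O(r^2+rp)=O((3r+1)^p)$, which is the claimed $f(r,\epsilon)=O(r^p)$ (since $(3r+1)^p\le 4^p r^p$, with $p$ a constant depending only on $r,\epsilon$).

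Finally I would assemble the global expression and count. Create $h$, its length-$r$ pendant path and $h'$, using $O(r)$ temporary labels that are then released and keeping $h$ alone in one label that survives to the end. For $\mu=1,\dots,M$: decode $(\pi,o)$ from $\mu$, create $\hat c_\pi^o$ with its pendant path, keeping its endpoint $c$ in a dedicated label; for $\tau=1,\dots,t$: build $\hat G_\tau^\mu$ from the working pool, and while doing so (a) join the label holding the new $\{a_i^0,b_i^0\}$ with the carry label $\ell_{\tau,i}$ (realising the $K_{2,2}$), then send the old contents of $\ell_{\tau,i}$ to garbage and later rename the new $\{a_i^{2r},b_i^{2r}\}$ into $\ell_{\tau,i}$; (b) if $\mu=1$ join the label holding $\{a_i^1,b_i^1\}$ with $h$, and if $\mu=M$ join $\ell_{\tau,i}$ with $h$; (c) collect this block's satisfying $x_S$ vertices into one label and join it with $c$; then garbage-collect all internal vertices of $\hat G_\tau^\mu$. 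After the $\tau$-loop, garbage-collect $\hat c_\pi^o$ and $c$. At every step the active labels are: the $tp$ carries, $O(1)$ labels for $h$ (its path being already garbage), $O(r)$ labels for the current clause vertex and path, the $f(r,\epsilon)=O(r^p)$ working labels of the block under construction, and one garbage label --- in total $tp+O(r^p)=tp+f(r,\epsilon)$, as required.

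The hard part is the second paragraph: one must verify that the clique gadget $\hat X_{(3r+1)^p}$ --- essentially a subdivided complete graph on $(3r+1)^p$ terminals, augmented by two middle cliques and pendants --- really admits a clique-width expression on $O((3r+1)^p)$ labels rather than on $\Theta(|\hat X_N|)=\Theta(r\,(3r+1)^{2p})$ labels, which is exactly where the ``one path / one clique at a time, then garbage-collect'' discipline is needed; and one must check, by bookkeeping, that none of the many joins and renames used to build the remaining gadgets of a block ever touches one of the $(3r+1)^p$ parked $x_S$-labels or the $tp$ parked carry labels, which is what forces all temporary labels to be taken from a disjoint pool.
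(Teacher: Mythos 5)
Your proposal is correct and follows essentially the same route as the paper's proof: a column-major sweep that keeps $tp$ permanent labels for the pairs $\{a_i^{2r},b_i^{2r}\}$ of the most recently built block in each row, a constant-size reusable pool for the block under construction plus the current clause vertex and $h$, and a junk/garbage label for finished vertices. The only difference is that your within-block bookkeeping (building $\hat X_N$ path-by-path with persistent middle-clique accumulators and immediate garbage collection) actually achieves the stated $O(r^p)$ working-label count, whereas the paper simply assigns one label per vertex of the block gadget, which yields a larger constant (on the order of $r^{2p+1}$ because of the $\binom{(3r+1)^p}{2}$ subdivided paths in $\hat X_N$) --- immaterial for Theorem \ref{cw_SETH_LB}, since $r$ and $p$ are constants and only the $tp$ term matters.
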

\begin{proof}
 We show how to construct graph $G$ using the clique-width operations introduce, join, relabel and at most $f(r,\epsilon)$ labels. We first introduce vertex $h$ and all vertices on the path of length $r$ from it to $h'$, using one label for each vertex, then consecutively join labels/vertices to form the path. The construction will then proceed in a vertical manner, successively constructing the gadgets $\hat{G}_{\tau}^{\mu}$ for each $\tau\in[1,t]$, before proceeding to repeat the process for each of the $m(3rpt+1)$ columns.
 
 To construct gadget $\hat{G}_{\tau}^{\mu}$ we do the following: we first introduce all guard vertices using one label for each vertex, subsequently applying the appropriate join operations. We do the same for all vertices $q_i,q_i^{1},\dots,q_i^{(2r-1)(3r+1)}$, as well as vertices $u_i^y$ in $U_i$ and all vertices in gadgets $\hat{U}_{3r+1}$ and $v_i^{y,1}\dots,v_i^{y,2r}$, along with all vertices in $X$ and the guard gadgets $\hat{T}$ attached to each $x_S$. We have also introduced the clause vertex $\hat{c}_{\pi}^{o}$ that corresponds to this column and all vertices on the path attached to it, the endpoint of which we now join with matching vertices $x_S$ (if any, for this $F_{\tau}$). We have thus created all vertices (and appropriate edges) in gadget $\hat{G}_{\tau}^{\mu}$, apart from the paths $A_i,B_i$ for $i\in[1,p]$, using one label per vertex. This accounts for the $f(r,\epsilon)$ labels, where function $f$ is $O(r^p)$.
 
 Now, for $i=1$, we introduce vertices $a_1^j,b_1^j$ in turn for each $j\in[0,2r]$, using one label for each vertex and appropriately join each with its previous in the path (for $j\ge2$), the endpoints of the guard gadgets, as well as the corresponding vertices $v_1^{y,w}$, completing the construction of paths $A_1,B_1$. We then relabel the vertex $b_1^{2r}$ with the same label as $a_1^{2r}$ (the last vertices of $A_1,B_1$) and repeat the process for $i=2,\dots,p$. When the above has been carried out for all $i$, the construction of gadget $\hat{G}_{\tau}^{\mu}$ has been completed and we can relabel all vertices to some ``junk'' label, apart from the two final vertices $a_i^{2r},b_i^{2r}$ for each $i\in[1,p]$ (that have the same label) and the endpoint of the path of length $r-1$ attached to clause vertex $\hat{c}_{\pi}^{o}$. This relabelling with some junk label will enable us to reuse the same labels when constructing the same parts of other gadgets. We then repeat the above for the following gadget in the column, until the column is fully constructed. When the column has thus been constructed, we can also relabel the endpoint of the path from clause vertex $\hat{c}_{\pi}^{o}$ to the junk label, thus reusing its former label for the endpoint of the path attached to the clause vertex of the subsequent column.
 
 During construction of the following column, the first vertices $a_i^0,b_i^0$ on each path $A_i,B_i$ will both be joined with the label that includes the last vertices $a_i^{2r},b_i^{2r}$ of the previous column's corresponding path $A'_i,B'_i$ and after each such join, we again relabel these two vertices with the junk label. The construction proceeds in this way until graph $G$ has been fully constructed. Note that during construction of the first and last columns, the first/last vertices of each path have also been joined with vertex $h$.
 
 In total, the number of labels used simultaneously by the above procedure are the $f(r,\epsilon)$ labels used each time for repeating constructions (also counting the constant number of ``outside'' labels for $h$, the clause vertices and the junk label), plus one label for each pair of paths $A_i,B_i$ (containing the last vertices of these paths) in each of the $t$ rows of the construction. The number of these being $tp$, the claimed bound follows. 
\end{proof}
\begin{theorem}\label{cw_SETH_LB}
 For any fixed $r\ge1$, if \textsc{$(k,r)$-Center} can be solved in $O^*((3r+1-\epsilon)^{\textrm{cw}(G)})$ time for some $\epsilon>0$, then \textsc{SAT} can be solved in $O^*((2-\delta)^n)$ time for some $\delta>0$. 
\end{theorem}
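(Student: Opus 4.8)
The plan is to assemble the three preceding lemmas into a polynomial-time reduction from \textsc{SAT} and then check that the hypothetical clique-width algorithm, run on the resulting graph, would beat $2^n$. So I would start by assuming $(k,r)$-\textsc{Center} is solvable in $O^*((3r+1-\epsilon)^{\cw(G)})$ for some fixed $\epsilon\in(0,1)$ and fixed $r\ge 1$. Given a \textsc{SAT} formula $\phi$ with $n$ variables and $m$ clauses, set $\lambda=\log_{3r+1}(3r+1-\epsilon)<1$, fix the integer $p$ exactly as prescribed in the construction, and build the graph $G$. Since $r$ and $p$ are constants, each gadget ($\hat{T}_N$, $\hat{X}_N$, $\hat{U}_N$, $\hat{G}$) has constant size and the number of block gadgets is $m(3rpt+1)t$ with $t=\lceil n/\gamma\rceil=O(n)$; hence $G$ has size polynomial in $n$ and $m$ and is built in polynomial time. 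By Lemma~\ref{cw_SETH_LB_FWD_lem} and Lemma~\ref{cw_SETH_LB_BWD_lem}, $\phi$ is satisfiable if and only if $G$ admits a $(k,r)$-center of size $k=((3r+3)p+1)m(3rpt+1)t+1$.

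Next I would run the assumed algorithm on $(G,k,r)$. By Lemma~\ref{cw_SETH_LB_cw_bound} we have $\cw(G)\le tp+f(r,\epsilon)$ with $f(r,\epsilon)=O(r^p)$ a constant, so the running time is $O^*\big((3r+1-\epsilon)^{tp+f(r,\epsilon)}\big)=O^*\big((3r+1-\epsilon)^{tp}\big)$: the factor $(3r+1-\epsilon)^{f(r,\epsilon)}$ is a constant, and the polynomial-in-$|G|$ factors are polynomial in $n+m$ and hence absorbed into $O^*$.

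It then remains to bound $(3r+1-\epsilon)^{tp}$ by $(2-\delta)^n$ for some $\delta>0$. Using $t\le n/\gamma+1$ with $\gamma=\lfloor p\log_2(3r+1)\rfloor\ge p\log_2(3r+1)-1$, we get $tp\le np/\gamma+p$, hence
\[
(3r+1-\epsilon)^{tp}\ \le\ (3r+1-\epsilon)^{p}\cdot 2^{\,(np/\gamma)\,\lambda\log_2(3r+1)},
\]
using $\log_2(3r+1-\epsilon)=\lambda\log_2(3r+1)$. The exponent of $2$ equals $n\cdot\frac{p\lambda\log_2(3r+1)}{\gamma}\le n\cdot\frac{\lambda\log_2(3r+1)}{\log_2(3r+1)-1/p}$, and this coefficient is strictly below $1$ precisely because $p\ge\frac{1}{(1-\lambda)\log_2(3r+1)}$ (equivalently $1/p\le(1-\lambda)\log_2(3r+1)$) — this is exactly why $p$ is chosen this way. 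Therefore $(3r+1-\epsilon)^{tp}=O((2-\delta)^n)$ for some constant $\delta>0$, and the whole procedure solves \textsc{SAT} in $O^*((2-\delta)^n)$ time.

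The heavy lifting has already been done in the preceding lemmas — in particular the converse direction (Lemma~\ref{cw_SETH_LB_BWD_lem}), which leans on the canonical-pair ordering of Lemma~\ref{order_pair_lemma} together with a pigeonhole argument over the bounded number of ``left shifts'' in order to extract an assignment without alternations. For the theorem itself the only genuinely delicate point is the parameter calibration in the last paragraph: one must keep $tp$ close enough to $n/\log_2(3r+1)$ that exponentiating the base $3r+1-\epsilon$ stays below $2^n$, and the slack $1-\lambda>0$ opened up by the $\epsilon$-speedup is exactly what offsets the rounding losses hidden in $\gamma=\lfloor p\log_2(3r+1)\rfloor$ and $t=\lceil n/\gamma\rceil$, as long as $p$ is taken large enough (one may assume the inequality defining $p$ is strict, or invoke the strict drop already present in the floor).
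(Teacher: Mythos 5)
Your proposal is correct and follows essentially the same route as the paper's proof: combine Lemmas~\ref{cw_SETH_LB_FWD_lem}, \ref{cw_SETH_LB_BWD_lem} and \ref{cw_SETH_LB_cw_bound}, run the hypothetical algorithm, and calibrate the exponent via $\lambda=\log_{3r+1}(3r+1-\epsilon)$ and the choice of $p$. The one delicate point you flag --- that the coefficient must be \emph{strictly} below $1$ even though $p$ is only required to satisfy a non-strict inequality --- is resolved exactly as in the paper, by the strict inequality $\lfloor p\log_2(3r+1)\rfloor > p\log_2(3r+1)-1$.
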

\begin{proof}
 Assuming the existence of some algorithm of running time $O^*((3r+1-\epsilon)^{\textrm{cw}(G)})=O^*((3r+1)^{\lambda\textrm{cw}(G)})$ for \textsc{$(k,r)$-Center}, where $\lambda=\log_{3r+1}(3r+1-\epsilon)$, we construct an instance of \textsc{$(k,r)$-Center}, given a formula $\phi$ of \textsc{SAT}, using the above construction and then solve the problem using the  $O^*((3r+1-\epsilon)^{\textrm{cw}(G)})$-time algorithm. Correctness is given by Lemma \ref{cw_SETH_LB_FWD_lem} and Lemma \ref{cw_SETH_LB_BWD_lem}, while Lemma \ref{cw_SETH_LB_cw_bound} gives the upper bound on the running time:  
 \begin{align}
  O^*((3r+1)^{\lambda\textrm{cw}(G)})&\le O^*\left((3r+1)^{\lambda(tp+f(r,\epsilon))}\right)\\
  &\le O^*\left((3r+1)^{\lambda p\left\lceil\dfrac{n}{\lfloor\log_2(3r+1)^p\rfloor}\right\rceil}\right)\label{cw_comp_2}\\
  &\le O^*\left((3r+1)^{\lambda p\dfrac{n}{\lfloor\log_2(3r+1)^p\rfloor}+\lambda p}\right)\\
  &\le O^*\left((3r+1)^{\lambda\dfrac{np}{\lfloor p\log_{2}(3r+1)\rfloor}}\right)\label{cw_comp_4}\\
  &\le O^*\left((3r+1)^{\delta'\dfrac{n}{\log_2(3r+1)}}\right)\label{cw_comp_5}\\
  &\le O^*(2^{\delta''n})=O((2-\delta)^n)
 \end{align} 
for some $\delta,\delta',\delta''<1$. Observe that in line (\ref{cw_comp_2}) the function $f(r,\epsilon)$ is considered constant, as is $\lambda p$ in line (\ref{cw_comp_4}), while in line (\ref{cw_comp_5}) we used the fact that there always exists a $\delta'<1$ such that $\lambda\dfrac{p}{\lfloor p\log_2(3r+1)\rfloor}=\dfrac{\delta'}{\log_2(3r+1)}$, as we have:
\begin{equation*}
 \begin{split}
    p\log_2(3r+1)-1&<\lfloor p\log_2(3r+1)\rfloor\\
    \Leftrightarrow\dfrac{\lambda p\log_2(3r+1)}{p\log_2(3r+1)-1}&>\dfrac{\lambda p\log_2(3r+1)}{\lfloor p\log_2(3r+1)\rfloor},\\
    \text{from which, by substitution, we get: }\; \dfrac{\lambda p\log_2(3r+1)}{p\log_2(3r+1)-1}&>\delta',\\
    \text{now requiring: }\; \dfrac{\lambda p\log_2(3r+1)}{p\log_2(3r+1)-1}&\le1,\\
    \text{or: }\; p\ge\dfrac{1}{(1-\lambda)\log_2(3r+1)},
 \end{split}
\end{equation*}
that is precisely our definition of $p$. This concludes the proof.
\end{proof}
From the above, we directly get the following corollary (for $r=1$):
\begin{corollary}\label{cw_SETH_DOM_cor}
 If \textsc{Dominating Set} can be solved in $O^*((4-\epsilon)^{\textrm{cw}(G)})$ time for some $\epsilon>0$, then \textsc{SAT} can be solved in $O^*((2-\delta)^n)$ time for some $\delta>0$.
\end{corollary}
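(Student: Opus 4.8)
The plan is to derive Corollary~\ref{cw_SETH_DOM_cor} as the $r=1$ specialization of Theorem~\ref{cw_SETH_LB}. First I would record the (folklore) observation that \textsc{$(k,1)$-Center} and \textsc{Dominating Set} are literally the same problem: a set $K$ with $|K|\le k$ satisfying $\min_{v\in K} d(v,u)\le 1$ for every $u\in V\setminus K$ is precisely a dominating set of size at most $k$, since $d(v,u)\le 1$ holds if and only if $u=v$ or $uv\in E$. Consequently, any algorithm deciding \textsc{Dominating Set} in time $O^*((4-\epsilon)^{\cw(G)})$ also decides \textsc{$(k,1)$-Center} within the same bound.

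Next I would simply substitute $r=1$ into Theorem~\ref{cw_SETH_LB}: there $3r+1 = 4$, so the hypothesis $O^*((3r+1-\epsilon)^{\cw(G)})$ becomes exactly $O^*((4-\epsilon)^{\cw(G)})$, and the conclusion is an $O^*((2-\delta)^n)$ algorithm for \textsc{SAT} for some $\delta>0$. Combining this with the previous paragraph yields the corollary.

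The only detail worth checking is that the instance produced by the reduction behind Theorem~\ref{cw_SETH_LB} is \emph{unweighted} when $r=1$, so that the lower bound genuinely concerns \textsc{Dominating Set} in its standard (unweighted) form. This is immediate from inspecting the construction: every gadget ($\hat{T}_N$, $\hat{X}_N$, $\hat{U}_N$, the block gadget $\hat{G}$) as well as all of the global wiring are assembled purely from plain paths and clique edges, so all edge lengths equal $1$. Hence there is no real obstacle here; the entire content of the corollary is already contained in Theorem~\ref{cw_SETH_LB}, and the proof amounts to a two-line instantiation.
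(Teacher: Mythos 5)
Your proposal is correct and matches the paper exactly: the corollary is obtained by setting $r=1$ in Theorem~\ref{cw_SETH_LB}, so that $3r+1=4$, and identifying \textsc{$(k,1)$-Center} with \textsc{Dominating Set}. Your additional check that the construction is unweighted is a sensible (and valid) sanity check, but does not change the argument.
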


\subsection{Dynamic programming algorithm}\label{sec_cw_DP} 

We now present an $O^*((3r+1)^{\cw})$-time dynamic
programming algorithm for unweighted \KC, using a given
clique-width expression $T_G$ for~$G$ with at most~$\cw$
labels. Even though the algorithm relies on standard techniques~(DP), there are several non-trivial, problem-specific observations that
we need to make to reach a DP table size of $(3r+1)^\cw$.

Our first step is to re-cast the problem as a \emph{distance-labeling} problem,
that is, to formulate the problem as that of deciding for each vertex what is
its precise distance to the optimal solution $K$. This is helpful because it
allows us to make the constraints of the problem local, and hence easier to
verify: roughly speaking, we say that a vertex is satisfied if it has a
neighbor with a smaller distance to $K$ (we give a precise definition below).
It is now not hard to design a clique-width based DP algorithm for this version
of the problem: for each label $l$ we need to remember two numbers, namely
the smallest distance value given to some vertex with label $l$, and the
smallest distance value given to a \emph{currently unsatisfied} vertex with
label $l$, if such a vertex exists.

The above scheme directly leads to an algorithm running in time (roughly)
$((r+1)^2)^\cw$. In order to decrease the size of this table, we now make the
following observation: if a label-set contains a vertex at distance $i$ from
$K$, performing a join operation will satisfy all vertices that expect to be at
distance $\ge i+2$ from $K$, since all vertices of the label-set will now be at
distance at most $2$. This implies that, in a label-set where the minimum
assigned value is $i$, states where the minimum unsatisfied value is between
$i+2$ and $r$ are effectively equivalent. With this observation we can bring
down the size of the table to $(4r)^\cw$, because (intuitively) there are four
cases for the smallest unsatisfied value: $i,i+1,\ge i+2$, and the case where all
values are satisfied.

The last trick that we need to achieve the promised running time departs
slightly from the standard DP approach. We will say that a label-set is
\emph{live} in a node of the clique-width expression if there are still edges
to be added to the graph that will be incident to its vertices. During the
execution of the dynamic program, we perform a ``fore-tracking'' step, by
checking the part of the graph that comes higher in the expression to determine
if a label-set is live. If it is, we merge the case where the smallest
unsatisfied value is $i+2$, with the case where all values are satisfied (since
a join operation will eventually be performed). Otherwise, a partial solution
that contains unsatisfied vertices in a non-live label-set can safely be
discarded. This brings down the size of the DP table to $(3r+1)^\cw$, and then
we need to use some further techniques to make the total running time quasi-linear in the size of the
table. This involves counting the number of
solutions instead of directly computing a solution of minimum size, as well as a non-trivial extension of fast subset convolution from \cite{BjorklundHKK07} for a $3\times(r+1)$-sized table (or \emph{state-changes}). See also \cite{RooijBR09,BodlaenderLRV10} and Chapter 11 of \cite{CyganFKLMPPS15}.
\subparagraph{Distance labeling:} We first require an alternative
formulation of the problem, based on the existence of a function $\dl$ that
assigns numbers $\dl(v)\in[0,r]$ to all vertices $v\in V$.

Let $\dl:V\mapsto[0,r]$ and $\dl^{-1}(i)$ be the set of all vertices with assigned number $i\in[0,r]$ by $\dl$. A function $\dl$ is then called \emph{valid}, if for all labels $\dl\in[1,\textrm{cw}]$ and nodes $t$ of the clique-width expression, at least one of the following conditions holds for all vertices $u\in V_l\cap G_t\setminus \dl^{-1}(0)$:
\begin{enumerate}[1)]
\item There is a neighbor $v$ of $u$ in $G_t$ with a strictly smaller number: $\exists v\in G_t:(u,v)\in E_t\wedge \dl(v)<\dl(u)$; \\
\item There is a vertex $v$ in the same label $l$ as $u$ and at distance 2 from it in $G_t$, while their difference in numbers is at least 2: $\exists v\in G_t\cap V_l\wedge \dl(v)\le \dl(u)-2\wedge d_t(u,v)=2$; \\
\item There is a vertex $v$ in the same label $l$ as $u$ in $G_t$ with their difference in numbers at least 2 and some vertex $w$ adjacent to it in the final graph $G$: $\exists v\in G_t\cap V_l\wedge \dl(v)\le \dl(u)-2\wedge\exists w\in G: (u,w)\notin G_t\wedge (u,w)\in G$.
\end{enumerate}
Note that in condition~3) above, vertex $w$ will also be adjacent to $u$ in $G$ (and thus $d(u,v)=2$), as both $u$ and $v$ belong to the same label. A vertex $u\in V_l\cap G_t$ with label $l\in[1,\textrm{cw}]$ is \emph{satisfied} by $\dl$ for node $t$, if $\dl(u)=0$, or either of the first two conditions 1)2) above holds for $u$. Let $U_l^t(\dl)$ be the set of vertices of label $l$ that are not satisfied by $\dl$ for node $t$ and $\DL_r(t)$ be the set of all possible valid functions $\dl$ for given $r$, restricted to the vertices of $G_t$ for node $t$ of clique-width expression $T_G$, with disjoint sets $\dl^{-1}(0)$. The following lemma shows the equivalence between the two formulations.
\begin{lemma}\label{lem:cw_exact_dl} A graph $G=(V,E)$ admits a
$(k,r)$-center if and only if it admits a valid distance-labeling function
$\dl: V\to \{0,\ldots,r\}$ with $|\dl^{-1}(0)|=k$.
\end{lemma}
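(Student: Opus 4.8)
The plan is to prove both implications through the natural correspondence $K\leftrightarrow\dl^{-1}(0)$, thinking of $\dl(v)$ as the distance from $v$ to its closest center. For the backward direction, suppose $\dl$ is valid with $|\dl^{-1}(0)|=k$ and put $K:=\dl^{-1}(0)$; it suffices to show $d(u,K)\le\dl(u)$ for every $u\in V$, since then $d(u,K)\le r$ and $K$ is a $(k,r)$-center of size $k$. I would argue by induction on $\dl(u)$: the base case $\dl(u)=0$ is immediate, and for $\dl(u)=i\ge1$ I instantiate the validity conditions at the root node, where $G_t=G$ and $d_t=d$, so that condition~3 is vacuous (there are no ``future'' edges) and hence condition~1 or~2 must hold. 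Condition~1 supplies a neighbour $v$ of $u$ in $G$ with $\dl(v)\le i-1$, giving $d(u,K)\le 1+d(v,K)\le i$; condition~2 supplies a vertex $v$ with $d(u,v)=2$ and $\dl(v)\le i-2$, giving $d(u,K)\le 2+d(v,K)\le i$. In both cases $\dl(v)<\dl(u)$, so the induction hypothesis applies, and the recursion terminates at a vertex of $\dl$-value $0$ since any vertex attaining the minimum positive $\dl$-value could satisfy neither condition at the root (so that minimum is $0$ and $K\neq\emptyset$).

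For the forward direction, given a $(k,r)$-center $K$ I would set $\dl(v):=\min_{c\in K}d(v,c)$. This lands in $\{0,\dots,r\}$ precisely because $K$ is a $(k,r)$-center, and $\dl(v)=0$ exactly when $v\in K$, so $|\dl^{-1}(0)|=k$. To verify validity I fix a node $t$ and a vertex $u$ with $\dl(u)=i\ge1$ and let $v$ be the neighbour of $u$ lying on a shortest path from $u$ to $K$, so $(u,v)\in E$ and $\dl(v)=i-1$. If this edge already lies in $G_t$ then condition~1 holds. Otherwise $(u,v)$ is an edge of $G$ absent from $G_t$, so $u$'s label is live, and I would use the facts specific to clique-width --- two vertices sharing a label stay together under renaming and acquire every new neighbour simultaneously, while the edge $(u,v)$ itself is added by some later join --- to conclude that $u$ is caught by condition~2 or~3. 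In particular at the root $G_t=G$ and every vertex with $\dl(u)\ge1$ is covered by condition~1.

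I expect the main obstacle to be exactly this last verification: aligning the natural distance labeling with the precise wording of the validity conditions at the \emph{internal} nodes of the clique-width expression, in particular reconciling condition~2's demand of distance \emph{exactly} $2$ inside a single label of $G_t$, and condition~3's liveness clause together with its same-label witness of value $\le\dl(u)-2$, with the fact that a vertex's shortest-path neighbour may only appear, or become adjacent to it, far higher up in the expression; this is where the clique-width-specific observations do the work. The reverse direction, by contrast, is a short induction on $\dl$-values once one observes that condition~3 is inactive at the root.
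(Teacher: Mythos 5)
Your proposal follows the paper's proof essentially verbatim: the backward direction is the same induction on $\dl$-values, instantiating the validity conditions at the root node where $G_t=G$ (so condition~3 is vacuous and conditions~1/2 supply a witness at distance $1$ resp.\ $2$ whose value is smaller by $1$ resp.\ at least $2$), and the forward direction uses the same labeling $\dl(v)=d(v,K)$ with $\dl^{-1}(0)=K$. The ``main obstacle'' you anticipate in the forward direction is not one the paper overcomes: its proof verifies only condition~1 in the final graph $G$, via the neighbour on a shortest path to the nearest center, and says nothing about internal nodes of the clique-width expression --- where, as you suspect, a literal reading fails (a freshly introduced vertex with positive value satisfies none of conditions~1--3). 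That slack is absorbed not by this lemma but by the dynamic program itself, whose label-states $u_l$ explicitly tolerate unsatisfied vertices at internal nodes and only require satisfaction at the root; so relative to the paper's own argument you have left no gap, and your backward direction is in fact slightly more careful than the paper's in spelling out why only the root needs to be examined.
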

 \begin{proof}
 To see why a valid function $\dl$ with $|\dl^{-1}(0)|=k$ represents a solution to the \textsc{$(k,r)$-Center} problem consider the following: first, given a $(k,r)$-center of $G$, let $\dl$ be the function that assigns to each vertex $v\in V$ a number equal to its distance from the closest center, i.e.\ number 0 to the centers, 1 to their immediate neighbors and so on. This function is valid as for every vertex $u$ there always exists some neighbor $v$ with $\dl(v)<\dl(u)$, being the neighbor that lies on the path between $u$ and its closest center, while also $|\dl^{-1}(0)|=k$. On the other hand, given such a valid function $\dl$, the set $\dl^{-1}(0)$ is indeed a $(k,r)$-center: we have $|\dl^{-1}(0)|=k$ and first, vertices in $\dl^{-1}(1)$ must have a neighbor in the center-set, while vertices in $\dl^{-1}(2)$ are at distance at most 2 from some center. Then, for $i\in[3,r]$, vertices $u$ in $\dl^{-1}(i)$ either have a neighbor in $\dl^{-1}(j)$ with $j<i$, or are at distance at most 2 from some vertex in $\dl^{-1}(j)$ with $j\le i-2$, that by induction must in both cases be at distance at most $j$ from some center, making the distance between $u$ and some vertex in $\dl^{-1}(0)$ at most $i\le r$.
 \end{proof}
\subparagraph{Table description:} There is a table $D_t$ associated with every node $t$ of the clique-width expression, while each table entry $D_t[\kappa,s_1,\dots,s_w]$, with $w\le\textrm{cw}$, is indexed by a number $\kappa\in[0,k]$ and a $w$-sized tuple $(s_1,\dots,s_w)$ of \emph{label-states}, assigning a state $s_l=(v_l,u_l)$ to each label $l\in[1,w]$, where $v_l=\min_{x\in V_l}\dl(x)\in[0,r]$ is the minimum number assigned to any vertex $x$ in label $l$, while $u_l\in\{0,1,2\}$ is the difference between $v_l$ and the minimum number assigned by $\dl$ to any vertex $y\in U_l^t(\dl)$ that is not satisfied by $\dl$ for this node: $u_l=0$ when $\min_{y\in U_l^t(\dl)}(\dl(y)-v_l)=0$, $u_l=1$ when $\min_{y\in U_l^t(\dl)}(\dl(y)-v_l)=1$ and $u_l=2$ when $\min_{y\in U_l^t(\dl)}(\dl(y)-v_l)\ge2$, or when $U_l^t(\dl)=\emptyset$. Note that states $(0,0)$ and $(r,1)$ do not signify any valid situation and are therefore not used.

There are thus $3r+1$ possible states for each label, each being a pair signifying the minimum number assigned to any vertex in the label and whether the difference between this and the minimum number of any vertex in the label that is not yet satisfied by $\dl$ is either exactly 0,1, or greater than 1, with absence of unsatisfied vertices also considered in the latter case.
For a node $t$ with $w$ involved labels, each table entry $D_t[\kappa,s_1,\dots,s_w]$ contains the number $|\DL_r(t)|$ of valid functions $\dl$ restricted to the vertices of $G_t$ with disjoint sets $\dl^{-1}(0)$ and $|\dl^{-1}(0)|=\kappa$, such that for each label $l\in[1,w]$, its state in the tuple gives the conditions that must be satisfied for this label by any such function $\dl$ that is to be counted in the entry's value. In particular, we have $\forall t\in T, D_t[\kappa,s_1,\dots,s_w]:\{\kappa\in[0,k]\}\times\{(1,0),\dots,(r,0),(0,1),\dots,(r-1,1),(0,2),\dots,(r,2)\}^w\mapsto\mathbb{N}^0$, where $w\in[1,\textrm{cw}]$.

The inductive computations of table entries for each type of node follows.
\subparagraph{Introduce node:} For node $t$ with operation $i(l)$ and $l\in[1,\textrm{cw}]$, we have:
\begin{equation*}
D_t[\kappa,s_l] \coloneqq
 \begin{cases}
  1, & \mbox{if } v_l=0,u_l=2,\kappa=1;\\
  1, & \mbox{if } v_l\neq0,u_l=0,\kappa=0;\\
			0, & \mbox{otherwise.}
 \end{cases}
\end{equation*}
\subparagraph{Join node:} For node $t$ with operation $\eta(a,b)$, child node $t-1$ and $a,b\in[1,w]$, let $Q(s'_a=(v'_a,u'_a),s'_b=(v'_b,u'_b))\coloneqq\{(s_a=(v_a,u_a),s_b=(v_b,u_b))|[v_a=v'_a\wedge v_b=v'_b]\wedge[((u'_a=2\wedge v_a+u_a>v_b)\vee(u_a=u'_a<2\wedge v_a+u_a\le v_b))\wedge((u'_b=2\wedge v_b+u_b>v_a)\vee(u_b=u'_b<2\wedge v_b+u_b\le v_a))]\}$. In words, $Q(s'_a,s'_b)$ is the set of all pairs of label states $(s_a,s_b)$, such that if label $a$ is joined with label $b$, their new states could be $s'_a,s'_b$, i.e.\ all pairs of states where the $v$ values remain the same for both $a,b$, as no new numbers are introduced within any label by a join operation, yet some vertices may become satisfied through the addition of new edges and thus the $u$ values of their label might change to 2. We then have:
\begin{equation*}
 D_t[\kappa,s_1,\dots,s'_a,s'_b,\dots,s_w]\coloneqq\sum_{(s_a,s_b)\in Q(s'_a,s'_b)}D_{t-1}[\kappa,s_1,\dots,s_a,s_b,\dots,s_w].
\end{equation*}
\subparagraph{Rename node:} For node $t$ with operation $\rho(w+1,w)$ and child node $t-1$ (we assume without loss of generality that the last label is renamed into the one preceding it), let $M(s=(v,u))\coloneqq\{(s_a=(v_a,u_a),s_b=(v_b,u_b))|[v=\min\{v_a,v_b\}]\wedge[[(u=0)\wedge((v_a=v\wedge u_a=0\wedge v_b\ge v\wedge u_b\ge0)\vee(v_b=v\wedge u_b=0\wedge v_a\ge v\wedge u_a\ge0))] \vee [(u=1)\wedge( (v_a=v\wedge u_a=1\wedge v_b=v\wedge u_b\ge1) \vee (v_a=v\wedge u_a\ge1\wedge v_b=v\wedge u_b=1) \vee (v_a=v\wedge u_a=1\wedge v_b> v\wedge u_b\ge0) \vee (v_b=v\wedge u_b=1\wedge v_a>v\wedge u_a\ge0) \vee (v_a=v\wedge u_a=2\wedge v_b=v+1\wedge u_b=0) \vee (v_b=v\wedge u_b=2\wedge v_a=v+1\wedge u_a\ge0))] \vee [((u=2)\wedge(u_a=u_b=2)\vee(v<v_a\wedge u_b=2\wedge((v_a-v\ge2\wedge u_a=0)\vee(v_a-v\ge1\wedge u_a=1)))\vee(v<v_b\wedge u_a=2\wedge((v_b-v\ge2\wedge u_b=0)\vee(v_b-v\ge1\wedge u_b=1)))]]\}$. In words, $M(s)$ is the set of all pairs of labels $(s_a,s_b)$ for two labels $a,b$, such that renaming one into the other could produce state $s$ for the resulting label, i.e.\ the pairs of states where the resulting $v$ value is the minimum of $v_a,v_b$, while $u$ comes from any of the appropriate combinations of $u_a,u_b$ for each case. We then have:
\begin{equation*}
 D_t[\kappa,s_1,\dots,s'_w]\coloneqq\sum_{(s_w,s_{w+1})\in M(s'_w)}D_{t-1}[\kappa,s_1,\dots,s_w,s_{w+1}].
\end{equation*}
\subparagraph{Union node:} For node $t$ with operation $G_{t-1}\cup G_{t-2}$ and children nodes $t-1,t-2$, we assume (again, without loss of generality) that all labels in $[1,y\le w]$ are involved in $t-1$ and all labels in $[1,z\le w]$ are involved in $t-2$, such that for some $i\in[1,w]$, all labels $1\le j\le i$ are involved in both nodes $t-1,t-2$, i.e.\ labels $1,\dots,i\le y,z\le w$ are common to both preceding nodes. Also observe that for any resulting state $s'_j$ of label $j$, following application of the union operation on two partial solutions where its states are $s_j$ and $\bar{s}_j$, the pair $(s_j,\bar{s}_j)$ would be included in $M(s'_j)$, similarly to renaming $j$ from one partial solution with state $s_j$ to the same label from the other partial solution with state $\bar{s}_j$. We then have:
\begin{align*}
 D_t[\kappa,s'_1,\dots,s'_i,\dots,s_w]\coloneqq\sum_{\substack{(s_j,\bar{s}_j)\in M(s'_j), j\in[1,i]\\ \kappa_1\in[0,\kappa]}}(&D_{t-1}[\kappa_1,s_1,\dots,s_i,\dots,s_y]\cdot\\
 \cdot&D_{t-2}[\kappa-\kappa_1,\bar{s}_1,\dots,\bar{s}_i,\dots,s_z]).
\end{align*}
\subparagraph{State changes:} Since the number of entries involved in the above computation of a union node's table can be exponential in the clique-width of the graph (due to the number of possible combinations of state pairs in the sum), in order to efficiently compute the tables of union nodes we will use \emph{state changes}: for a union node $t$ we will first transform the tables $D_{t-1},D_{t-2}$ of its children into tables $D^*_{t-1},D^*_{t-2}$ of a new type that employs a different state representation, for which the union operation can be efficiently performed to produce table $D^*_t$, that we finally will transform back to table $D_t$, thus progressing with our dynamic programming algorithm.

In particular, each entry of table $D^*_t[\kappa,s_1,\dots,s_w]$ of a node $t$ for $\kappa\in[0,k]$ and $w\in[1,\textrm{cw}]$ will be an aggregate of entries from $D_t[\kappa,s_1,\dots,s_w]$, with its value equal to the sum of the appropriate values of the original table. For label $l$, its state $s^*_l=(v^*_l,u^*_l)$ in the new state signification for table $D^*_t$ will correspond to all states where $v_l^*\ge v_l+i$ and $u_l^*\ge u_l-i$ in the previous state signification for all (applicable) $i\in\{0,1,2\}$. Observe that these correspondences exactly parallel the states described in the definition of set $M(s_l^*)$ given above for the computation of a rename node, i.e.\ all states that could combine to produce the resulting state $s^*$.

First, let $D'_t$ be a copy of table $D_t$. The transformation then works in two stages, of $w$ steps each, label-wise: we first produce the intermediate table $D'_t$ from $D_t$ and then table $D^*_t$ from $D'_t$. For table $D'_t$ we require that all entries $D'_t[\kappa,s'_1,\dots,s'_w]$ contain the sum of all entries of $D_t$ where $s'_l=(v'_l\ge v_l,u'_l=u_l)$ and all other label-states and $\kappa$ are fixed: at step $l$, we first add the entry where $s_l=(r,0)$ to the entry where $s_l=(r-1,0)$ and then the entry with $s_l=(r-2,0)$ to the previous result and so on until $s_l=(1,0)$. We then repeat this process for $u=1$ from $s_l=(r-1,1)$ until $s_l=(0,1)$ and then for $u=2$ from $s_l=(r,2)$ until $s_l=(0,2)$. We then proceed to the next label until table $D'_t$ is computed.

For the next stage, we initialize $D^*_t$ as a copy of $D'_t$ and also work on $w$ steps, again label-wise, fixing all other label-states and $\kappa$: at step $l$, we first add the entries where $s_l=(v_l,1)$ and $s_l=(v_l,2)$ from $D'_t$ to entries where $s_l=(v_l,0)$ from $D^*_t$, for each $v_l$ from $r-1$ to 1, in turn (and only the one with $s_l=(r,2)$ from $D'_t$ for the one where $s_l=(r,0)$ from $D^*_t$). We then add the entries where $s_l=(v_l+1,0)$ and $s_l=(v_l,2)$ from $D'_t$ to entries where $s_l=(v_l,1)$ from $D^*_t$, for each $v_l$ from $r$ to 0, in turn. Finally, we add the entries where $s_l=(v_l+1,1)$ and the entries where $s_l=(v_l+2,0)$ from $D'_t$ to entries where $s_l=(v_l,2)$ from $D^*_t$, for each $v_l$ from $r$ to 0, in turn. We then proceed to the next step for the following label until table $D^*_t$ is computed. See Figure \ref{fig:state_changes} for an illustration of the relationships between states/entries of these tables.

\begin{figure}[htbp]
 \centerline{\includegraphics[width=110mm]{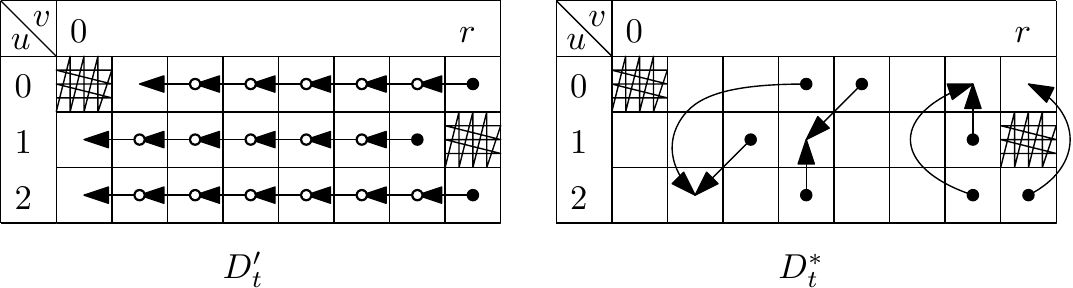}}
 \caption{A conceptual representation of the relationships between states/entries to be added for the computation of the alternative table types $D',D^*$. Arrows imply the transfer of values (addition), black endpoints indicate corresponding entries from a previous table type and white endpoints indicate previously computed entries in the same table.}
 \label{fig:state_changes}
\end{figure}

Note that the above procedure is fully reversible,\footnote{This is the reason for \emph{counting} the number of solutions for each $\kappa$, instead of finding the \emph{minimum} $\kappa$ for which at least one solution exists: there is no additive inverse operation for the min-sum semiring, yet the sum-product ring is indeed equipped with subtraction.} as for each label $l$, entries where $s_l=(r,2)$ are the same in all tables $D_t,D'_t,D^*_t$ and thus, to obtain $D'_t$ from $D^*_t$ we again work label-wise, fixing all other label-states and $\kappa$: at step $l$, we first compute the entry of $D'_t$ where $s_l=(r,0)$, by subtracting from its value the value of the entry where $s_l=(r,2)$ and we then do the same for the entry where $s_l=(r-1,2)$, moving to $s_l=(r-1,1)$ by subtracting the ones where $s_l=(r-1,2)$ and $s_l=(r,0)$, then to $s_l=(r-1,0)$ by subtracting the ones where $s_l=(r-1,1)$ and $s_l=(r-1,2)$ and so on, for each $v_l$ from $r-1$ to 0, in turn. After all $w$ steps we have obtained table $D'_t$ and from this we can similarly obtain table $D_t$, as again, for each label $l$ the entries where $s_l=(r,0),(r-1,1),(r,2)$ are the same: at step $l$, we compute the entries where $s_l=(v_l,i)$ by subtracting from the corresponding entries of $D'_t$ all entries where $s_l=(v_l+1,i)$, for $v_l$ from $r-1$ to 0 and $i\in\{0,1,2\}$ in turn. For both transformations and directions, we perform at most two additions/subtractions per entry for $\kappa\cdot(3r+1)^{\textrm{cw}}$ entries, for each step $l\in[1,w\le\textrm{cw}]$.

Thus we can compute table $D^*_t$ by simply multiplying the values of the two corresponding entries from $D^*_{t-1},D^*_{t-2}$, as they now contain all required information for this state representation, with the inverse transformation of the result giving table $D_t$:
\begin{equation*}
 D^*_t[\kappa,s_1,\dots,s_w]\coloneqq\sum_{\kappa_1=0}^{\kappa_1=\kappa}D^*_{t-1}[\kappa_1,s_1,\dots,s_w]\cdot D^*_{t-2}[\kappa-\kappa_1,s_1,\dots,s_w].
\end{equation*}
\subparagraph{Fore-tracking:} As is already apparent, our algorithm actually solves the counting version of \textsc{$(k,r)$-Center}, by reading the values of entries from table $D_z$ of the final node $z$, where all labels $l$ are of state $s_l=(v_l,u_l=2)$. Now, as already noted, these states actually correspond to both situations where either all unsatisfied vertices in the label are of number $\ge2$ than the label's $v$ value, or where there are no unsatisfied vertices in the label for this $\dl$. To ensure our algorithm only counts valid functions $\dl$, we employ a ``fore-tracking'' policy: whenever some entry is being computed for the table of some rename or union node $t$, where some label $l$ has state $s_l=(v_l,u_l=2)$, we establish whether condition 3) given in the definition of a valid function $\dl$ is also satisfied by all counted functions $\dl$ for all vertices in $l$, by verifying that some join operation $\eta(l,m)$ is applied between $l$ and another label $m$ in some subsequent node $t'$ of the clique-width expression (even after $l$ is potentially renamed to some other label). If such a join operation is indeed to be applied (and the label-set is live), there will be a vertex $w\in V_m$ that is adjacent to all vertices $u,v\in V_l$ in the final graph $G$ and as $u_l=2$, there must also be some vertex $v\in V_l$ with $\dl(v)\le \dl(u)-2$, for any unsatisfied vertex $u$. Since it will be $d(u,v)=2$, all such vertices $u$ will be satisfied in $G$ for all such $\dl$, that will in fact be valid.

On the other hand, in the absence of such a join operation and the case where $t$ is a rename node, we consider the definition of set $M(s)$ above, where there are three options for a resulting state $s_l=(v_l,u_l)$ with $u_l=2$ from two preceding states $s_a=(v_a,u_a),s_b=(v_b,u_b)$ (corresponding to the last bracket of clauses in the set's definition): we must have either $u_a=u_b=2$, i.e.\ that both states have a difference of at least 2 between the minimum number of any vertex and the minimum of any unsatisfied vertex, or $v_l<v_a$ and $u_b=2$, while either $v_a-v_l\ge2$ and $u_a=0$, or $v_a-v_l\ge1$ and $u_a=1$ (or vice-versa), i.e.\ that one label must have state $(>v_l,0)$ and the other can have any state from $(\ge v_l+2,0)$ or $(\ge v_l+1,1)$, their combined numbers giving $(v_l,2)$. Now, if no join operation follows the rename node $t$ for this label, we simply disregard any options in the above computation (from the last two) that consider states where one of the preceding labels $a,b$ had $u_a,u_b<2$.

Similarly, for a union node $t$ we consider the state changes given above: if no join node follows $t$ for this label, we simply disregard the additions (and subsequently their corresponding subtractions) for this label in the table's transformation from $D'_t$ to $D^*_t$ in step $l$, where entries with $s_l=(v_l+1,1)$ and $s_l=(v_l+2,0)$ from $D'_t$ are added to entries with $s_l=(v_l,2)$ from $D^*_t$ and keep all such entries as they are (direct copies from $D'_t$). In this way, as any state where $u=2$ for some label must be ``the direct result'' of some join operation (validating condition 1), or there will be some subsequent join operation satisfying all vertices of number $\ge2$ than the minimum (condition 2), no non-valid functions $\dl$ for which unsatisfied vertices are infused in the label producing some ``false'' state where $u=2$ can be counted, as these vertices are not to be satisfied by some following join node (3).
\subparagraph{Correctness:} To show correctness of our algorithm we need to establish that for every node $t\in T_G$, each table entry $D_t[\kappa,s_1,\dots,s_w]$ contains the number of partial solutions to the sub-problem restricted to the graph $G_t$, i.e.\ the number of valid functions $\dl\in \DL_r(t)$ with $|\dl^{-1}(0)|=\kappa$ (and disjoint sets $\dl^{-1}(0)$), such that for each label $l\in[1,w]$ (being the labels involved in $t$) the conditions imposed by its state $s_l=(v_l,u_l)$ are satisfied by all such $\dl$, i.e.\ $\min_{x\in V_l}\dl(x)=v_l$ and $\min_{y\in U_l^t(\dl)}(\dl(y)-v_l)=0,1$ when $u_l=0,1$, while $\min_{y\in U_l^t(\dl)}(\dl(y)-v_l)\ge2$, or $U_l^t(\dl)=\emptyset$ when $u_l=2$. That is:
\begin{gather}
 \forall t\in T_G,w\in[1,\textrm{cw}],\forall(\kappa,s_1,\dots,s_w)\in\label{cw_cor_1}\\
 \in\{\kappa\in[0,k]\}\times\{(1,0),\dots,(r,0),(0,1),\dots,(r-1,1),(0,2),\dots,(r,2)\}^w:\label{cw_cor_2}\\
 \Big\{D_t[\kappa,s_1,\dots,s_w]=|\DL_r(t)|:\forall \dl\in \DL_r(t), \forall l\in[1,w]:\label{cw_cor_3}\\
 \big\{(\min_{x\in V_l}\dl(x)=v_l)\wedge\label{cw_cor_4}\\
 \wedge\big[(u_l=0\wedge \min_{y\in U_l^t(\dl)}(\dl(y)-v_l)=0)\vee\label{cw_cor_5}\\
 \vee(u_l=1\wedge \min_{y\in U_l^t(\dl)}(\dl(y)-v_l)=1)\vee\label{cw_cor_6}\\
 \vee(u_l=2\wedge \big(\min_{y\in U_l^t(\dl)}(\dl(y)-v_l)\ge2\vee U_l^t(\dl)=\emptyset\big)\big]\big\}\wedge\label{cw_cor_7}\\
 \wedge\{|\dl^{-1}(0)|=\kappa\}\Big\}\label{cw_cor_8}.
\end{gather}
This is shown by induction on the nodes $t\in T_G$:
\begin{itemize}
 \item Introduce nodes: This is the base case of our induction. For node $t$ with operation $i(l)$ and $l\in[1,\textrm{cw}]$, all entries are properly initialized as there is one function $\dl$ that includes the introduced vertex in the center-set and one that does not, thus $|\DL_r(t)|=1$ for $s_l=(0,2),\kappa=1$ and $s_l=(>0,0),\kappa=0$, while for any other configuration $\DL_r(t)=\emptyset$.  In the following cases, we assume (our induction hypothesis) that all entries of $D_{t-1}$ (and $D_{t-2}$ for union nodes) satisfy the above statement (\ref{cw_cor_1}-\ref{cw_cor_8}).
 \item Join nodes: For node $t$ with operation $\eta(a,b)$ and $a,b\in[1,w]$, all edges are added between vertices of labels $a$ and $b$. Thus for each entry $D_t[\kappa,s_1,\dots,s'_a,s'_b,\dots,s_w]$, all valid functions $\dl$ counted in the entries of the previous table $D_{t-1}[\kappa,s_1,\dots,s_a,s_b,\dots,s_w]$ remain valid if $v'_a=v_a,v'_b=v_b$ (\ref{cw_cor_4}) and $u'_a=u_a,u'_b=u_b$ (\ref{cw_cor_5}-\ref{cw_cor_7}), while for all entries in $D_t$ where $v'_a=v_a,v'_b=v_b$ and $u'_a=2$ (resp.\ $u'_b=2$), all functions counted for entries of $D_{t-1}$ where $v_a+u_a>v_b$ (resp.\ $v_b+u_b>v_a$), are valid as well (\ref{cw_cor_7}), since in the resulting graph $G_t$ of node $t$, vertices in $a$ (resp.\ $b$) can become satisfied by the addition of such edges (and validation of condition 1). This is precisely the definition of set $Q(s'_a,s'_b)$.
 \item Rename nodes: For node $t$ with operation $\rho(w+1,w)$, all vertices of label $w+1$ will now be included in label $w$. Thus for each entry $D_t[\kappa,s_1,\dots,s'_w]$, we require all valid functions $\dl$ counted in the entries of the previous table $D_{t-1}[\kappa,s_1,\dots,s_w,s_{w+1}]$, where states $s_w,s_{w+1}$ describe the possible situations for labels $w,w+1$ that could combine to give the new state $s'_w$ for label $w$: first, value $v'_w$ denotes the minimum allowed number in any vertex that was previously in $w$ or $w+1$, yet at least one of these labels must have had some vertex with exactly this number and thus $v'_w=\min\{v_w,v_{w+1}\}$ (\ref{cw_cor_4}). Next, value $u'_w$ denotes whether the difference between this number and the minimum number of any unsatisfied vertex is 0,1, or greater (if any) and the possible combinations of states $s_w,s_{w+1}$ are fully described in the definition of set $M(s'_w)$: for $u'_w$ to be 0 we must have $v_w=v'_w$ and $u_w=0$, while $v_{w+1}\ge v_w$ and $u_{w+1}\ge0$ (or vice-versa), i.e.\ at least one of $u_w,u_{w+1}$ must be 0 with its corresponding value $v_w,v_{w+1}$ being the minimum, while the state of the other can have any at least equal values, as only the minimum is retained (\ref{cw_cor_5}). For $u'_w$ to be 1 we must have either $v_w=v'_w$ and $u_w=1$, while also $v_{w+1}=v'_w$ and $u_{w+1}\ge1$, or $v_w=v'_w$ and $u_w=1$, while $v_{w+1}>v'_w$ and $u_{w+1}\ge0$ (or vice-versa), i.e.\ one label must have the same target state, while the other must also have the same $v$ with a $u$ greater than 1, or a greater $v$ and any $u$ so as not to make $u'_w$ smaller than the target 1 (\ref{cw_cor_6}). Finally, for $u'_w$ to be 2 we must have either $u_w=u_{w+1}=2$, i.e.\ that both states have a difference of at least 2 between the minimum number of any vertex and the minimum of any unsatisfied vertex, or $v'_w<v_w$ and $u_{w+1}=2$, while either $v_w-v'_w\ge2$ and $u_w=0$, or $v_w-v'_w\ge1$ and $u_w=1$ (or vice-versa), i.e.\ that one label must have state $(>v'_w,0)$ and the other can have any state from $(\ge v'_w+2,0)$ or $(\ge v'_w+1,1)$, their combined numbers inducing $(v'_w,2)$ (\ref{cw_cor_7}).
 \item Union nodes: For node $t$ with operation $G_{t-1}\cup G_{t-2}$ and children nodes $t-1,t-2$, the resulting graph $G_t$ is the disjoint union of graphs $G_{t-1},G_{t-2}$, where all common labels $j\in[1,i]$ involved in both $t-1,t-2$ will now include all vertices that were included in label $j$ in one of the previous graphs, while nothing changes for all other labels. This can be seen to be similar to the result of a rename operation between the label $j$ from graph $G_{t-1}$ and the same label from $G_{t-2}$. Thus for each entry $D_t[\kappa,s'_1,\dots,s'_i,\dots,s_w]$, we require the \emph{product} of the numbers of all valid functions $\dl$ counted in entries $D_{t-1}[\kappa_1,s_1,\dots,s_i,\dots,s_y]$ for $t-1$ and those counted in entries $D_{t-2}[\kappa-\kappa_1,\bar{s}_1,\dots,\bar{s}_i,\dots,s_z]$ for $t-2$, summed over all possible combinations of states $s_j,\bar{s}_j$ that belong to some pair in $M(s'_j)$ for every $j\in[1,i]$, as well as over all values of $\kappa_1$ from 0 to $\kappa$, since the number of valid functions in $G_t$ respecting the conditions of the target entry is equal to a number of valid functions in $G_{t-1}$ (counted in entries of the first type) for each valid function that complements each of them in $G_{t-2}$ (counted in entries of the second type): the sizes of $\dl^{-1}(0)$ always add up to $\kappa$ (\ref{cw_cor_8}), while the states $s_j,\bar{s}_j$ of all common labels $j$ in any pair of entries from $D_{t-1},D_{t-2}$ would result in the desired target state $s_j$ for the same label in the entry from $D_t$ in question, as they belong in the set $M(s'_j)$ (\ref{cw_cor_4}-\ref{cw_cor_7}).
\end{itemize}
\begin{theorem}\label{thm:cw_dp}
 Given graph $G$, along with $k,r\in\mathbb{N}^+$ and clique-width expression $T_G$ of clique-width $\textrm{cw}$ for $G$, there exists an algorithm to solve the counting version of the \textsc{$(k,r)$-Center} problem in $O^*((3r+1)^{\textrm{cw}})$ time.
\end{theorem}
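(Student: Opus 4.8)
The plan is to split the argument into correctness of the dynamic program along $T_G$ --- which is essentially the content of the invariant (\ref{cw_cor_1})--(\ref{cw_cor_8}) already proved by the preceding induction --- and a running-time analysis showing that each node of $T_G$ is processed in time $O^*((3r+1)^{\cw})$. For correctness I would first invoke Lemma~\ref{lem:cw_exact_dl} to restate the task as counting, for every $\kappa\in[0,k]$, the valid distance-labelings $\dl$ of $G$ with $|\dl^{-1}(0)|=\kappa$: a $(k,r)$-center of size $k$ exists iff this number is positive for some $\kappa\le k$, and the counting version asks precisely for these numbers. I would assume, in polynomial time, that $T_G$ is given in the normal form used above (rename of the last label into its predecessor, the stated shape of union nodes), so that the recurrences for introduce/join/rename/union nodes together with the fore-tracking rule apply verbatim. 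By the invariant, the wanted counts are read off at the root node $z$ by summing $D_z[\kappa,s_1,\dots,s_w]$ over $\kappa\le k$ and over all state-tuples in which every $s_l$ has the form $(\cdot,2)$ --- the fore-tracking having ensured that such a state certifies genuine satisfaction via validity condition 1) or 2), rather than a spurious dependence on a join that never occurs.

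The inductive step for each node type is exactly the itemized verification given before the theorem statement: one checks that $Q(\cdot,\cdot)$, $M(\cdot)$, the state-change transformation, and the liveness test capture precisely which child configurations combine into a given target configuration and that only valid functions $\dl$ are ever counted. Given the explicit definitions already on the page this is a finite case analysis per node type, which I would treat as routine.

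It remains to bound the running time. Each table $D_t$ has $O(k)\cdot(3r+1)^{\cw}$ entries, each an integer of polynomially many bits (there are at most $(r+1)^n$ labelings of $G_t$), so $D_t$ has size $O^*((3r+1)^{\cw})$. Introduce nodes cost $O(1)$. At a join node $\eta(a,b)$ the coordinates $v_a,v_b$ never change and only $u_a,u_b$ may rise to $2$, so each target entry has $|Q(s'_a,s'_b)|=O(1)$ preimages and the table is computed in $O^*((3r+1)^{\cw})$ time. At a rename node the child uses $w+1\le\cw$ labels, so its table has at most $k\cdot(3r+1)^{w+1}$ entries; each target entry is a sum of $|M(s'_w)|=O(r)$ of these, and since $(3r+1)^{w}\cdot r\le(3r+1)^{w+1}$ the total work is again $O^*((3r+1)^{\cw})$.

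The only node at which the direct recurrence is too slow is the union node, whose sum ranges over $M(s'_1)\times\cdots\times M(s'_i)$ and may have $r^{\Theta(\cw)}$ terms. Here I would rely on the state-change machinery: it is a fast-subset-convolution-style transform over the product of the per-label state posets, each of size $3r+1=3\times(r+1)-2$, whose correctness rests on the fact that ``$(s_a,s_b)$ combines into $s$'' is exactly membership of $(s_a,s_b)$ in $M(s)$ and that $M$ behaves like the join operation of that poset, so that on the transformed tables $D^*$ the union becomes the entrywise product $D^*_t[\kappa,\vec s]=\sum_{\kappa_1}D^*_{t-1}[\kappa_1,\vec s]\,D^*_{t-2}[\kappa-\kappa_1,\vec s]$. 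The forward transform $D\to D'\to D^*$ and its inverse are each $O(\cw)$ label-wise passes of $O^*((3r+1)^{\cw})$ additions (resp.\ subtractions), and the entrywise product, convolved over $\kappa$, costs $O(k^2)(3r+1)^{\cw}$; the inverse transform exists precisely because we are counting over $(\mathbb{Z},+,\cdot)$, which admits subtraction, rather than over the min-plus semiring, and the fore-tracking is folded in by precomputing, in one pass over $T_G$, which label-sets are live at which node and suppressing at a non-live label exactly the transform additions (equivalently, the $M(\cdot)$-combinations) that would fabricate a $(\cdot,2)$-state depending on a future join. As $T_G$ has polynomially many nodes, summing over all of them yields the claimed $O^*((3r+1)^{\cw})$ bound. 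The main obstacle is thus the union step: proving that the state-change transform faithfully realises the $M(\cdot)$-combination and that interleaving the liveness test with it preserves correctness; once this and the invariant (\ref{cw_cor_1})--(\ref{cw_cor_8}) are granted, everything else is bookkeeping.
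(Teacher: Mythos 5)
Your proposal is correct and follows essentially the same route as the paper: correctness is delegated to the already-established invariant (\ref{cw_cor_1})--(\ref{cw_cor_8}) and read off at the root from the $(\cdot,2)$-states, while the running time is dominated by the union nodes, handled via the $D\to D'\to D^*$ state-change transform (with fore-tracking folded in) exactly as in the paper's argument. Your per-node accounting ($O(1)$ preimages at joins, $O(r)$ at renames, $O(k)$ per transformed entry at unions) matches the paper's stated bounds.
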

\begin{proof}
 Correctness of the dynamic programming algorithm is given above, while for the final computation at the root $z$ of $T_G$, all entries $D_z[k,s_1,\dots,s_w]$ with $u_l=2,\forall l\in[1,w]$ and $w\in[1,\textrm{cw}]$ can be considered for identification of the number of valid functions $\dl$, or $(k,r)$-centers of graph $G_z=G$. For the decision version of the problem, the algorithm can output YES if any of these entries' value is $>0$ and NO otherwise.

 For the algorithm's complexity, there are at most $k(3r+1)^{\textrm{cw}}$ entries in each table $D_t$ of any node $t$ and any entry of the $O(n\cdot\textrm{cw}^2)$ join/rename nodes can be computed in $O(r)$ time, while for the $O(n)$ union nodes, table transformations require $O^*(k\cdot\textrm{cw}\cdot(3r+1)^{\textrm{cw}})$ time and any entry of the transformed tables can be computed in $O(k)$ time.
\end{proof}

\section{Vertex Cover, Feedback Vertex Set and Tree-depth}\label{vcsec}

\subsection{Parameterization by Vertex Cover and Feedback Vertex Set numbers}

In this section we first show that the edge-weighted variant of the \KC\ problem
parameterized by $\vc+k$ is W[1]-hard, and more precisely, that the problem
does not admit a $n^{o(\vc+k)}$ algorithm under the ETH (Theorem~\ref{thm:W_hard_VCk}).  We give a reduction
from \textsc{$k$-Multicolored Independent Set}. This is a well-known W[1]-hard problem that cannot be solved in $n^{o(k)}$
under the ETH \cite{CyganFKLMPPS15}.

\subparagraph{Construction:} Given an instance $[G=(V,E),k]$ of
\textsc{$k$-Multicolored Independent Set}, we will construct an instance
$[G'=(V',E'),k]$ of edge-weighted \textsc{$(k,r)$-Center}, where $r=4n$. First,
for every set $V_i\subseteq V$, we create a set $P_i\subseteq V'$ of $n$
vertices $p_l^i,\forall l\in[1,n]$, $\forall i\in[1,k]$ (that directly
correspond to the vertices of $V_i$) and two \emph{guard} vertices
$g_i^1,g_i^2$, attaching them to all vertices in $P_i$ by edges of weight
$r=4n$.  Next, for each $i\in[1,k]$, we create another pair of vertices
$a_i,b_i$ and connect $a_i$ to each vertex $p_l^i$ by an edge of weight $n+l$,
while $b_i$ is connected to each vertex $p_l^i$ by an edge of weight $2n-l+1$.
Now each $P_i$ contains all vertices $a_i,b_i,g_i^1,g_i^2$ and each
$p_l^i,\forall l\in[1,n]$.  

Finally, for each edge $e\in E$ with endpoints in $V_{i_1}, V_{i_2}$ and
$i_1\neq i_2$ (not part of a clique), we create a vertex $u_e$ that we connect
to vertices $a_{i_1},b_{i_1}$ and $a_{i_2},b_{i_2}$. We set the weights of
these edges as follows: suppose that $e$ connects the $j_1$-th vertex of
$V_{i_1}$ to the $j_2$-th vertex of $V_{i_2}$. Then we set $w(u_e, a_{i_1}) =
3n - j_1+1$, $w(u_e, b_{i_1}) = 2n+j_1$, $w(u_e, a_{i_2}) = 3n - j_2+1$, $w(u_e,
b_{i_2}) = 2n+j_2$. 
\begin{figure}[htbp]
 \centerline{\includegraphics[width=120mm]{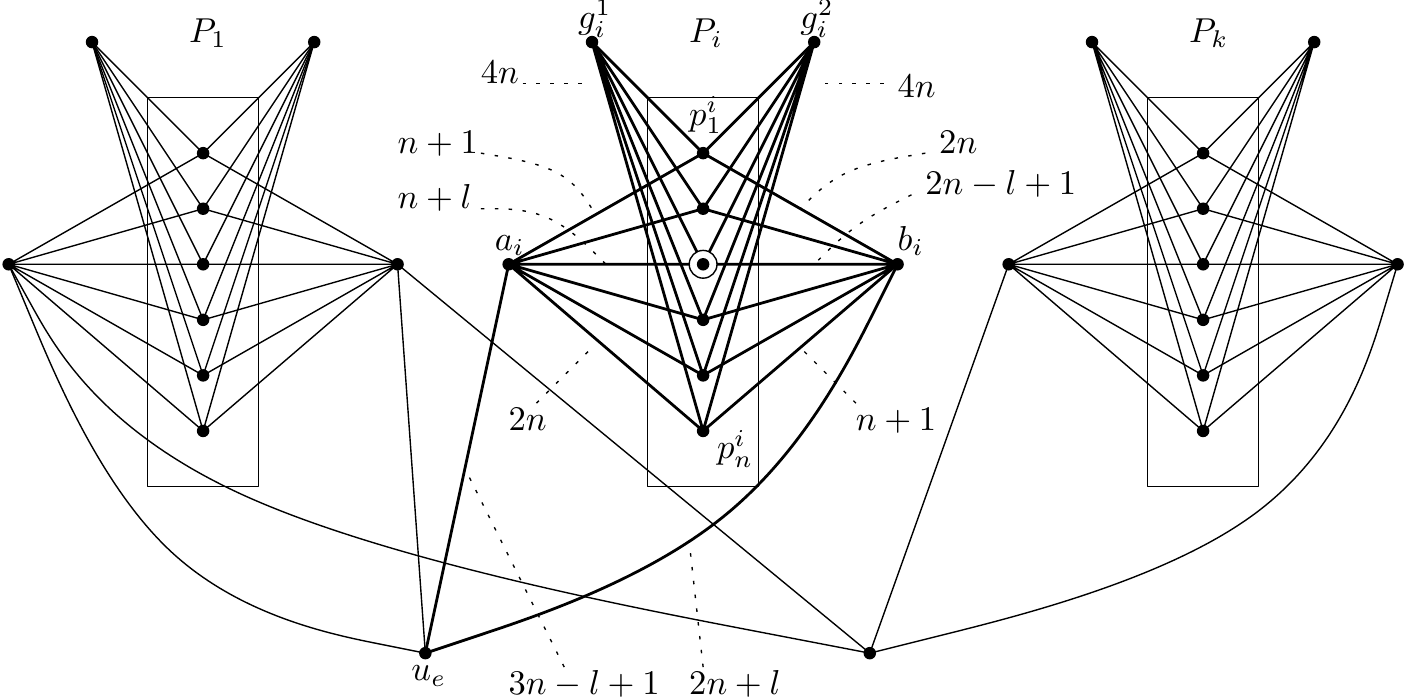}}
 \caption{A general picture of graph $G'$. The circled vertex is $p_l^i$, while dotted lines match edges to weights.}
 \label{fig:Whard_VC}
\end{figure}
\begin{lemma}\label{thm:W_FWD}
 If $G$ has a $k$-multicolored independent set, then $G'$ has a $(k,4n)$-center.
\end{lemma}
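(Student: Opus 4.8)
The plan is to read off the center-set directly from the independent set. Suppose $G$ has a $k$-multicolored independent set $S$, and for each $i\in[1,k]$ let the $c_i$-th vertex of $V_i$ be the one lying in $S$. I would take $K=\{p_{c_i}^i : i\in[1,k]\}\subseteq V'$, which has size exactly $k$, and then verify that every vertex of $G'$ is within distance $r=4n$ of some $p_{c_i}^i$. It is convenient to split the vertices of $G'$ into the ``internal'' vertices of the $P_i$'s (including $a_i,b_i,g_i^1,g_i^2$) and the edge-vertices $u_e$.

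For the internal vertices the bounds hold no matter which independent set we started from, and I would just check the few relevant edge weights: each guard $g_i^1,g_i^2$ is joined to $p_{c_i}^i$ by a single edge of weight exactly $4n$; each $a_i$ is joined to $p_{c_i}^i$ by an edge of weight $n+c_i\le 2n$ and each $b_i$ by an edge of weight $2n-c_i+1\le 2n$; and for every $l\in[1,n]$ the two-edge path $p_{c_i}^i - a_i - p_l^i$ has total weight $(n+c_i)+(n+l)\le 4n$. Hence all vertices of $\bigcup_{i}P_i$ together with the guards are covered by $K$.

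The crux is covering an edge-vertex $u_e$, and this is the step I expect to be the main obstacle, since it is the only place where the independence of $S$ and the complementary weightings through $a_i$ and $b_i$ are used. Fix $e\in E$ joining the $j_1$-th vertex of $V_{i_1}$ to the $j_2$-th vertex of $V_{i_2}$ (with $i_1\neq i_2$; edges inside a clique $V_i$ produce no $u_e$ vertex and are irrelevant anyway, as $S$ picks one vertex per $V_i$). Since $G[S]$ is independent, $e$ has an endpoint not in $S$; say without loss of generality it is the $V_{i_1}$-endpoint, i.e.\ $c_{i_1}\neq j_1$. I would cover $u_e$ from $p_{c_{i_1}}^{i_1}$ by one of the two length-two paths through $a_{i_1}$ or through $b_{i_1}$: the path through $a_{i_1}$ has weight $(n+c_{i_1})+(3n-j_1+1)=4n+1+(c_{i_1}-j_1)$, and the path through $b_{i_1}$ has weight $(2n-c_{i_1}+1)+(2n+j_1)=4n+1-(c_{i_1}-j_1)$. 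These two quantities sum to $8n+2$ and differ only by the $\pm(c_{i_1}-j_1)$ offset, so because $c_{i_1}\neq j_1$ exactly one of them is $\le 4n$ (the $a_{i_1}$-path if $c_{i_1}<j_1$, the $b_{i_1}$-path if $c_{i_1}>j_1$); in either case $d(p_{c_{i_1}}^{i_1},u_e)\le 4n$. Combining this with the internal bounds shows $K$ is a $(k,4n)$-center of $G'$, completing the argument; the only non-routine point is the telescoping identity just used, the rest being a tally of fixed edge weights.
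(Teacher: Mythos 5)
Your proposal is correct and follows essentially the same route as the paper's proof: select the vertices $p_{c_i}^i$ corresponding to the independent set, verify the internal vertices of each $P_i$ are covered via the fixed edge weights, and use independence to find, for each $u_e$, a selected non-endpoint whose path through $a_{i_1}$ or $b_{i_1}$ (depending on the sign of $c_{i_1}-j_1$) has weight at most $4n$. The telescoping computation you highlight is exactly the paper's key step.
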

\begin{proof}
 Let $I\subseteq V$ be a multicolored independent set in $G$ of size $k$ and $v_{l_i}^i$ denote the vertex selected from each $V_i$, or $I\coloneqq\{v_{l_1}^1,\dots,v_{l_i}^i,\dots,v_{l_k}^k\}$. Let $S\subseteq V'$ be the set of vertices $p_{l_i}^i$ in $G'$ that correspond to each $v_{l_i}^i$. We claim $S$ is a $(k,r)$-center of $G'$: since one vertex is selected from each $P_i$ in $G'$, all the guards $g_i^1,g_i^2$ and vertices $a_i,b_i$ are within distance $r=4n$ from selected vertex $p_{l_i}^i$ for some $l_i\in[1,n]$ and every $i\in[1,k]$, as well as all other vertices $p_{l'}^i$. For vertices $u_e$, observe that selected vertex $p_{l_i}^i$ is at distance $n+l_i+3n-l_i+1=4n+1$ from $u_e$, through either $a_i$ or $b_i$, if its corresponding vertex in $G$ is an endpoint of $e$, or $e=(v_{l_i}^i,w)$ for some $w\in V\setminus V_i$.
 As $I$ is an independent set of $G$, however, there can be no edge between any two selected vertices $v_{l_i}^i,v_{l_j}^j\in I$ with $i\neq j$ and thus, we know that for every $u_e$ at least one of the vertices $p_{l_i}^i,p_{l_j}^j$ selected for $S$ from the two components $P_i,P_j$ to which $u_e$ is connected will not be one of the vertices corresponding to the endpoints of edge $e$, or $e\neq(v_{l_i}^i,v_{l_j}^j)$. Let $e\coloneqq(v_{x}^i,v_{y}^j)$, with $x\neq l_i$ and/or $y\neq l_j$. Assuming without loss of generality that $x\neq l_i$ is the case, we have that the distances from $u_e$ to $p_{l_i}^i$ are $n+l_i+3n-x+1$ via $a_i$ and $3n-l_i+n+x$ via $b_i$. If $x>l_i$ then distance via $a_i$ is $4n+1+l_i-x\le4n$, while if $x<l_i$ the distance via $b_i$ is $4n+x-l_i<4n$.
\end{proof}
\begin{lemma}\label{thm:W_BWD}
 If $G'$ has a $(k,4n)$-center, then $G$ has a $k$-multicolored independent set.
\end{lemma}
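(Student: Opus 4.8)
The plan is to show that a $(k,4n)$-center $K$ of $G'$ is essentially forced: it must contain exactly one vertex of the form $p_{l_i}^i$ from each gadget $P_i$, and then the set $I=\{v_{l_1}^1,\dots,v_{l_k}^k\}$ of the corresponding vertices of $G$ is the desired $k$-multicolored independent set.

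The first step is to determine which vertices can cover the guards $g_i^1,g_i^2$. Since the only edges incident to a guard are the weight-$4n$ edges to the vertices $p_1^i,\dots,p_n^i$, a vertex $v$ lies within distance $4n$ of $g_i^1$ only if $v=g_i^1$ or $v=p_l^i$ for some $l$ (and in the latter case $d(v,g_i^1)=4n$ exactly); in particular $d(g_i^1,g_i^2)=8n$ and $d(a_i,g_i^1),d(b_i,g_i^1)\ge 5n+1$, so neither guard helps cover the other and $a_i,b_i$ cover no guard at all. Hence $K$ must contain at least one vertex of $P_i$ for every $i$, and if it contains exactly one, that vertex must be some $p_{l_i}^i$ (choosing a single guard, or $a_i$, or $b_i$, leaves a guard uncovered). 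As the sets $P_i$ are pairwise disjoint, $V'=\big(\biguplus_{i\in[1,k]}P_i\big)\uplus\{u_e:e\in E\}$, and $|K|=k$, an averaging argument forces $|K\cap P_i|=1$ for all $i$, no $u_e$ in $K$, and thus $K=\{p_{l_1}^1,\dots,p_{l_k}^k\}$ for some indices $l_i\in[1,n]$.

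The second step is to use the weights around an edge-vertex $u_e$ to prove independence. Fix $u_e$ with $e$ joining the $x$-th vertex of $V_i$ and the $y$-th vertex of $V_j$, $i\neq j$. Only $p_{l_i}^i$ and $p_{l_j}^j$ lie in $(P_i\cup P_j)\cap K$, every center in another $P_{i'}$ is at distance far more than $4n$ from $u_e$, and no $u_{e'}$ is a center, so $u_e$ is covered iff $d(u_e,p_{l_i}^i)\le 4n$ or $d(u_e,p_{l_j}^j)\le 4n$. Computing the two relevant two-edge paths (through $a_i$, of length $(3n-x+1)+(n+l_i)$, and through $b_i$, of length $(2n+x)+(2n-l_i+1)$) gives $d(u_e,p_{l_i}^i)=4n+1-|x-l_i|$, which exceeds $4n$ exactly when $x=l_i$; symmetrically $d(u_e,p_{l_j}^j)>4n$ exactly when $y=l_j$. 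Hence $u_e$ is uncovered precisely when $e$ joins the two chosen vertices $v_{l_i}^i$ and $v_{l_j}^j$. Since $K$ is a valid $(k,4n)$-center no such $e$ can exist, so $I$ picks one vertex per colour class and spans no edge of $G$, i.e.\ it is a $k$-multicolored independent set.

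The calculations are routine substitutions of the assigned weights; the steps that need care are the two ``these are the only vertices that can cover it'' claims — one has to rule out any shortcut path to a guard or to $u_e$ through the rest of $G'$ — together with the averaging step that pins $K$ down to a single $p$-vertex in each gadget. The rest is bookkeeping.
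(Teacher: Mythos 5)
Your proof is correct and follows essentially the same route as the paper's: the two guards per gadget force exactly one center, necessarily some $p_{l_i}^i$, in each $P_i$, and the weights around $u_e$ show it is left uncovered precisely when both chosen indices match the endpoints of $e$. Your version is slightly more explicit (deriving the general distance $4n+1-|x-l_i|$ and ruling out alternative covering paths), but the argument is the same.
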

\begin{proof}
 Let $S\subseteq V'$ be the $(k,4n)$-center in $G'$. As $|S|=k$ and all guard vertices $g_i^1,g_i^2,\forall i\in[1,k]$ must be within distance $4n$ from some selected vertex, set $S$ must contain exactly one vertex from each $P_i$, or $S=\{p_{l_1}^1,\dots,p_{l_k}^k\}$ for some $l_i\in[1,n]$ for each $i\in[1,k]$. We let $I\subseteq V$ be the set of vertices $v_{l_i}^i\in V_i,i\in[1,k]$ that correspond to each $p_{l_i}^i$ and claim that $I$ is an independent set in $G$: suppose there is an edge $e=(v_{l_i}^i,v_{l_j}^j)\in E, i\neq j$ and $v_{l_i}^i,v_{l_j}^j\in I$. Then there must be a vertex $u_e\in G'$ with edges to $a_i,b_i\in P_i$ and $a_j,b_j\in P_j$, where we have $p_{l_i}^i,p_{l_j}^j\in S$. The distance from $u_e$ to $p_{l_i}^i$ is $3n-l_i+1+n+l_i>4n$, via either $a_i$ or $b_i$, while the distance from $u_e$ to $p_{l_j}^j$ is also $3n-l_j+1+n+l_j>4n$, via either $a_j,b_j$, meaning $u_e$ is not covered by $S$, giving a contradiction.
\end{proof}
\begin{theorem}\label{thm:W_hard_VCk} The weighted \KC\ problem is W[1]-hard
parameterized by $\vc+k$. Furthermore, if there is an algorithm for weighted
\KC\ running in time $n^{o(\vc+k)}$ then the ETH is false.  \end{theorem}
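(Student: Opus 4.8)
The plan is to read off Theorem~\ref{thm:W_hard_VCk} from the construction above together with Lemma~\ref{thm:W_FWD} and Lemma~\ref{thm:W_BWD}, which already show that $[G,k]$ is a yes-instance of \textsc{$k$-Multicolored Independent Set} if and only if $[G'=(V',E'),k]$ is a yes-instance of weighted \KC\ with $r=4n$ (where $n=|V_i|$). So what remains is only to check that this is an efficient parameterized reduction: that $G'$ has small vertex cover, that it is built in polynomial time, and that the standard ETH bookkeeping goes through. Note that here the parameter $r=4n$ is unbounded, but this is irrelevant, since the target problem is parameterized by $\vc+k$, not by $r$; we do need the edge weights to be polynomially bounded so that the reduction is genuinely polynomial, and indeed every weight in the construction is at most $3n+1=O(n)$.

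First I would bound the vertex cover of $G'$. I claim that $C=\{a_i,b_i,g_i^1,g_i^2 : i\in[1,k]\}$ is a vertex cover of size $4k$: every edge inside a component $P_i$ is incident to one of $a_i,b_i,g_i^1,g_i^2$, every edge incident to a clause vertex $u_e$ goes to some $a_{i}$ or $b_{i}$, and $G'$ has no other edges. Hence $\vc(G')\le 4k$, so $\vc(G')+k\le 5k$ is linear in $k$. Moreover the construction is clearly polynomial-time, with $|V'|=k(n+4)+|E'|=O(|V(G)|^2)$ since $|E'|\le|E(G)|\le\binom{|V(G)|}{2}$ and $|V(G)|=kn$.

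Given these facts the conclusion is immediate. \textsc{$k$-Multicolored Independent Set} is W[1]-hard parameterized by $k$ \cite{CyganFKLMPPS15}, and the reduction maps it to weighted \KC\ with new parameter $\vc(G')+k=O(k)$, so weighted \KC\ is W[1]-hard parameterized by $\vc+k$. For the fine-grained bound: \textsc{$k$-Multicolored Independent Set} admits no algorithm running in time $|V(G)|^{o(k)}$ unless the ETH fails; composing such a hypothetical $n^{o(\vc+k)}$ algorithm for weighted \KC\ with the reduction would solve \textsc{$k$-Multicolored Independent Set} in time $\big(O(|V(G)|^2)\big)^{o(k)}=|V(G)|^{o(k)}$, a contradiction.

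I do not expect any genuine obstacle here: all of the difficulty has been front-loaded into the gadget design and into Lemmas~\ref{thm:W_FWD} and~\ref{thm:W_BWD}. The only point that merits a moment's care is the vertex cover estimate — specifically that the guard vertices $g_i^1,g_i^2$ \emph{must} be placed in $C$ (their $n$ incident edges cannot be covered without taking essentially all of $P_i$) but that nothing beyond $\{a_i,b_i,g_i^1,g_i^2\}_{i\in[1,k]}$ is needed — which follows at once from the observation that $P_i\setminus\{a_i,b_i,g_i^1,g_i^2\}$ and the set of clause vertices $\{u_e\}$ are each independent in $G'$.
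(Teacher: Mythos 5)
Your proposal is correct and follows essentially the same route as the paper: exhibit $\{a_i,b_i,g_i^1,g_i^2\}_{i\in[1,k]}$ as a vertex cover of size $4k$, observe that $k$ is preserved, and combine this with Lemmas~\ref{thm:W_FWD} and~\ref{thm:W_BWD} to get both the W[1]-hardness and the $n^{o(\vc+k)}$ ETH lower bound. The extra bookkeeping you supply (polynomial size of $G'$, polynomially bounded weights, the composition argument for ETH) is all sound and merely makes explicit what the paper leaves implicit.
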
 
\begin{proof} Observe that the set $Q\subset V'$ that includes all guard
vertices $g_i^1,g_i^2$ and $a_i,b_i$, $\forall i\in[1,k]$, is a vertex
cover of $G'$, as all edges have exactly one endpoint in $Q$.  This means
$\vc(G')\le4k$. In addition, parameter $k$ remains the same in both the
instances of \textsc{$k$-Multicolored Independent Set} and
\textsc{$(k,r)$-Center}. Thus, the construction along with Lemmas
\ref{thm:W_FWD} and \ref{thm:W_BWD}, indeed imply the statement.  \end{proof}
Using essentially the same reduction we also obtain a similar hardness result for unweighted \KC\ parameterized by $\fvs$.
\begin{corollary}\label{thm:W_hard_FVSk} The \KC\ problem is W[1]-hard when
parameterized by $\fvs+k$. Furthermore, if there is an algorithm for
\KC\ running in time $n^{o(\fvs+k)}$, then the ETH is false. \end{corollary}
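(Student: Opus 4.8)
The plan is to reuse the reduction behind Theorem~\ref{thm:W_hard_VCk} from \textsc{$k$-Multicolored Independent Set} almost verbatim, turning the constructed graph $G'$ into an \emph{unweighted} one whose feedback vertex set is still $O(k)$. Every weight in $G'$ is $O(n)$ (at most $r=4n$), so I would (i) replace each weighted edge of weight $w$ by an unweighted path on $w-1$ new internal vertices, which preserves all pairwise distances, and (ii) replace the weight-$r$ edges attaching each guard vertex $g_i^1,g_i^2$ to all of $P_i$ by an unweighted ``guard'' gadget on input set $\{p_1^i,\dots,p_n^i\}$ --- for instance two copies of the gadget $\hat{T}_n$ of Section~\ref{sec_cw_LB} (which is a tree when $r$ is odd, so one scales all weights by two and takes $r=8n+1$ to arrange this), or a single fresh vertex joined to all of $P_i$ by length-$\tfrac{r}{2}$ paths together with a pendant path. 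After these substitutions, deleting $\{a_i,b_i:i\in[1,k]\}$ together with $O(1)$ vertices per guard gadget leaves a disjoint union of paths and subdivided stars, so $\fvs(G')=O(k)$ while $k$ is unchanged; the ETH lower bound and the W[1]-hardness then follow exactly as in Theorem~\ref{thm:W_hard_VCk} via the $n^{o(k)}$ lower bound for \textsc{$k$-Multicolored Independent Set}.

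The substantive work is to re-establish the analogues of Lemmas~\ref{thm:W_FWD} and~\ref{thm:W_BWD} for this unweighted graph, which amounts to two bookkeeping tasks. For the forward direction one must check that \emph{every} vertex introduced by unweighting --- the subdivision vertices on the $P_i$-to-$a_i/b_i$ paths, the guard-gadget vertices, and the subdivision vertices around each $u_e$ --- lies within distance $r$ of the center placed in its component $P_i$; the first two are immediate (the endpoints $a_i,b_i$ are within $\approx 2n<r$ of every $p_l^i$, and Lemma~\ref{guard_gadget_proof} handles the gadgets). For the backward direction one must check that none of the new gadgets creates a path between two vertices $p_l^i,p_{l'}^i$ short enough to combine with the $a_i/b_i$-paths and let some center reach a vertex $u_e$ for which its $P_i$-vertex is an endpoint of $e$; that is, the property ``a center covers $u_e$ iff it does not correspond to an endpoint of $e$'' must survive unweighting.

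The main obstacle is the edge gadgets. In the weighted reduction $u_e$ carries no neighbour that needs covering, but after unweighting the paths incident to $u_e$ contribute genuine degree-two vertices, and a naive subdivision leaves the last internal vertices of the $u_e$--$a_i$ / $u_e$--$b_i$ paths out of range precisely when the chosen center ``overshoots'' an endpoint of $e$: it still covers $u_e$, but only from distance $\approx r$, so a vertex one step further along is left just outside the ball. I expect to resolve this by realizing each $u_e$ not as a single vertex but as a short gadget, shaped so that these near-$u_e$ vertices are picked up from the $P_i$- or $P_j$-side --- in the same spirit as the clique gadget $\hat{X}_N$ of Section~\ref{sec_cw_LB} is used to force coverage without introducing shortcuts --- and with its only cycle-creating edges attached to the already-deleted vertices $a_i,b_i$, so that the $O(k)$ bound on $\fvs(G')$ is unaffected. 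Once the edge gadget is pinned down and its distances checked in both directions, the remainder is a routine re-run of the proof of Theorem~\ref{thm:W_hard_VCk}.
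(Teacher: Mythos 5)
Your overall route is the paper's: reuse the reduction of Theorem~\ref{thm:W_hard_VCk}, realize the weights by unweighted paths, and observe that $\{a_i,b_i,g_i^1,g_i^2 : i\in[1,k]\}$ (plus $O(1)$ vertices per replaced gadget) becomes a feedback vertex set of size $O(k)$, so the $n^{o(k)}$ bound for \textsc{$k$-Multicolored Independent Set} transfers. The paper's own proof is a four-line sketch that subdivides every weighted edge and asserts Lemma~\ref{thm:W_FWD} ``goes through unchanged''; you are right that this is not automatic, since the forward direction must now also cover the subdivision vertices, and plain subdivision in fact fails in two places. First, the midpoint of a subdivided guard edge $g_{i}^1 p_l^{i}$ with $l\neq l_{i}$ is at distance $2n$ from both $g_{i}^1$ and $p_l^{i}$, hence at distance at least $4n+2$ from the unique center $p_{l_{i}}^{i}$ of its component; your replacement of the guard attachments by $\hat{T}_n$-type gadgets (whose vertices are all within $r$ of any single input, by Lemma~\ref{guard_gadget_proof}) disposes of this correctly. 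Second, on the $u_e$-paths: if $D_a,D_b$ are the two route-lengths from the center $p_{l}^{i_1}$ to $u_e$ (they sum to $2r+2$), the internal vertex at distance $|l-j_1|$ from $u_e$ on the longer route is at distance exactly $r+1$ from $p_{l}^{i_1}$ via both routes, and is rescued from the $P_{i_2}$ side only when $|l'-j_2|\ge |l-j_1|+1$ for the center $p_{l'}^{i_2}$ there --- which an independent set does not guarantee.

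The gap in your proposal is this second point: you diagnose it but never construct the edge gadget, and ``I expect to resolve this by realizing each $u_e$ as a short gadget'' is precisely the one non-routine step of the argument, so as written the proof is incomplete. A gadget is in fact unnecessary; a one-unit reweighting suffices. Set $w(u_e,a_{i_1})=3n-j_1$ and $w(u_e,b_{i_1})=2n+j_1-1$ (likewise for $i_2$) and attach a single pendant vertex $u_e'$ to $u_e$ before subdividing. The two route-lengths from $p_{l}^{i_1}$ to $u_e$ become $4n+(l-j_1)$ and $4n-(l-j_1)$, summing to $2r$, so \emph{every} internal vertex of both paths is within distance $r$ of $p_{l}^{i_1}$ unconditionally (if one route overshoots $r$ by $m$ at that vertex, the other undershoots it by at least $m$), while $d(p_{l}^{i_1},u_e')=4n+1-|l-j_1|$ reproduces exactly the covering condition that $u_e$ satisfied in Lemmas~\ref{thm:W_FWD} and~\ref{thm:W_BWD}. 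Being a leaf, $u_e'$ does not affect the feedback vertex set, and the remainder of your argument then goes through.
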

\begin{proof}
We use the same reduction as in Theorem \ref{thm:W_hard_VCk}, except that we
replace all weighted edges by unweighted paths through new vertices in such a
way that distances between original vertices are preserved. It is not hard to
see that any set that was a vertex cover of the previous graph is a feedback
vertex set of the new graph, hence $\fvs = O(k)$. Lemma \ref{thm:W_FWD} goes
through unchanged, while for Lemma \ref{thm:W_BWD} it suffices to observe that,
because of the guard vertices, no valid solution can be using one of the new
vertices as a center.  \end{proof}
We now show that unweighted \KC\ admits an algorithm running in
time $O^*(5^\vc)$, in contrast to its weighted version (Theorem \ref{thm:W_hard_VCk}). We devise an algorithm that operates in two
stages: first, it guesses the intersection of the optimal solution with the
optimal vertex cover, and then it uses a reduction to \textsc{Set Cover} to
complete the solution. 
\begin{theorem}\label{thm:vc_algo}
 Given graph $G$, along with $k,r\in\mathbb{N}^+$ and a vertex cover of size
$\vc$ of $G$, there exists an algorithm solving unweighted \KC\ in
$O^*(5^{\vc})$ time.  \end{theorem}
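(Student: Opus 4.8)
The plan is the two-stage \emph{guess-then-cover} scheme announced just before the statement. First I would fix a vertex cover $C$ of $G$ with $|C|=\vc$ and set $I=V\setminus C$, which is an independent set with $N(v)\subseteq C$ for every $v\in I$. As preprocessing, compute all pairwise distances in $G$ and the balls $B_\rho(x)=\{y:d(x,y)\le\rho\}$ for $\rho\in\{r-1,r\}$; this is polynomial. The one structural fact that makes the independent side tractable is that it is ``slaved'' to the cover: if $v\in I$ and $v\notin K$, then any shortest path from $v$ to $K$ leaves $v$ through a neighbour in $C$, so $d(v,K)=1+\min_{c\in N(v)}d(c,K)$; hence such a $v$ is covered if and only if some $c\in N(v)$ has $d(c,K)\le r-1$. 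Symmetrically, a center placed at $v\in I$ with $N(v)=S$ covers exactly $\{v\}\cup\bigcup_{c\in S}B_{r-1}(c)$. So the whole instance is governed by the distance profile of $K$ on the $\vc$ vertices of $C$.

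Stage 1 enumerates $K_C:=K\cap C$ over the $2^{\vc}$ subsets of $C$. For a fixed $K_C$, a BFS from the set $K_C$ gives $d(x,K_C)$ for all $x$; let $W=\{c\in C:d(c,K_C)>r\}$ be the cover vertices still needing coverage, and let $J\subseteq I$ be the independent vertices with $d(v,K_C)>r$, i.e.\ all of whose neighbours are already at distance $\ge r$ from $K_C$. What remains is to choose $K_I\subseteq I$ with $|K_I|\le k-|K_C|$ so that every vertex of $W$ lies within distance $r$ of $K_I$ and every vertex of $J$ has a neighbour within distance $r-1$ of $K_I$; note that if $v\in J$ is itself put in $K_I$ then each of its neighbours is within distance $1\le r-1$ (for $r\ge 2$), so the two ways of covering a $J$-vertex collapse into this single condition.

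Stage 2 phrases this residual problem as a \textsc{Set Cover}-type instance carried by $C$ alone. Each candidate independent center $v\in I$ contributes two data: the set $B_r(v)\cap W$ of $W$-vertices it finishes, and the set $\bigl(\bigcup_{c\in N(v)}B_{r-1}(c)\bigr)\cap C$ of cover vertices it makes available at radius $r-1$ (and $I$-vertices with equal neighbourhoods are interchangeable, so only polynomially many candidates matter). Each $c\in C$ then occupies one of a constant number of roles: it is in $K_C$; or it is already safely covered by $K_C$ (and, according to whether $d(c,K_C)\le r-1$, already releases its $J$-neighbours); or it lies in $W$, and the \textsc{Set Cover} dynamic program records whether it is still uncovered, covered at radius exactly $r$, or covered at radius $\le r-1$ — this last distinction being what lets the $J$-conditions be verified. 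Running the subset dynamic program over the $\vc$ elements of $C$ with these $O(1)$ per-element states, and summing over the $2^{\vc}$ choices of $K_C$, yields the claimed $O^*(5^{\vc})$ bound; correctness follows from the structural fact above and the observation that every center may be assumed to lie in $C$ or to be one of the catalogued $I$-vertices.

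The hard part is exactly this bookkeeping on the independent side. A $J$-vertex is released ``for free'' only when one of its neighbours sits within distance $r-1$ — not $r$ — of the solution, so one must carry both a radius-$r$ demand (for $W$) and a radius-$(r-1)$ availability flag (for releasing $J$) through the dynamic program and check that this can be done with few enough per-element states to keep the base at $5$; one also has to argue that placing $I$-vertices in the solution ``for their own sake'' is subsumed by the same mechanism, and that restricting to distinct-neighbourhood classes keeps the family of covering sets polynomial. The degenerate case $r=1$ (plain \textsc{Dominating Set}) should be checked separately, since the $1\le r-1$ collapse fails there. Everything else — the distance and ball precomputation, the BFS from $K_C$, and the $O^*(2^{m})$ \textsc{Set Cover} dynamic program on $m=O(\vc)$ ground elements — is routine.
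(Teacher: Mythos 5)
Your proposal is correct in its architecture and rests on the same structural facts as the paper (an $I$-vertex outside the solution is covered iff some neighbour in $C$ is at distance $\le r-1$ from the solution; a center in $I$ covers exactly $\{v\}\cup\bigcup_{c\in N(v)}B_{r-1}(c)$; the case $r=1$ is split off as \textsc{Dominating Set}), but it takes a genuinely different route to the $5^{\vc}$ bound. The paper enumerates a \emph{three-way} partition $C=S\cup R\cup Q$ ($3^{\vc}$ choices), where $S=K\cap C$, $Q$ is the set of cover vertices guessed to end up at distance exactly $r$ from the \emph{final} solution, and $R$ the rest; the ``$\le r-1$ versus $=r$'' distinction that you push into the dynamic program is thus resolved by the outer guess, the $J$-condition collapses to the polynomial check that no $v\in I$ has $N(v)\subseteq Q$, and the residual problem becomes a vanilla binary \textsc{Set Cover} on $R\cup Q$ solved in $O^*(2^{|R\cup Q|})$, giving $\sum_i\binom{\vc}{i}2^{\vc-i}\cdot 2^{\vc-i}=5^{\vc}$. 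You instead enumerate only $K_C=K\cap C$ ($2^{\vc}$ choices) and carry the distinction as a third per-element coverage level inside the DP. The accounting you flag as ``the hard part'' does work out, but not via the naive product $2^{\vc}\cdot 3^{\vc}$ (which would be $6^{\vc}$): the levelled DP only needs $3$ states for elements of $W$ and $2$ states for elements at distance exactly $r$ from $K_C$ that neighbour a $J$-vertex, all of which live in $C\setminus K_C$, and the release condition for $J$ is an upward-closed predicate on the final level vector, so the total is $\sum_{K_C}3^{\vc-|K_C|}=4^{\vc}$ — in fact slightly better than the paper's bound. Two small corrections: the number of distinct neighbourhood classes in $I$ can be exponential in $\vc$, so your ``only polynomially many candidates matter'' should simply be ``at most $n$ candidates, one per vertex of $I$''; and you should state explicitly that the DP's state space excludes $K_C$ and the already-covered part of $C$, since that exclusion is exactly what keeps the base below $5$.
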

\begin{proof}
Let $C$ be the given vertex cover of $G$ and $I=V\setminus C$ be the remaining
independent set. We assume without loss of generality that the graph is
connected (otherwise each component can be handled separately). We also assume
that $r\ge 2$, otherwise the problem reduces to \textsc{Dominating Set} which
is already known to be solvable faster than $O^*(5^\vc)$.

Let $K$ be some (unknown) optimal solution. Our algorithm first guesses a
partition of $C$ into three sets $C=S\cup R \cup Q$ such that $S=K\cap C$, $Q$
contains the vertices of $C$ which are at distance exactly $r$ from $K$, and
$R$ the rest (which are at distance $<r$ from $K$). Our algorithm first guesses
this partition by trying out all $3^\vc$ possibilities.

Suppose we are given a partition $C=S\cup R \cup Q$ as above. We would like to
check if this is the correct partition and then find a set $Z\subseteq I$ such
that $S\cup I$ is an optimal solution.  First, we verify that all vertices of
$Q$ are at distance $\ge r$ from $S$ (otherwise we already know this is not a
correct partition).  Second, we check if there exists $v\in I$ such that
$N(v)\subseteq Q$. In such a case, we again know that this is not a correct
partition, since such a $v$ would need to be included in $K$, which would imply
that its neighbors in $Q$ are at distance $<r$ from $K$. We can therefore
assume that all vertices in $I$ have some neighbor in $S\cup R$. 

We now formulate an instance of \textsc{Set Cover} as follows: the universe
contains all vertices of $R\cup Q$ which are not already covered, that is, all
vertices of $R$ which are at distance $\ge r$ from $S$ and all vertices of $Q$
which are at distance $>r$ from $S$. We construct a set for each vertex of
$v\in I$ and we place in it all vertices $u\in R$ such that $d(v,u)<r$ and all
vertices $u\in Q$ such that $d(v,u)\le r$. We solve this \textsc{Set Cover}
instance in time $O^*(2^{|R\cup Q|})$ using dynamic programming, and this gives
us a set $Z\subseteq I$. We output $S\cup Z$ as a solution to \KC. We observe
that this is always a valid solution because by construction all vertices of
$R$ are at distance $<r$ from $S\cup Z$, and all vertices of $I$ have a
neighbor in $S\cup R$. If we started with the correct partition of $C$ into
$S,R,Q$ then this solution is optimal because $K\cap I$ must give a feasible
set cover of the instance we constructed.

To analyze the running time, observe that if $|S|=i$, then
the algorithm goes through $2^{\vc-i}$ possible partitions of $C\setminus S$
into $R,Q$, and then for each partition spends $2^{\vc-i}n^{O(1)}$ to solve
\textsc{Set Cover}.  Hence, the algorithm runs in time $\sum_{i=0}^\vc {vc \choose
i} 4^{\vc-i}n^{O(1)} = \sum_{i=0}^\vc {\vc \choose i} 4^{i}n^{O(1)} = 5^\vc
n^{O(1)}$.  \end{proof}

\subsection{Parameterization by Tree-depth}

We now consider the unweighted version of \KC\ parameterized by~$\td$. Theorem~\ref{thm:W_hard_VCk} has established that weighted \KC\ is
W[1]-hard parameterized by $\vc$ (and hence also by $\td$ by Lemma
\ref{lem:relations}), but the complexity of unweighted \KC\ parameterized by~$\td$ does not follow from this theorem, since~$\td$ is incomparable
to $\fvs$. Indeed, we will show that \KC\ is FPT parameterized by~$\td$
and precisely determine its parameter dependence based on the ETH.

We begin with a simple upper bound argument. We will make use of the following
known fact on tree-depth (see also Corollary~6.1 of \cite{nesetril12}), while the algorithm then follows from the dynamic programming procedure of \cite{BorradaileL16} and the relationship between $r,\td$ and $\tw$: % (see Appendix \ref{append_TD}):
\begin{lemma}\label{lem:td-diam} For any graph $G=(V,E)$ we have $d(G)\le
2^{td+1}-2$, where $d(G)$ denotes the graph's diameter.  \end{lemma}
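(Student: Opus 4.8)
The plan is to prove the bound by induction on $\td$, in a slightly strengthened, localized form. First I would fix the convention that the height of a rooted forest counts edges on a longest root-to-leaf path, so $\td(K_1)=0$, and reduce to the connected case: for disconnected $G$ the diameter is the maximum over components and each component has tree-depth at most $\td(G)$; and for connected $G$ the optimal elimination forest is necessarily a single tree, since the completion of a forest with at least two components is disconnected and hence cannot contain a connected spanning subgraph. The statement I would actually prove by induction is: \emph{if $G$ is connected and $T$ is an elimination tree of $G$ of height $h$ rooted at $\rho$, then $d_G(v,\rho)\le 2^{h}-1$ for every $v\in V(G)$}. Granting this, the triangle inequality immediately gives $d(x,y)\le d(x,\rho)+d(\rho,y)\le 2(2^{\td}-1)=2^{\td+1}-2$.

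For the induction itself, the base case $h=0$ is $G=K_1$, where $d(\rho,\rho)=0=2^0-1$. For the inductive step I would delete $\rho$ from $T$, so that the children-subtrees have height at most $h-1$, and then establish the structural fact that every connected component $\mathcal C$ of $G-\rho$ lies entirely inside a single child-subtree: every edge of $G$ avoiding $\rho$ joins two $T$-comparable vertices, and vertices lying in distinct child-subtrees of $\rho$ are $T$-incomparable, so no such edge crosses between child-subtrees. Consequently the restriction of $T$ to $\mathcal C$ is an elimination tree of $\mathcal C$ of height at most $h-1$; let $\rho_{\mathcal C}$ be its root. Since $G$ is connected, $\rho$ must have a neighbour $z$ in $\mathcal C$ (any edge leaving $\mathcal C$ in $G$ that avoided $\rho$ would lie in $G-\rho$ and contradict $\mathcal C$ being a component). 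Applying the induction hypothesis inside $\mathcal C$ to both $v$ and $z$ yields $d_{\mathcal C}(v,\rho_{\mathcal C}),\,d_{\mathcal C}(z,\rho_{\mathcal C})\le 2^{h-1}-1$, hence $d_G(v,z)\le d_{\mathcal C}(v,z)\le 2^{h}-2$ and finally $d_G(v,\rho)\le d_G(v,z)+1\le 2^{h}-1$, the case $v=\rho$ being trivial.

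The only genuine content, and the step I would treat most carefully, is this structural lemma about $G-\rho$: that each of its components sits inside a single child-subtree (so that its tree-depth drops by one) and that connectivity of $G$ forces $\rho$ to have a neighbour in every such component. Everything else is routine bookkeeping. A minor point to keep an eye on is the height convention: with the ``number of vertices'' convention one obtains the marginally stronger bound $d(G)\le 2^{\td}-2$, which still implies the stated inequality, so nothing is lost either way.
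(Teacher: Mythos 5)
Your proof is correct. It is built on the same engine as the paper's: peel off the top vertex of an optimal elimination structure, observe that this vertex must have a neighbour in every component of what remains, and double the bound at each level (your recurrence $\mathrm{ecc}(\rho)\le 2(2^{h-1}-1)+1$ is the paper's $d(G)\le 2d(G\setminus u)+2$ in thin disguise, since your root $\rho$ is exactly the minimizer $u$ in the recursive definition $\td(G)=1+\min_u\td(G\setminus u)$ that the paper inducts on). The genuine difference is in packaging. The paper works with the recursive characterization of tree-depth and inducts directly on the diameter, so it never has to reason about elimination trees at all; the price is a slightly awkward case analysis on where the diameter-realizing pair lands. You instead strengthen the invariant to a root-eccentricity bound $d_G(v,\rho)\le 2^h-1$ and derive the diameter bound by the triangle inequality through $\rho$, which makes the induction cleaner; the price is that you must supply the structural lemma that each component $\mathcal{C}$ of $G-\rho$ lies inside a single child-subtree and that the restriction of $T$ to $\mathcal{C}$ is again a \emph{single} elimination tree of height at most $h-1$. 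You correctly flag this as the step needing care; note that the restriction being a tree (rather than a forest with several minimal elements) itself uses the connectivity of $\mathcal{C}$ together with the fact that edges join $T$-comparable vertices, so it deserves the extra line of argument. You also handle the $\td(K_1)=0$ versus $\td(K_1)=1$ convention correctly; the paper's own proof uses the former, matching your edge-counting height.
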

\begin{proof}
We use the following equivalent inductive definition of tree-depth:
$\td(K_1)=0$ while for any other graph $G=(V,E)$ we set $\td(G) = 1+ \min_{u\in
V} \td(G\setminus u)$ if $G$ is connected, and $\td(G) = \max_C \td(G[C])$ if
$G$ is disconnected, where the maximum ranges over all connected components of
$G$.

We prove the claim by induction. The inequality is true for $K_1$, whose
diameter is $0$. For the inductive step, the interesting case is when $G=(V,E)$
is connected, since otherwise we can assume that the claim has been shown for
each connected component and we are done. Let $u\in V$ be such that $\td(G) =
1+\td(G\setminus u)$. Consider two vertices $v_1,v_2\in V\setminus\{u\}$ which
are at maximum distance in $G$. If $v_1,v_2$ are in the same connected
component of $G':=G\setminus u$, then $d_{G}(v_1,v_2) \le d_{G'}(v_1,v_2) \le
d(G') \le 2^{\td(G')+1}-2 \le 2^{\td(G)+1} -2$, where we have used the
inductive hypothesis on $G'$. So, suppose that $v_1,v_2$ are in different
connected components of $G'$. It must be the case that $u$ has a neighbor in
the component of $v_1$ (call it $v_1'$) and in the component of $v_2$ (call it
$v_2'$), because $G$ is connected. We have $d_G(v_1,v_2) \le d_G(v_1,v_1') + 2
+ d_G(v_2,v_2') \le d_{G'}(v_1,v_1') + 2 + d_{G'}(v_2,v_2') \le 2d(G') +2 \le
2\cdot 2^{\td(G')+1} - 2 = 2^{\td(G)+1}-2$.  \end{proof}
\begin{theorem}\label{thm:td-alg}
Unweighted \KC\ can be solved in time $O^*(2^{O(\td^2)})$.
\end{theorem}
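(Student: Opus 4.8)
The plan is to exploit a simple win-win between the value of $r$ and the tree-depth, using the diameter bound of Lemma~\ref{lem:td-diam}, the relation $\tw(G)\le\td(G)$ of Lemma~\ref{lem:relations}, and the $O^*((2r+1)^{\tw})$-time dynamic programming algorithm for \KC\ of~\cite{BorradaileL16}. The key point is that tree-depth controls the algorithm from both sides: it bounds the treewidth, so the base $2r+1$ is raised only to a power at most $\td$, and it bounds the diameter, so we may assume $r$ is at most exponential in $\td$, making that base at most $2^{O(\td)}$. Putting these together yields a running time of $\big(2^{O(\td)}\big)^{\td}=2^{O(\td^2)}$.

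Concretely, I would first perform a preprocessing step: obtain an elimination forest of $G$ of height $\td$ (either assumed to be given, or computed in FPT time), which immediately yields a tree decomposition of $G$ of width at most $\td$ whose bags are the root-to-node paths of the forest, and in particular re-confirms $\tw(G)\le\td(G)$. Knowing $\td$ also lets me decide which of the two cases below we are in. Finally I would compute the connected components $G_1,\dots,G_c$ of $G$; note $\td(G_i)\le\td(G)=\td$ for every $i$, and since centers placed in distinct components do not cover one another, $G$ admits a $(k,r)$-center iff the sum over $i$ of the minimum number of centers needed to $r$-dominate $G_i$ is at most $k$.

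I would then split on the value of $r$. If $r\ge 2^{\td+1}-2$, then by Lemma~\ref{lem:td-diam} each component satisfies $d(G_i)\le 2^{\td(G_i)+1}-2\le 2^{\td+1}-2\le r$, so any single vertex of $G_i$ is already an $r$-dominating set of $G_i$; hence each (nonempty) component needs exactly one center and $G$ has a $(k,r)$-center iff $k\ge c$, decided in polynomial time. Otherwise $r<2^{\td+1}-2$, so $2r+1<2^{\td+2}$, and running the algorithm of~\cite{BorradaileL16} on $G$ with the width-$\le\td$ tree decomposition solves \KC\ in time
\[
O^*\!\big((2r+1)^{\tw(G)}\big)\le O^*\!\big((2^{\td+2})^{\td}\big)=O^*\!\big(2^{\td^2+2\td}\big)=O^*\!\big(2^{O(\td^2)}\big),
\]
which gives the claimed bound. (Alternatively, in the second case one can run the algorithm separately on each $G_i$ to recover the per-component minimum center counts and then check whether their sum is at most $k$; the running time is unchanged.)

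I do not expect any deep obstacle here: the whole argument rests on the observation that the diameter bound caps the only interesting value of $r$ at $2^{O(\td)}$, which is exactly the threshold below which the $(2r+1)^{\tw}$ algorithm stays within $2^{O(\td^2)}$. The one mildly delicate point is to secure, within the target time budget, a tree decomposition of width $O(\td)$: a black-box treewidth routine would be too costly, so one should build the decomposition directly from an (optimal or near-optimal) elimination forest, or regard such a decomposition as part of the input.
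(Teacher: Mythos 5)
Your proof is correct and follows essentially the same route as the paper's: bound $r$ by the diameter via Lemma~\ref{lem:td-diam} (handling the large-$r$ case trivially), use $\tw\le\td$ from Lemma~\ref{lem:relations}, and invoke the $O^*((2r+1)^{\tw})$ algorithm of~\cite{BorradaileL16}. Your extra care about the per-component diameter bound and about constructing the width-$\td$ decomposition directly from an elimination forest are welcome refinements that the paper leaves implicit, but they do not change the argument.
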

\begin{proof} The main observation is that we can assume that $r\le d(G)$,
because otherwise the problem is trivial. Hence, by Lemma \ref{lem:td-diam} we
have $r\le 2^{\td+1}$. We can now rely on Lemma \ref{lem:relations} to get
$\tw\le\td$, and the algorithm of \cite{BorradaileL16} which runs in time
$O^*((2r+1)^\tw)$ gives the desired running time.  \end{proof}
The main contribution concerning the tree-depth parameter is to establish a tight ETH-based lower
bound, matching Theorem \ref{thm:td-alg}. 

\subparagraph{Construction:} Given an instance $\phi$ of \textsc{3-SAT} on $n$
variables and $m$ clauses, where we can assume that $m=O(n)$ (by the
Sparsification Lemma, see \cite{ImpagliazzoPZ01}), we will create an instance
$[G=(V,E),k]$ of the unweighted \KC\ problem, where $k=\sqrt{n}$ and $r=4\cdot
c^{\sqrt{n}}$ for an appropriate constant $c$. (To simplify notation, we assume
without loss of generality that $\sqrt{n}$ is an integer). 

We first group the clauses of $\phi$ into $\sqrt{n}$ equal-sized groups
$F_1,\dots,F_{\sqrt{n}}$. As a result, each group involves $O(\sqrt{n})$
variables, so there are $2^{O(\sqrt{n})}$ possible assignments to the variables
of each group. Select $c$ appropriately so that  each group $F_i$ has at most
$c^{\sqrt{n}}$ possible partial assignments $\phi_j^i$ for the variables of
clauses in $F_i$.  

We then create for each $i\in\{1,\ldots,n\}$, a set $P_i$ of at most
$c^{\sqrt{n}}$ vertices $p_1^i,\dots,p_{c^{\sqrt{n}}}^i$, such that each vertex
of $P_i$ represents a partial assignment to the variables of $F_i$ that
satisfies all clauses of $F_i$. We add two \emph{guard} vertices $g_i^1,g_i^2$,
attaching them to all vertices of $P_i$ by paths of length $r=4\cdot
c^{\sqrt{n}}$.  Next, for each $i\in\{1,\ldots,\sqrt{n}\}$, we create another
pair of vertices $a_i,b_i$, connecting $a_i$ to each vertex $p_l^i$ by a path
of length $c^{\sqrt{n}}+l$, while $b_i$ is connected to each vertex $p_l^i$ by
a path of length $2\cdot c^{\sqrt{n}}-l+1$. Now each $P_i$ contains all
vertices $a_i,b_i,g_i^1,g_i^2$ and each $p_l^i,\forall
l\in\{1,\ldots,c^{\sqrt{n}}\}$.

Finally, for any two \emph{conflicting} partial assignments
$\phi_l^i,\phi_o^j$, with $l,o\in\{1,c^{\sqrt{n}}\}$ and
$i,j\in\{1,\sqrt{n}\}$, i.e.\ two partial assignments that assign conflicting
values to at least one variable, we create a vertex $u_{l,o}^{i,j}$ that we
connect to vertices $a_i,b_i$ and $a_j,b_j$: if $p_l^i\in P_i$ is the vertex
corresponding to $\phi_l^i$ and $p_o^j\in P_j$ is the vertex corresponding to
$\phi_o^j$, then vertex $u_{l,o}^{i,j}$ is connected to $a_i$ by a path of
length $3\cdot c^{\sqrt{n}}-l+1$ and to $b_i$ by a path of length
$2\cdot c^{\sqrt{n}}+l$, as well as to $a_j$ by a path of length $3\cdot
c^{\sqrt{n}}-o+1$ and $b_j$ by a path of length $2\cdot c^{\sqrt{n}}+o$. See Figure
\ref{fig:eth_TD} for an illustration.  
\begin{figure}[htbp]
 \centerline{\includegraphics[width=120mm]{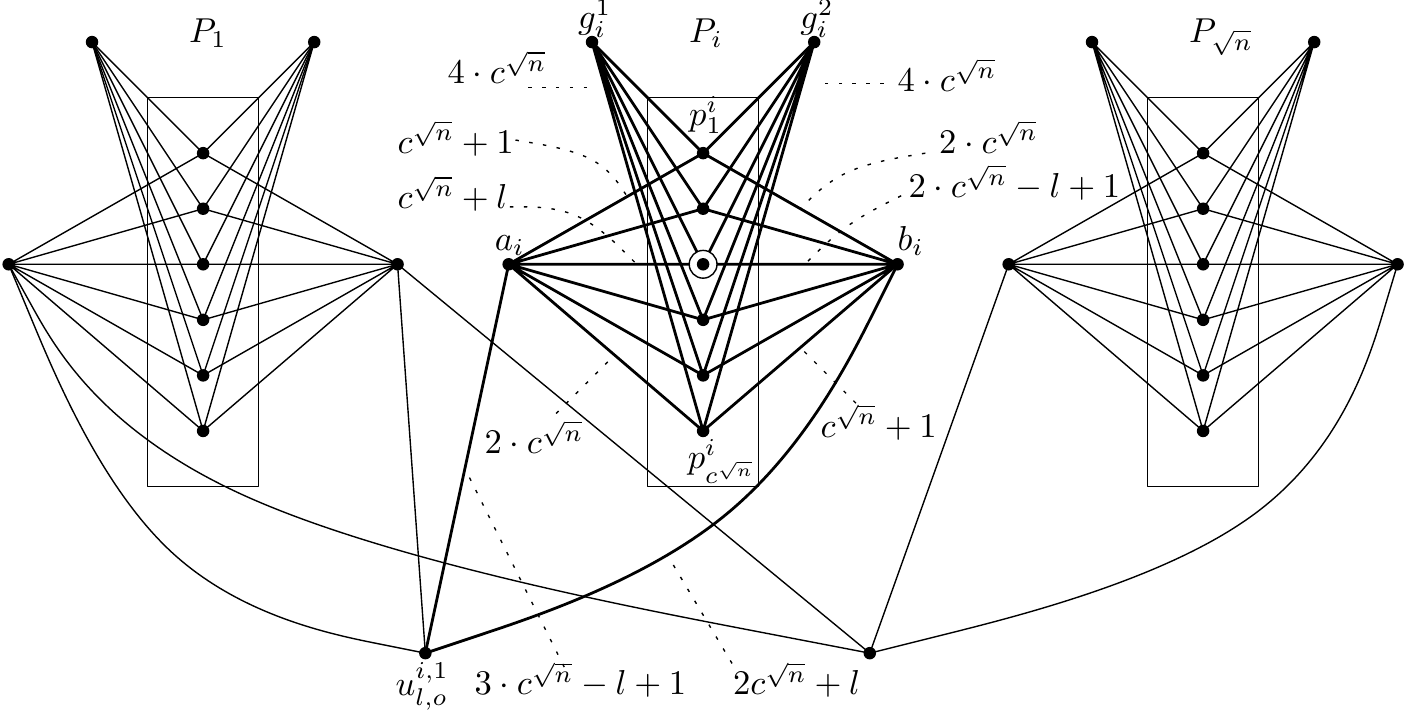}}
 \caption{A general picture of graph $G$. Note all straight lines denote paths of length equal to the number shown by dotted lines, while the circled vertex is $p_l^i$.}
 \label{fig:eth_TD}
\end{figure}
\begin{lemma}\label{thm:eth_TD_FWD}
 If $\phi$ has a satisfying assignment, then there exists a $(k,r)$-center in $G$ of size $k=\sqrt{n}$ and $r=4\cdot c^{\sqrt{n}}$.
\end{lemma}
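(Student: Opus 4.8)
The plan is to follow the blueprint of Lemma~\ref{thm:W_FWD}, with the role of a $k$-multicolored independent set played by a globally consistent choice of one partial assignment per group. Given a satisfying assignment $\sigma$ of $\phi$, for each $i\in[1,\sqrt{n}]$ restrict $\sigma$ to the variables occurring in the clauses of $F_i$; since those clauses involve only these variables and $\sigma$ satisfies $\phi$, this restriction is a partial assignment satisfying all of $F_i$, hence equals one of the $\phi_{l_i}^i$ and is represented by a vertex $p_{l_i}^i\in P_i$. Put $K:=\{p_{l_i}^i : i\in[1,\sqrt{n}]\}$, which has exactly $k=\sqrt{n}$ vertices. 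It remains to check that every vertex of $G$ lies within distance $r=4c^{\sqrt{n}}$ of $K$.

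First I would dispatch everything ``local'' to a single block $P_i$. We have $d(p_{l_i}^i,a_i)=c^{\sqrt{n}}+l_i\le 2c^{\sqrt{n}}<r$ and $d(p_{l_i}^i,b_i)=2c^{\sqrt{n}}-l_i+1<r$, so $a_i$, $b_i$, and the internal vertices of the two paths joining $p_{l_i}^i$ to them are covered; every other $p_{l'}^i$ is reached through the cheaper of $a_i,b_i$ at distance $\min\{2c^{\sqrt{n}}+l_i+l',\,4c^{\sqrt{n}}+2-l_i-l'\}\le 3c^{\sqrt{n}}+1<r$, and the internal vertices of the paths $p_{l'}^i$--$a_i$ and $p_{l'}^i$--$b_i$ are likewise within $r$ of $p_{l_i}^i$ (the worst case, a path midpoint, still sits comfortably below $r$). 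The guard vertices $g_i^1,g_i^2$ are at distance exactly $r$ from $p_{l_i}^i$, and the internal vertices of their attaching paths are covered by the standard guard argument — the guards are placed precisely so that a single selected vertex of $P_i$ dominates them, as in Lemma~\ref{guard_gadget_proof}. This leaves only the conflict vertices.

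The heart of the proof is the conflict vertices $u_{l,o}^{i,j}$. Since $\phi_l^i$ and $\phi_o^j$ disagree on some variable they cannot both be restrictions of the single assignment $\sigma$, so $(l,o)\neq(l_i,l_j)$, i.e.\ $l\neq l_i$ or $o\neq l_j$; by the symmetry between $i$ and $j$ assume $l\neq l_i$. Then via $a_i$ we get $d(p_{l_i}^i,u_{l,o}^{i,j})\le (c^{\sqrt{n}}+l_i)+(3c^{\sqrt{n}}-l+1)=r+1-(l-l_i)$, and via $b_i$ we get $d(p_{l_i}^i,u_{l,o}^{i,j})\le (2c^{\sqrt{n}}-l_i+1)+(2c^{\sqrt{n}}+l)=r+1+(l-l_i)$; whichever of $l>l_i$ or $l<l_i$ holds, one of these routes has length $\le r$, so $u_{l,o}^{i,j}$ is covered. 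The internal vertices of the four paths incident to $u_{l,o}^{i,j}$ are handled the same way: those on the route just used lie on a shortest $p_{l_i}^i$--$u_{l,o}^{i,j}$ walk, hence at distance $<r$; those incident to $a_j,b_j$ are reached from $p_{l_j}^j$ (at distance $c^{\sqrt{n}}+l_j$ from $a_j$ and $2c^{\sqrt{n}}-l_j+1$ from $b_j$), and a direct estimate puts each within $r$ of $p_{l_j}^j$, with $u_{l,o}^{i,j}$ itself already covered from the $i$-side. Having exhausted all vertex types, $K$ is the required $(k,r)$-center.

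The one genuinely delicate point is this last step: under the \emph{tight} choice $r=4c^{\sqrt{n}}$, the distances from $K$ to the conflict vertices and to the internal vertices of their incident paths land within $\pm1$ of $r$, and coverage holds only if one always routes through the ``correct'' one of the four attaching paths — the choice being dictated by whether $l$ exceeds $l_i$ or not (resp.\ $o$ versus $l_j$). This is exactly the one-unit slack that the converse direction will exploit to force the selected partial assignments to be pairwise consistent (using that $d(p_l^i,u_{l,o}^{i,j})=r+1$ precisely when $\phi_l^i$ is the chosen assignment in group $i$), so getting this arithmetic exactly right is where the work lies; everything else is the kind of bookkeeping already carried out for Lemma~\ref{thm:W_FWD}.
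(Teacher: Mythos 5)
Your choice of center set and your handling of the conflict vertices coincide with the paper's own proof: the same $K=\{p_{l_i}^i: i\in[1,\sqrt{n}]\}$ built from the restrictions of the satisfying assignment, and the same arithmetic showing that $d(p_{l_i}^i,u_{l,o}^{i,j})$ equals $r+1-(l-l_i)$ via $a_i$ and $r+1+(l-l_i)$ via $b_i$, so that $u_{l,o}^{i,j}$ is within distance $r$ of $K$ as soon as $l\neq l_i$ or $o\neq l_j$ --- which holds because two restrictions of a single assignment cannot conflict. This is the heart of the lemma and you have it exactly right.

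The step where you go beyond the paper is also the step that fails. You claim the internal vertices of the guard paths are covered ``by the standard guard argument, as in Lemma~\ref{guard_gadget_proof}'', but that lemma concerns the gadget $\hat{T}_N$ of the clique-width construction, which is engineered (legs of length roughly $r/2$ meeting in a common tail) precisely so that every internal vertex is within $r$ of \emph{any single} input. The tree-depth construction does not use $\hat{T}_N$: here $g_i^1$ is attached to each $p_l^i$ by a plain path of length $r$. Take $l'\neq l_i$ and let $v$ be the midpoint of the $p_{l'}^i$--$g_i^1$ path, so $d(v,p_{l'}^i)=d(v,g_i^1)=2c^{\sqrt{n}}$. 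Every route from $K$ to $v$ enters through $p_{l'}^i$ or $g_i^1$, and $d(p_{l_i}^i,p_{l'}^i)\ge 2c^{\sqrt{n}}+2$ while $d(p_{l_i}^i,g_i^1)=r$; hence $d(K,v)\ge 4c^{\sqrt{n}}+2>r$, and the claim is false for the construction as written. (A similar single-vertex problem arises on the $a_i$--$u_{l,o}^{i,j}$ path when $l<l_i$ and $o=l_j$.) In fairness, the paper's own proof is silent on all subdivision vertices --- it only argues about $g_i^1,g_i^2,a_i,b_i$, the $p_{l'}^i$ and the $u$'s --- so you have inherited a defect of the write-up rather than introduced one; but your proof should not cover it with a citation to a gadget that is not present in this construction. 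A complete argument must either modify the guard attachment (e.g.\ actually use $\hat{T}_N$, or shorten the guard paths and hang the slack off $g_i^1,g_i^2$ themselves) or explicitly verify coverage of these internal vertices.
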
 
\begin{proof} Consider the satisfying assignment for $\phi$ and let
$\phi_{l_i}^i$, with $l_i\in\{1,\ldots,c^{\sqrt{n}}\}$ and
$i\in\{1,\ldots,\sqrt{n}\}$, be the restriction of that assignment for all
variables appearing in clauses of group $F_i$. We claim the set $K$, consisting
of all vertices $p_{l_i}^i$ corresponding to $\phi_{l_i}^i$, is a
$(k,r)$-center for $G$ with $k=\sqrt{n}$ and $r=4\cdot c^{\sqrt{n}}$. Since one
vertex is selected from each $P_i$ in $G$, all the guards $g_i^1,g_i^2$ and
vertices $a_i,b_i$ are within distance $4\cdot c^{\sqrt{n}}$ from selected
vertex $p_{l_i}^i$, as well as all other vertices $p_{l'}^i$. For vertices $u_{l_i,l_j}^{i,j}$, observe that selected
vertex $p_{l_i}^i$ is at distance $c^{\sqrt{n}}+l_i+3\cdot
c^{\sqrt{n}}-l_i+1=4\cdot c^{\sqrt{n}}+1$ from $u_{l_i,l_j}^{i,j}$, through
either $a_i$ or $b_i$, while vertex $p_{l_j}^j$ is at distance
$c^{\sqrt{n}}+l_j+3\cdot c^{\sqrt{n}}-l_j+1=4\cdot c^{\sqrt{n}}+1$ from
$u_{l_i,l_j}^{i,j}$, through either $a_j$ or $b_j$, only if the corresponding
partial assignments $\phi_{l_i}^i$ and $\phi_{l_j} ^j$ are conflicting.  As
$\phi_{l_i}^i$, $\phi_{l_j}^j$ are parts of a satisfying assignment for $\phi$,
however, this will not be the case and at least one path from
$u_{l_i,l_j}^{i,j}$ to either $p_{l_i}^i$, or $p_{l_j}^j$, will be of length
$\le4\cdot c^{\sqrt{n}}$.  \end{proof}
\begin{lemma}\label{thm:eth_TD_BWD}
 If there exists a $(k,r)$-center in $G$ of size $k=\sqrt{n}$ and $r=4\cdot c^{\sqrt{n}}$, then $\phi$ has a satisfying assignment.
\end{lemma}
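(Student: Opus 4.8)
The plan is to mirror the proof of Lemma~\ref{thm:W_BWD}, since this construction is just the ``path-subdivided'' analogue of the weighted one. Let $S$ be a $(k,r)$-center of $G$ with $|S|=k=\sqrt{n}$ and $r=4c^{\sqrt{n}}$; I will extract a satisfying assignment for $\phi$ from $S$.

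The first step is to show that $S$ contains exactly one center in each set $P_i$. The key observation is that the only vertices at distance $\le r$ from \emph{both} guards $g_i^1$ and $g_i^2$ are the vertices of $P_i$: a vertex at distance $d$ along the length-$r$ path joining $g_i^1$ to some $p_l^i$ is at distance $2r-d$ from $g_i^2$ (no alternative route being shorter), so covering both guards forces $d=r$, i.e.\ the vertex is $p_l^i$; and any vertex not lying on one of these guard-paths is at distance $\ge r+1$ from $g_i^1$, since every path from it to $g_i^1$ must pass through some $p_l^i$. Furthermore, a center in (the gadget of) group $i$ is at distance far more than $r$ from a guard of any group $j\ne i$. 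Hence each of the $\sqrt{n}$ guard pairs needs a center inside its own $P_i$, and since only $\sqrt{n}$ centers are available we get $S=\{p_{l_1}^1,\dots,p_{l_{\sqrt{n}}}^{\sqrt{n}}\}$ with exactly one center per group; a routine distance check confirms that $p_{l_i}^i$ does cover $g_i^1,g_i^2$, $a_i$, $b_i$ and all of $P_i$.

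The second step reads off the assignment and verifies it. Each $p_{l_i}^i$ corresponds to a partial assignment $\phi_{l_i}^i$ that satisfies all clauses of $F_i$; I claim these partial assignments are pairwise non-conflicting. If instead $\phi_{l_i}^i$ and $\phi_{l_j}^j$ conflicted for some $i\ne j$, the construction would contain the vertex $u_{l_i,l_j}^{i,j}$, joined to $a_i$ by a path of length $3c^{\sqrt{n}}-l_i+1$, to $b_i$ by a path of length $2c^{\sqrt{n}}+l_i$, and symmetrically to $a_j,b_j$. Then $d(u_{l_i,l_j}^{i,j},p_{l_i}^i)$ equals $(3c^{\sqrt{n}}-l_i+1)+(c^{\sqrt{n}}+l_i)=r+1$ through $a_i$ and $(2c^{\sqrt{n}}+l_i)+(2c^{\sqrt{n}}-l_i+1)=r+1$ through $b_i$, and likewise $d(u_{l_i,l_j}^{i,j},p_{l_j}^j)=r+1$, while every other center lies in a far-away group; so $u_{l_i,l_j}^{i,j}$ would be uncovered, contradicting that $S$ is a $(k,r)$-center. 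Therefore the $\phi_{l_i}^i$ agree on every shared variable, their union is a consistent partial assignment, and extending it arbitrarily to the remaining variables gives an assignment satisfying every clause of $\phi$ (each clause lies in some $F_i$ and is satisfied by $\phi_{l_i}^i$).

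The main obstacle is the first step: ruling out that a center is ``wasted'' on an internal guard-path vertex or is shared between two groups, so that the clean one-center-per-$P_i$ structure is actually forced by $|S|=\sqrt{n}$. Once this structural fact is in hand, the conflict-gadget distance computations are essentially identical to those in Lemma~\ref{thm:W_BWD} and close the argument.
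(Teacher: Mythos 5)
Your proof is correct and takes essentially the same route as the paper's: a counting argument on the guards forces exactly one center $p_{l_i}^i$ per group, and the conflict vertices $u_{l_i,l_j}^{i,j}$, being at distance exactly $r+1$ from both corresponding centers via either $a$- or $b$-path, rule out conflicting partial assignments. You spell out the guard/one-center-per-group step in more detail than the paper (which asserts it in one sentence), but the substance is identical.
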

\begin{proof} Let $K\subset V$ be the $(k,r)$-center in $G$, with $r=4\cdot
c^{\sqrt{n}}$ and $k=\sqrt{n}$. As $|K|=k=\sqrt{n}$ and all guard vertices
$g_i^1,g_i^2,\forall i\in\{1,\ldots,\sqrt{n}\}$, must be within distance at
most $4\cdot c^{\sqrt{n}}$ from some vertex selected in $K$, the set $K$ must
contain exactly one vertex from each $P_i$, or
$K=\{p_{l_1}^1,\dots,p_{l_{\sqrt{n}}}^{\sqrt{n}}\}$, for some
$l_i\in\{1,\ldots,c^{\sqrt{n}}\}$ and for each $i\in\{1,\ldots,\sqrt{n}\}$.  We
set the assignment for $\phi$ to consist of all partial assignments
$\phi_{l_i}^i$, with $i\in\{1,\ldots,\sqrt{n}\}$, corresponding to vertices
$p_{l_i}^i\in K$ and claim that this is a valid assignment that satisfies
$\phi$.  It is not hard to see that, if the assignment is valid, then it
satisfies the clause, as we have only listed partial assignments that satisfy
all clauses of each group. To see that the assignment does not assign
conflicting values to any variable, suppose for contradiction that there are
two conflicting partial assignments $\phi_{l_i}^i,\phi_{l_j}^j$ and a vertex
$u_{l_i,l_j}^{i,j}\in G$ with paths to $a_i,b_i\in P_i$ and $a_j,b_j\in P_j$,
where we have $p_{l_i}^i,p_{l_j}^j\in K$. The distance from $u_{l_i,l_j}^{i,j}$
to $p_{l_i}^i$ is $3\cdot c^{\sqrt{n}}-l_ i+1+c^{\sqrt{n}}+l_i>4\cdot
c^{\sqrt{n}}$, via either $a_i$ or $b_i$, while its distance to $p_{l_j}^j$ is
also $3\cdot c^{\sqrt{n}}-l_j+1+c^{\sqrt{n}}+l_j>4\cdot c^{\sqrt{n}}$, via
either $a_j,b_j$, meaning $u_{l_i,l_j}^{i,j}$ is not covered by $K$, giving a
contradiction.  \end{proof}
\begin{lemma}\label{thm:eth_TD_treedepth} The tree-depth of $G$ is
$4\sqrt{n}+\lceil\log(4\cdot c^{\sqrt{n}})\rceil+1 = O(\sqrt{n})$.
\end{lemma}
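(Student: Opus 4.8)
The plan is to delete a small vertex set $S$ that breaks $G$ into very shallow pieces, and then invoke the standard fact that for any $S\subseteq V(G)$ one has $\td(G)\le |S|+\max_C\td(G[C])$, where $C$ ranges over the connected components of $G-S$. This inequality follows directly from the rooted-forest characterisation of tree-depth: place the vertices of $S$ on a single root-to-leaf path (in arbitrary order) and hang, below the last vertex of this path, an optimal elimination forest of each component of $G-S$; the transitive closure of this forest, whose height is $|S|+\max_C\td(G[C])$, contains $G$, since every edge of $G$ lies inside $S$, or joins $S$ to a component, or lies inside a single component.

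The set I would take is $S=\{a_i,b_i,g_i^1,g_i^2 : i\in\{1,\dots,\sqrt n\}\}$, so that $|S|=4\sqrt n$. The key structural observation is that every connected component of $G-S$ is a \emph{subdivided star} (a ``spider''). Indeed, a vertex $p_l^i$ is adjacent in $G$ only to $g_i^1,g_i^2,a_i,b_i$, through internally vertex-disjoint paths, so in $G-S$ it becomes the centre of a spider whose four legs are paths on $r-1$, $r-1$, $c^{\sqrt n}+l-1$ and $2c^{\sqrt n}-l$ vertices respectively (the surviving internal path vertices); likewise a vertex $u_{l,o}^{i,j}$ is adjacent in $G$ only to $a_i,b_i,a_j,b_j$, so in $G-S$ it becomes the centre of a spider with four legs, each a path on at most $3c^{\sqrt n}-1$ vertices. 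Since all the connecting paths of the construction run through private new vertices, these spiders are pairwise vertex-disjoint and together exhaust $V(G)\setminus S$, hence they are exactly the components of $G-S$. This verification is the only place where the structure of the construction is genuinely used, and it is routine; it is also the step I would flag as the ``main obstacle'', though a mild one — one simply has to check that removing $S$ really does separate the $p$-spiders from the $u$-spiders and from one another.

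Finally I would bound the tree-depth of a single spider: deleting its centre leaves a disjoint union of paths, so $\td(\text{spider})\le 1+\max_j\td(P_{\ell_j})$ over its legs $P_{\ell_j}$, and $\td(P_\ell)\le\lceil\log(\ell+1)\rceil$. The longest leg occurring anywhere has at most $r-1=4c^{\sqrt n}-1$ vertices, so $\td(G[C])\le 1+\lceil\log(4c^{\sqrt n})\rceil$ for every component $C$ of $G-S$. Combining this with the displayed inequality yields $\td(G)\le |S|+\max_C\td(G[C])\le 4\sqrt n+\lceil\log(4c^{\sqrt n})\rceil+1$, and since $c$ is a constant, $\log(4c^{\sqrt n})=2+\sqrt n\log c=O(\sqrt n)$, so this bound is $O(\sqrt n)$, as claimed. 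The only thing to be careful about here is to keep the additive constant in the estimate $\td(P_\ell)\le\lceil\log(\ell+1)\rceil$ consistent with whichever normalisation of tree-depth ($\td(K_1)\in\{0,1\}$) is used.
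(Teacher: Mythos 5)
Your proof is correct and follows essentially the same route as the paper's: delete the $4\sqrt n$ vertices $\{a_i,b_i,g_i^1,g_i^2\}$, observe that the remaining components are subdivided stars, and bound their tree-depth via $\td(P_\ell)\le\lceil\log(\ell+1)\rceil$. Your version is in fact slightly more careful than the paper's in making the ``$+1$'' for the spider centres and the inequality $\td(G)\le|S|+\max_C\td(G[C])$ explicit.
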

\begin{proof} Again we use the definition of tree-depth used in Lemma
\ref{lem:td-diam}. Consider the graph $G$ after removal of all guard vertices
$g_{i}^1,g_{i}^2$ and all $a_i,b_i,\forall i\in[1,\sqrt{n}]$.  The graph now
consists of a number of sub-divided stars, centered either on some vertex
$u_{l,o}^{i,j}$, or some $p_{l}^i$, while the maximum distance from each such
center to a leaf vertex in the star is $4\cdot c^{\sqrt{n}}-1$, for a path
connecting $p_l^i$ to a guard $g_i^1,g_i^2$, omitting the final edge due to
removal of $g_i^1,g_i^2$.  The claim follows since paths of length $n$ have
tree-depth exactly $\lceil\log(n+1)\rceil$ (this can be shown by repeatedly
removing the middle vertex of a path). By the definition of tree-depth, after
removal of $4\sqrt{n}$ vertices from $G$, the maximum tree-depth of each
resulting disconnected component is $\lceil\log(4\cdot c^{\sqrt{n}})\rceil =
\lceil\sqrt{n}\cdot\log(c)+\log(4)\rceil$.  \end{proof}
\begin{theorem}\label{thm:eth_TD_LB}
 If \textsc{$(k,r)$-Center} can be solved in $2^{o(\textrm{td}^2)}\cdot n^{O(1)}$ time, then \textsc{3-SAT} can be solved in $2^{o(n)}$ time.
\end{theorem}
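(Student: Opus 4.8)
The plan is to turn the construction preceding this theorem, together with Lemmas~\ref{thm:eth_TD_FWD}, \ref{thm:eth_TD_BWD} and~\ref{thm:eth_TD_treedepth}, into a standard ETH lower-bound argument, the only subtlety being that the instance produced has size $2^{\Theta(\sqrt{n})}$ rather than polynomial in $n$. Starting from a $3$-\textsc{SAT} formula $\phi$ on $n$ variables, I would first invoke the Sparsification Lemma to assume $m=O(n)$, so that the ETH precludes a $2^{o(n)}$-time decision procedure for $\phi$. I then apply the construction above to obtain the graph $G$ with $k=\sqrt{n}$ and $r=4c^{\sqrt{n}}$. Note that $G$ can be built in time $2^{O(\sqrt{n})}$: each path gadget has length $O(c^{\sqrt{n}})=2^{O(\sqrt{n})}$, and the numbers of sets $P_i$, of vertices inside each $P_i$, and of conflict vertices $u_{l,o}^{i,j}$ are all bounded by $2^{O(\sqrt{n})}$, so in particular $|V(G)|=2^{O(\sqrt{n})}$.

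By Lemmas~\ref{thm:eth_TD_FWD} and~\ref{thm:eth_TD_BWD}, $\phi$ is satisfiable if and only if $G$ admits a $(k,r)$-center for these values of $k$ and $r$, and by Lemma~\ref{thm:eth_TD_treedepth} we have $\td(G)=O(\sqrt{n})$, hence $\td(G)^2=O(n)$. Now suppose, for contradiction, that \KC\ admits an algorithm running in time $2^{o(\td^2)}\cdot N^{O(1)}$, where $N$ denotes the number of vertices of the input graph. Running it on $G$ takes time $2^{o(\td(G)^2)}\cdot|V(G)|^{O(1)}$; the first factor is $2^{o(n)}$ because $\td(G)^2=O(n)$, and the second factor is $\bigl(2^{O(\sqrt{n})}\bigr)^{O(1)}=2^{O(\sqrt{n})}=2^{o(n)}$ since the exponent of the instance size in the running time is a fixed constant. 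Together with the $2^{O(\sqrt{n})}$ time needed to construct $G$ from $\phi$, this decides satisfiability of $\phi$ in total time $2^{o(n)}$, contradicting the ETH.

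Essentially this is a bookkeeping argument, and the single point to watch is the interplay of the three scales involved — the number $n$ of variables, the tree-depth $\Theta(\sqrt{n})$, and the instance size $2^{\Theta(\sqrt{n})}$ — and in particular the fact that any fixed polynomial in the instance size remains within $2^{o(n)}$. This is also the reason the construction deliberately uses paths whose length is exponential in $\sqrt{n}$: it is exactly this inflation of distances (hence of $r$) that makes $\td^2$, rather than $\td$, the governing exponent, so that the bound matches the $O^*(2^{O(\td^2)})$ upper bound of Theorem~\ref{thm:td-alg} and pins the complexity down to $2^{\Theta(\td^2)}$.
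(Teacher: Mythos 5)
Your proposal is correct and follows essentially the same route as the paper: build the instance in time $2^{O(\sqrt{n})}$, invoke Lemmas~\ref{thm:eth_TD_FWD}, \ref{thm:eth_TD_BWD} and~\ref{thm:eth_TD_treedepth}, and observe that $2^{o(\td^2)}$ becomes $2^{o(n)}$ since $\td(G)=O(\sqrt{n})$. If anything, you are slightly more careful than the paper's two-line argument in spelling out that the polynomial factor in the running time is taken over the instance size $2^{\Theta(\sqrt{n})}$ and hence stays within $2^{o(n)}$.
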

\begin{proof}
 Suppose there is an algorithm for the \textsc{$(k,r)$-Center} problem with running time $2^{o(\textrm{td}^2)}$. Given an instance $\phi$ of \textsc{3-SAT}, we use the above construction to create an instance $[G,k,r]$ of \textsc{$(k,r)$-Center}, with $k=\sqrt{n}$ and $r=4\cdot c^{\sqrt{n}}$, in time $O(\sqrt{n}\cdot c^{\sqrt{n}} + c^{2\sqrt{n}})$. As, by Lemma \ref{thm:eth_TD_treedepth}, we have $\textrm{td}(G)=O(\sqrt{n})$, using the supposed algorithm for \textsc{$(k,r)$-Center}, we can decide whether $\phi$ has a satisfying assignment in time $2^{o(\textrm{td}^2)}\cdot n^{O(1)}=2^{o(n)}$.
\end{proof}

\section{Treewidth: FPT approximation scheme}\label{sec:fptas} 

In this section we present an FPT approximation \emph{scheme} (FPT-AS) for \KC\
parameterized by $\tw$. Given as input a weighted graph $G=(V,E)$, $k,r\in\mathbb{N}^+$
and an arbitrarily small error parameter $\epsilon>0$, our algorithm is able to
return a solution that uses a set of $k$ centers $K$, such that all other
vertices are at distance at most $(1+\epsilon)r$ from $K$, or to correctly
conclude that no $(k,r)$-center exists. The running time of the algorithm is
$O^*((\tw/\epsilon)^{O(\tw)})$, which (for large $r$) significantly
out-performs any \emph{exact} algorithm for the problem
(even for the unweighted case and more restricted parameters, as in Theorems~\ref{thm:W_hard_VCk} and~\ref{thm:W_hard_FVSk}), while only sacrificing a small
$\epsilon$ error in the quality of the solution.

Our algorithm will rely heavily on a technique introduced in \cite{Lampis14}
to approximate problems which are W-hard by
treewidth (see also \cite{AngelBEL16}, as well as \cite{Bodlaender2012} for a similar approach). The idea is that, if the hardness of the problem is due to the fact
that the DP table needs to store $\tw$ large numbers (in our
case, the distances of the vertices in the bag from the closest center), we can
significantly speed up the algorithm if we replace all entries by the closest
integer power of $(1+\delta)$, for some appropriately chosen $\delta$. This will reduce the table size from (roughly) $r^\tw$ to $(\log_{(1+\delta)} r)^\tw$.
The problem now is that a DP performing calculations on its entries will, in
the course of its execution, create values which are not integer powers of
$(1+\delta)$, and will therefore have to be ``rounded'' to retain the table
size. This runs the risk of accumulating rounding errors, but we manage to show that the error on any entry of the rounded table can be bounded by a function of the height of its corresponding bag, then using a
theorem of \cite{BodlaenderH98} stating that any tree decomposition can be
balanced so that its width remains almost unchanged, yet its total
height becomes $O(\log n)$. Beyond these ideas, which are for the most part
present in \cite{Lampis14}, we will also need a number of problem-specific
observations, such as the fact that we can pre-process the input by taking the
metric closure of each bag, and in this way avoid some error-prone
arithmetic operations.

To obtain the promised algorithm we thus do the following:
first we re-cast the problem as a distance-labeling problem (similarly to Section \ref{sec_cw_DP}) and formulate an exact
treewidth-based DP algorithm running in time $O^*(r^{O(\tw)})$. This algorithm
follows standard techniques (indeed, a faster version is given in
\cite{BorradaileL16}) but we give it here to ensure that we have a solid foundation upon which to build the
approximation algorithm.  We then apply the rounding procedure sketched above,
and prove its approximation ratio by using the balancing theorem of
\cite{BodlaenderH98} and indirectly comparing the value produced by the
approximation algorithm with the value that would have been produced by the
exact algorithm.
\subparagraph{Distance-labeling:} We give an equivalent formulation of \KC\
that will be more convenient to work with in the remainder, similarly to
Section \ref{sec_cw_DP}. For an edge-weighted graph $G=(V,E)$, a
distance-labeling function is a function  $\dl: V\to \{0,\ldots,r\}$. We say
that $u\in V$ is \emph{satisfied} by $\dl$, if $\dl(u)=0$, or if there exists
$v\in N(u)$ such that $\dl(u)\ge dl(v) + w((v,u))$. We say that $\dl$ is \emph{valid}
if all vertices of $V$ are satisfied by $\dl$, and we define the \emph{cost} of $\dl$
as $|\dl^{-1}(0)|$. The following lemma , 
shows the equivalence between the two formulations:
\begin{lemma}\label{lem:dl} An edge-weighted graph $G=(V,E)$ admits a
$(k,r)$-center if and only if it admits a valid distance-labeling function
$\dl: V\to \{0,\ldots,r\}$ with cost $k$.  \end{lemma}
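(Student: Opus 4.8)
The plan is to establish the equivalence by constructing, from each side, the natural object on the other side, exactly as in the proof of Lemma~\ref{lem:cw_exact_dl} but now accounting for edge weights. First I would handle the forward direction: given a $(k,r)$-center $K$, define $\dl(v) = \min_{u\in K} d(v,u)$, capped at $r$ (which is harmless, since every vertex is within distance $r$ of $K$, so no capping actually occurs on vertices that matter). For cost, note $\dl(v)=0$ exactly when $v\in K$, so $|\dl^{-1}(0)|=k$. For validity, take any $v\notin K$ and let $u\in K$ be a closest center; let $v'$ be the neighbor of $v$ on a shortest $v$–$u$ path. Then $d(v',u) = d(v,u) - w((v',v))$, hence $\dl(v') \le \dl(v) - w((v',v))$, i.e. $\dl(v) \ge \dl(v') + w((v',v))$, which is precisely the satisfaction condition for $v$. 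So $\dl$ is valid with cost $k$.

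For the converse, suppose $\dl$ is valid with $|\dl^{-1}(0)|=k$, and set $K := \dl^{-1}(0)$. I would prove by induction on the value $i = \dl(v)$ that $d(v,K)\le \dl(v)$ for every $v\in V$; this immediately gives that $K$ is a $(k,r)$-center since $\dl(v)\le r$ always. The base case $i=0$ is trivial. For the inductive step with $\dl(v)=i>0$, validity gives a neighbor $v'\in N(v)$ with $\dl(v') \le \dl(v) - w((v',v)) = i - w((v',v)) < i$; by the induction hypothesis $d(v',K)\le \dl(v')$, and therefore $d(v,K) \le w((v',v)) + d(v',K) \le w((v',v)) + \dl(v') \le i = \dl(v)$, completing the induction.

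The argument is essentially routine; the only point requiring a little care is that edge weights are positive integers satisfying the triangle inequality (as assumed in the preliminaries), which is what guarantees that the neighbor $v'$ produced by validity has strictly smaller $\dl$-value (so the induction is well-founded) and that the shortest-path decomposition used in the forward direction is valid. I do not anticipate any real obstacle here; this lemma is the weighted analogue of Lemma~\ref{lem:cw_exact_dl} and serves only to set up the distance-labeling DP of the next subsection, so a short direct proof along the above lines suffices.
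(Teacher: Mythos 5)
Your proposal is correct and follows essentially the same route as the paper's (much terser) proof: assign to each vertex its distance from the nearest center for the forward direction, and take $K=\dl^{-1}(0)$ with an induction on the label value for the converse. The paper leaves the induction implicit ("it is not hard to verify"), so your write-up simply supplies the details the paper omits.
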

\begin{proof} For one direction, if there is a $(k,r)$-center $K$, we construct
a function $\dl$ by assigning as value to each vertex $v\in V$ the distance
from the closest vertex of $K$ to $v$.  This function is valid and has a cost
of $|K|=k$. For the converse direction, if such a function exists, we set
$K=\dl^{-1}(0)$. It is now not hard to verify that all vertices $v$ are at
distance at most $\dl(v)$ from a vertex of $K$. \end{proof}
\subparagraph{Exact-algorithm:} We formulate an exact algorithm which,
given an edge-weighted graph $G=(V,E)$, finds the minimum cost of a
distance-labeling function for $G$. By Lemma \ref{lem:dl}, this is equivalent
to \KC. We remark that the algorithm essentially reproduces the ideas of \cite{BorradaileL16}, and can be made to run in $O^*((2r+1)^\tw)$ if one uses fast subset convolution for the Join nodes (the naive
implementation would need time $O^*((2r+1)^{2\tw})$).
\begin{theorem}\label{thm:tw-exact} There is an algorithm which, given an
edge-weighted graph $G=(V,E)$ and $r\in\mathbb{N}^+$, computes the minimum cost of
any valid distance labeling of $G$ in time $O^*(r^{O(\tw)})$.  \end{theorem}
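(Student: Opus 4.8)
The plan is to build a standard treewidth dynamic program over a nice tree decomposition, where the state records, for every vertex in the current bag, its intended distance-label value in $\{0,\dots,r\}$. First I would pre-process the input by replacing $G$ with a graph that is the same except that inside every bag we take the metric closure: for any two vertices $u,v$ appearing together in some bag, we add (or lower) an edge of weight $d_G(u,v)$. This does not change distances in $G$, does not increase treewidth, and has the convenient effect that whenever a vertex is forgotten we can already certify that it is satisfied without ever having to ``wait'' for a later neighbor — every neighbor relevant to satisfying $u$ that will ever be in a common bag with $u$ is already visible as a direct edge. So the DP invariant at a node $t$ with bag $X_t$ will be: $A_t[\kappa,(\dl(v))_{v\in X_t}]$ equals the minimum over all labelings $\dl$ of $V_t$ (the vertices in the terminal subgraph $G_t$) that agree with the given values on $X_t$, that satisfy every vertex of $V_t\setminus X_t$, that satisfy every $v\in X_t$ for which the constraint is already met via an edge inside $G_t$, and that have $|\dl^{-1}(0)\cap V_t| = \kappa$; we store a sentinel $+\infty$ if no such labeling exists. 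Since the number of states per node is $(r+1)^{|X_t|}\cdot(k+1) = r^{O(\tw)}\cdot n^{O(1)}$, the table has the claimed size.

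Next I would give the transitions for the four node types of a nice tree decomposition. For a Leaf node with $X_t=\{v\}$ the table is trivial: the entry with $\dl(v)=0,\kappa=1$ and every entry with $\dl(v)=j>0,\kappa=0$ is set to the appropriate base value, everything else $+\infty$. For an Introduce node introducing $v$ into $X_j$ to form $X_t$, we simply copy $A_j$, incrementing $\kappa$ when $\dl(v)=0$, and we must check feasibility of $v$: because of the metric closure, $v$'s only neighbors in $G_t$ are inside $X_t$, so the constraint for $v$ is either already satisfiable now (some $u\in X_t$ with $\dl(v)\ge\dl(u)+w(u,v)$, or $\dl(v)=0$) or it will have to be satisfied later after more edges/vertices below are added — but in fact no more neighbors of $v$ will ever appear, so we can decide immediately whether the partial labeling is consistent, marking the entry $+\infty$ if $v$ is left unsatisfiable. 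For a Forget node forgetting $v$, we take, for each target state, the minimum over the $r+1$ values of $\dl(v)$ of those entries of $A_j$ in which $v$ was already satisfied (again decidable locally thanks to the metric closure). The Join node of $t$ with children $t_1,t_2$ sharing bag $X$: here a labeling of $V_t$ splits into labelings of $V_{t_1}$ and $V_{t_2}$ agreeing on $X$; the center counts add, so $\kappa=\kappa_1+\kappa_2-|\{v\in X:\dl(v)=0\}|$, and a vertex $v\in X$ is considered satisfied at $t$ iff it was satisfied on either side. The naive formula gives $O^*(r^{2\tw})$, but as noted one can restrict the state to $(2r+1)^{\tw}$ (recording for each bag vertex its label together with a ``already-satisfied'' bit, collapsing redundant combinations) and apply fast subset/subset-sum convolution over the min-plus semiring to bring the join to $O^*((2r+1)^{\tw})$ — for the statement as written the coarser $r^{O(\tw)}$ bound suffices, so I would present the straightforward quadratic join and merely remark on the convolution improvement.

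Finally, correctness follows by a routine bottom-up induction over the tree decomposition: the invariant above is preserved by each of the four transitions, using properties 1)--3) of a tree decomposition (in particular that the forgotten/introduced vertex has no neighbors outside the already-processed part, which is exactly where the metric-closure pre-processing pays off), and at the root $z$ with $X_z=\emptyset$ the entry $A_z[\kappa,\emptyset]$ is the minimum cost of a valid distance labeling of $G$ with exactly $\kappa$ zeros; reading off the smallest $\kappa\le k$ with a finite value and invoking Lemma~\ref{lem:dl} gives a $(k,r)$-center, or reports none exists. The running time is the number of nodes ($O(\tw\cdot n)$ in a nice decomposition) times the per-node work, which is $r^{O(\tw)}\cdot n^{O(1)}$ for all node types, so the total is $O^*(r^{O(\tw)})$. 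The only mildly delicate point — and the one I would be most careful about — is the satisfaction bookkeeping at Forget and Join nodes: one must be sure that a vertex declared ``satisfied and forgotten'' genuinely has a witness neighbor inside the processed subgraph and will never need one outside it, which is precisely guaranteed by the combination of property 3) of tree decompositions and the metric-closure step; everything else is standard.
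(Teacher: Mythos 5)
Your overall plan (nice tree decomposition, state indexed by the distance labels of the bag vertices, the four standard transitions) is the paper's approach, but your DP state is missing an essential component, and the device you introduce to compensate --- the per-bag metric closure --- does not do the job you claim. The paper's table is indexed not only by the labeling of the bag but also by a set $S\subseteq B_t$ recording \emph{which bag vertices are already satisfied} by the partial labeling; your table $A_t[\kappa,(\dl(v))_{v\in X_t}]$ drops $S$. That information cannot be reconstructed from $\kappa$ and the bag labels: whether $v\in X_t$ is satisfied in $G_t$ depends on the labels of neighbors of $v$ that have already been forgotten, and two partial labelings of $V_t$ with identical bag labels and identical $\kappa$ can differ on exactly this point (the same reason the classical \textsc{Dominating Set} DP needs a ``dominated / not yet dominated'' state per bag vertex). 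Consequently your Forget rule (``minimize over those entries of $A_j$ in which $v$ was already satisfied'') and your Join rule (``$v$ is satisfied iff it was satisfied on either side'') are not well defined, since satisfaction is not a function of the entry's index. The metric closure does not rescue this: it only adds edges between vertices that co-occur in some bag, so a witness of $v$ that was forgotten earlier leaves no trace in the current bag.

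A second concrete error is at the Introduce node: you claim that ``no more neighbors of $v$ will ever appear'' and therefore that you may set the entry to $+\infty$ if $v$ is unsatisfied at introduction time. It is true that all neighbors of $v$ in the \emph{current} terminal subgraph $G_t$ lie in $X_t$, but further neighbors of $v$ are introduced higher up in the decomposition (any $u$ with $(u,v)\in E$ shares a bag with $v$, and that bag may be an ancestor of $t$), and such a $u$ can be the witness satisfying $v$; rejecting the entry at introduction time discards valid solutions. Both problems disappear if you promote what you relegate to a closing remark --- the per-vertex ``already satisfied'' bit --- to be part of the state from the start: with state $(\dl_t,S)$ your four transitions become exactly the paper's (Forget requires $v\in S$; Introduce and Join update $S$ as new edges and new witnesses appear, with the Join correcting the double-counted zeros of $\dl^{-1}(0)\cap B_t$), the table still has size $r^{O(\tw)}$, and no metric-closure pre-processing is needed for the exact algorithm --- the paper invokes Lemma~\ref{lem:metric} only later, inside the approximation scheme, and for a different purpose.
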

\begin{proof} The algorithm uses standard techniques, so we sketch some of the
details. The idea is that for any bag $B_t$ of a tree decomposition of $G$ we
maintain a table $D_t: \left( (B_t\to \{0,\ldots,r\}) \times (B_t\to \{0,1\})
\right) \to \{0,\ldots,n\}\cup\{\infty\}$. Let $B_t^\downarrow$ denote the
vertices of $G$ contained in bags in the sub-tree rooted at $B_t$. Then, the
meaning of $D_t$ is that for each distance labeling $\dl_t: B_t \to
\{0,\ldots,r\}$ of the vertices of $B_t$ and for each subset $S\subseteq B$ it
contains the minimum cost of any distance labeling of
$G[B_t^\downarrow]$ which satisfies all vertices of $B_t^\downarrow$, except
perhaps those of $B_t\setminus S$, and which agrees with $\dl_t$ on $B_t$.

Using this table we can now formulate a DP algorithm as follows:
\begin{itemize}
\item For a Leaf node $B_t$ that only contains a single vertex $u$, we set
$D_t(\dl,\{u\})=1$, if $\dl(u)=0$ and $D_t(\dl,\{u\})=\infty$, otherwise.  We set
$D_t(\dl,\emptyset)=0$ for any other distance labeling of $u$.
\item For a Forget node $B_t$ that forgets a vertex $u$ and has a child node
$B_{t'}$, we set $D_t(\dl,S) = \min_{\dl'} D_{t'}(\dl',S\cup\{u\})$, where the
minimum ranges over all labeling functions $\dl'$ on $B_t'$ that agree with
$\dl$ on $B_t=B_{t'}\setminus\{u\}$.  In other words, we only retain partial
solutions which have already satisfied the vertex $u$ that we forget.
\item For an Introduce node $B_t$ that introduces a vertex $u$ and has a child
node $B_{t'}$ (so $B_t=B_{t'}\cup\{u\}$), we consider every labeling $\dl'$ of
$B_{t'}$ and every $S'\subseteq B_{t'}$ such that $D_{t'}(\dl',S')\neq \infty$.
For each such $\dl', S'$, and for each $i\in\{0,\ldots,r\}$, we construct a
$\dl$ function for $B_t$ which agrees with $\dl'$ on $B_{t'}$ and sets
$\dl(u)=i$. We compute $S$ to be the set that contains $S'$ as well as all
vertices of $B_{t'}$ satisfied by $\dl(u)$, and $u$ if it is satisfied by a
vertex of $B_{t'}$ or $\dl(u)=0$. More precisely, we add $v\in B_{t'}$ to $S$
if $\dl(v) \ge \dl(u)+w((u,v))$, and we add $u$ to $S$ if $\dl(u)=0$, or if
there exists $v\in B_{t'}$ for which $\dl(u) \ge \dl(v) + w((v,u))$.  We set
$D_t(\dl,S) = D_{t'}(\dl',S')+1$ if $\dl(u)=0$, and $D_t(\dl,S) =
D_{t'}(\dl',S')+1$ otherwise. All other entries of $D_t$ are set to $\infty$.
\item For a Join node $B_t$ with two children $B_{t_1},B_{t_2}$, for each
labeling function $\dl$ on $B_t$ and each $S\subseteq B_t$ we set $D_t(\dl,S) =
\min_{S_1\subseteq S} D_{t_1}(\dl,S_1)+D_{t_2}(\dl,S\setminus S_1) -
|\dl^{-1}(0)\cap B_t|$. Note that we subtract $|\dl^{-1}(0)\cap B_t|$ to
avoid double-counting the vertices of $B_t$.\end{itemize} 
\end{proof}
We now describe an approximation algorithm based on the exact DP algorithm of Theorem~\ref{thm:tw-exact}. We
make use of a result of~\cite{BodlaenderH98} (Theorem~\ref{thm:balance}) and of Lemma~\ref{lem:metric}, both given in the sequel.
\begin{theorem}\label{thm:balance}\cite{BodlaenderH98} There is an algorithm
which, given a tree decomposition of width $w$ of a graph on $n$ nodes,
produces a decomposition of the same graph with width at most $3w+2$ and height
$O(\log n)$ in polynomial time. \end{theorem}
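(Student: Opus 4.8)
The plan is to reconstruct the decomposition top-down, repeatedly cutting the graph with balanced separators drawn from (restrictions of) the given decomposition. First I would \emph{normalize}: delete every bag that is contained in that of an adjacent node (leaving at most $n$ bags, width unchanged) and then expand high-degree nodes so that the underlying tree $T$ is binary, again without changing the width; from now on $|V(T)|=O(n)$. The workhorse is the classical \emph{centroid-bag} fact: given a tree decomposition of width $\le w$ of a graph $H$ and any weight $\mu$ on $V(H)$, some bag $X^{*}$ of the decomposition has the property that every connected component of $H-X^{*}$ has $\mu$-weight at most $\mu(V(H))/2$. One proves this by rooting the decomposition, charging each vertex to the topmost bag containing it, and walking down from the root into the unique child whose subtree carries more than half of the remaining weight until no child does; the bag reached is the desired separator. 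Taking $\mu\equiv 1$ gives a separator of size $\le w+1$ splitting $H$ into pieces of at most $|V(H)|/2$ vertices, and taking $\mu=\mathbf{1}_{W}$ for a distinguished set $W$ gives one splitting $W$ in half.

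Next I would set up a recursion $\mathrm{Build}(H,W)$ taking an induced subgraph $H\subseteq G$ together with an \emph{interface} $W\subseteq V(H)$ with $|W|\le 4(w+1)$, and returning a tree decomposition of $H$ of width $O(w)$ and height $O(\log|V(H)|)$ whose root bag contains $W$; the top-level call is $\mathrm{Build}(G,\emptyset)$. If $|V(H)|=O(w)$ we return the single bag $V(H)$. Otherwise we fix a width-$\le w$ decomposition of $H$ (a restriction of the original one), let $X_{1}$ be a $\mathbf{1}_{V(H)}$-centroid bag and $X_{2}$ a $\mathbf{1}_{W}$-centroid bag of it, and set $S:=X_{1}\cup X_{2}$; then $|S|\le 2(w+1)$, every component of $H-S$ has at most $|V(H)|/2$ vertices (being contained in a component of $H-X_{1}$) and at most $|W|/2$ interface vertices (being contained in a component of $H-X_{2}$). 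We make $S\cup W$ the root bag (size $\le 2(w+1)+4(w+1)=O(w)$), and for each component $C$ of $H-S$ we recurse on $H[C\cup N_{H}(C)]$ with new interface $(W\cap C)\cup N_{H}(C)$, hanging each returned decomposition directly beneath the root bag. The output is a valid tree decomposition: vertex- and edge-coverage are immediate since $S$ separates the components, and the connectivity condition holds because every interface vertex lies in the root bag and descends into exactly one child, which is glued directly under the root.

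For the bounds, $|C|\le|V(H)|/2$ gives recursion depth, hence output height, $O(\log n)$, while $|(W\cap C)\cup N_{H}(C)|\le\tfrac{1}{2}|W|+2(w+1)\le 4(w+1)$ preserves the invariant $|W|\le 4(w+1)$, so every produced bag has $O(w)$ vertices; and the whole procedure is plainly polynomial time. This already proves the statement with ``$3w+2$'' weakened to some $O(w)$. To recover the sharp constant one replaces the two-bag cut $S=X_{1}\cup X_{2}$ by a \emph{single} bag that is simultaneously balanced for $\mathbf{1}_{V(H)}$ and for $\mathbf{1}_{W}$ --- such a bag can be found by examining the path in $T$ between the two centroid bags --- which lets the root bag be $X^{*}\cup W$ with the invariant tightened to $|W|\le 2(w+1)$, hence of size at most $3(w+1)$ and width at most $3w+2$.

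The one genuinely delicate point --- and the reason the bound is $3w+2$ rather than something linear in $w$ with a clean small constant --- is exactly this width bookkeeping: preventing the interface $W$ from snowballing as it absorbs the cut sets of all of its ancestors in the recursion. Balancing each cut \emph{simultaneously} with respect to the vertex count (which drives the $O(\log n)$ depth) and with respect to the current interface (which controls the width) is what stabilizes $|W|$ at $O(w)$; arguing that a single bag achieves this simultaneous balance, and tracking the constants down to $3w+2$, is where the care of \cite{BodlaenderH98} goes. The centroid lemma, the validity of the reconstruction, and the logarithmic depth bound are all routine once this is arranged.
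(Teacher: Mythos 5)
The paper does not prove this statement at all---it is imported verbatim from \cite{BodlaenderH98}---so there is no internal proof to compare against; I will judge your reconstruction on its own and against the cited source. Your main recursion is correct: the centroid-bag lemma, the invariant $|W|\le 4(w+1)$ maintained via the two-bag cut $S=X_1\cup X_2$, the validity of gluing the children under the root bag $S\cup W$, and the $O(\log n)$ depth all check out (note only that the recursive instance $H[C\cup N_H(C)]$ has $|V(H)|/2+O(w)$ vertices rather than $|V(H)|/2$, so one needs the base case $|V(H)|=O(w)$ to get a geometric decrease---which you have). This yields width roughly $6w+5$, i.e.\ $O(w)$, and since the paper only ever uses Theorem~\ref{thm:balance} to obtain a decomposition of width $O(\tw)$ and height $O(\log n)$ inside an $O^*((\tw/\epsilon)^{O(\tw)})$ bound, your argument would suffice for every application in this paper. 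It is, however, a genuinely different route from \cite{BodlaenderH98}, which obtains $3w+2$ by parallel tree contraction (rake-and-compress): each bag of the balanced decomposition is the union of at most $3$ bags of the input decomposition, whence $3(w+1)-1=3w+2$ falls out with no separator bookkeeping at all.

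The genuine gap is in your final paragraph. A single bag that is simultaneously $\tfrac12$-balanced for $\mathbf{1}_{V(H)}$ and for $\mathbf{1}_W$ need not exist, and it cannot in general be found on the path between the two centroid bags. Concretely, take $H$ to be a ``broom'': a vertex $z$, a long path $P$ of $200$ vertices attached to $z$ on one side, a short path $Q$ of $10$ vertices attached on the other side, and $W=\{a_1,a_2\}$ both adjacent to the far endpoint of $Q$; use the natural width-$1$ path decomposition. Every $\mathbf{1}_{V(H)}$-balanced bag lies in the middle of $P$, while every $\mathbf{1}_W$-balanced bag must separate $a_1$ from $a_2$ (or contain one of them) and hence lies at the far end of $Q$; the two sets of good bags are nonempty, connected, and disjoint. (The set of $\mu$-good bags is always connected---your path-walking intuition is right about that---but two connected subtrees of a tree need not intersect.) So the tightening from $|W|\le 4(w+1)$ to $|W|\le 2(w+1)$ does not go through as stated, and the constant $3w+2$ is not recovered by this method; the standard workarounds (alternating levels that balance $V(H)$ and $W$ in turn, or balancing a convex combination of the two weights) still leave you at $4w+O(1)$ or worse. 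If you want the exact constant of the theorem you should follow the tree-contraction argument of \cite{BodlaenderH98}; if you only need $O(w)$, delete the last paragraph and state the theorem with that weaker width bound.
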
 
\begin{lemma}\label{lem:metric} Let $G=(V,E)$ be an edge-weighted graph,~$\mathcal{T}$ a tree decomposition of~$G$, and $u,v\in V$ two vertices that
appear together in a bag of~$\mathcal{T}$. Let~$G'$ be the graph obtained from~$G$ by adding~$(u,v)$ to~$E$ (if it does not already exist) and setting
$w((u,v)) = d_G(u,v)$. Then~$\mathcal{T}$ is a valid decomposition of ~$G'$, and $\forall k,r$, $G'$ admits a $(k,r)$-center if and only if~$G$ does.
\end{lemma}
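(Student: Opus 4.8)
The plan is to establish the two assertions separately: first that $\mathcal{T}$ is still a valid tree decomposition of $G'$, and then that $G$ and $G'$ have identical shortest-path distances, from which the equivalence of their $(k,r)$-centers follows at once, for every $k$ and $r$.

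For the first point I would simply check the three tree-decomposition axioms. Since $G$ and $G'$ share the same vertex set and differ only by the single pair $(u,v)$, the vertex-cover axiom ($\bigcup_i X_i = V$) and the connectedness-of-bags axiom depend only on the bags, so they are inherited verbatim from $\mathcal{T}$ being a decomposition of $G$; the edge axiom holds for every edge of $E$ for the same reason, and for the new edge $(u,v)$ it holds exactly because of the hypothesis that $u$ and $v$ lie together in some bag of $\mathcal{T}$. Hence $\mathcal{T}$ is a valid decomposition of $G'$, of the same width.

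For the second point I would prove $d_{G'}(x,y) = d_G(x,y)$ for all $x,y\in V$. The inequality $d_{G'} \le d_G$ is immediate: any shortest $x$--$y$ path of $G$ uses only edges of $E$, each of which has in $G'$ a weight no larger than in $G$ (the only edge whose weight can change is $(u,v)$, whose new weight $d_G(u,v)$ is at most its old one). For the reverse inequality I would take a shortest $x$--$y$ walk $W$ of $G'$ and reroute it into a walk of $G$ of the same total length, by replacing every traversal of the edge $(u,v)$ (in the relevant direction) with a shortest $u$--$v$ path of $G$; this substitution is length-preserving precisely because $w((u,v))$ was set to $d_G(u,v)$. (If $d_G(u,v)=\infty$ the edge is effectively absent and there is nothing to do.) Given $d_{G'}=d_G$, and since a set $K$ with $|K|\le k$ is a $(k,r)$-center of a graph $H$ iff every vertex of $H$ is within distance $r$ in $H$ of some vertex of $K$, exactly the same sets $K$ witness a $(k,r)$-center in $G$ and in $G'$, which yields the claimed equivalence.

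The argument is entirely routine and I do not expect a genuine obstacle; the one spot that needs a little care is the rerouting step in the proof of $d_G\le d_{G'}$, where one must substitute the edge $(u,v)$ by an honest shortest $u$--$v$ path of $G$ traversed in the correct direction (this matters when $w$ is non-symmetric), the substitution being length-neutral exactly by the choice $w((u,v))=d_G(u,v)$.
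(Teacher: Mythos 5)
Your proposal is correct and follows essentially the same route as the paper's proof: the bag containing $u,v$ certifies the edge axiom for the new edge, and the choice $w((u,v))=d_G(u,v)$ guarantees that shortest-path distances are unchanged, so the $(k,r)$-centers coincide. The paper states these two observations in one line; you have simply spelled them out (including the correct handling of directed rerouting for non-symmetric weights), and no gap remains.
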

 \begin{proof}
 To see that $G'$ admits a $(k,r)$-center if and only if $G$ does, observe that
 adding the weighted edge $(u,v)$ did not change the distances between any two
 vertices. To see that $\mathcal{T}$ remains a valid decomposition we recall
 that $u,v$ appear together in a bag of $\mathcal{T}$.  
 \end{proof}
Let us also give an approximate version of the distance labeling problem we
defined above, for a given error parameter $\epsilon>0$.  Let $\delta>0$ be
some appropriately chosen secondary parameter (we will eventually set $\delta\approx
\frac{\epsilon}{\log n}$). We define a \emph{$\delta$-labeling} function $\dl_\delta$
as a function from $V$ to $\{0\}\cup\{ (1+\delta)^i\ |\ i\in \mathbb{N},
(1+\delta)^i\le (1+\epsilon) r\}$. In words, such a function assigns (as
previously) a distance label to each vertex, with the difference that now all
values assigned are integer powers of $(1+\delta)$, and the maximum value is at
most $(1+\epsilon)r$.  We now say that a vertex~$u$ is \emph{$\epsilon$-satisfied} if
$\dl_\delta(u)=0$ or, for some $v\in N(u)$ we have $\dl_\delta(u)\ge
\dl_\delta(v)+ \frac{w((v,u))}{1+\epsilon}$. As previously, we say that
$\dl_\delta$ is \emph{valid} if all vertices are $\epsilon$-satisfied, and define its
cost as $|\dl_\delta^{-1}(0)|$. The following Lemma~\ref{lem:approx-label} shows the equivalence of a valid $\delta$-labeling function of cost $k$ and a $(k,(1+\epsilon)^2r)$-center for $G$ and using it we conclude the proof of Theorem~\ref{thm:tw_approx},  stating the main result of this section.
\begin{lemma}\label{lem:approx-label} If for a weighted graph $G=(V,E)$ and any
$k,r,\delta,\epsilon>0$, there exists a valid $\delta$-labeling function with
cost $k$, then there exists a $(k,(1+\epsilon)^2r)$-center for $G$.
\end{lemma}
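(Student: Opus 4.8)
The plan is to exhibit the center-set directly: given a valid $\delta$-labeling $\dl_\delta$ of cost $k$, take $K:=\dl_\delta^{-1}(0)$, which by definition of cost has $|K|=k$. Everything then reduces to a single distance bound, the \emph{key claim}: for every $u\in V$ we have $\min_{c\in K} d(c,u) \le (1+\epsilon)\,\dl_\delta(u)$. Once this is proved, the conclusion is immediate, since a $\delta$-labeling only takes values that are at most $(1+\epsilon)r$, so $\min_{c\in K} d(c,u)\le (1+\epsilon)\dl_\delta(u)\le (1+\epsilon)^2 r$, which is exactly the statement that $K$ is a $(k,(1+\epsilon)^2 r)$-center.

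To prove the key claim I would induct on the value $\dl_\delta(u)$. This is a well-founded induction because a $\delta$-labeling assigns only finitely many distinct values (namely $0$ together with the powers $(1+\delta)^i\le (1+\epsilon)r$), which form a finite totally ordered set. For the base case $\dl_\delta(u)=0$ we have $u\in K$ and the bound holds trivially. For the inductive step, suppose $\dl_\delta(u)>0$; since $\dl_\delta$ is valid, $u$ is $\epsilon$-satisfied, so there is a neighbor $v$ with $\dl_\delta(u)\ge \dl_\delta(v)+\frac{w((v,u))}{1+\epsilon}$. Because the edge weights are strictly positive, $\frac{w((v,u))}{1+\epsilon}>0$, hence $\dl_\delta(v)<\dl_\delta(u)$ and the induction hypothesis gives $\min_{c\in K}d(c,v)\le (1+\epsilon)\dl_\delta(v)$. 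Extending a shortest $c$-to-$v$ walk by the edge $(v,u)$ and using the triangle inequality,
\[ \min_{c\in K} d(c,u) \le \min_{c\in K}d(c,v) + w((v,u)) \le (1+\epsilon)\dl_\delta(v) + w((v,u)) \le (1+\epsilon)\dl_\delta(u), \]
where the last step uses $\dl_\delta(v)\le \dl_\delta(u)-\frac{w((v,u))}{1+\epsilon}$. This closes the induction.

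The only delicate point — and essentially the sole place where the argument relies on a paper-wide hypothesis rather than on formal manipulation — is ensuring the label value strictly decreases when we pass to the witnessing neighbor $v$, which is what makes the induction well-founded; strict positivity of the (integer) edge weights, assumed throughout, supplies this. A degenerate case worth a one-line remark is $k=0$: then $K=\emptyset$, no vertex has label $0$, and a minimum-label vertex cannot be $\epsilon$-satisfied unless $V=\emptyset$, in which case the statement is vacuous. I do not anticipate any further obstacle; with this lemma in hand, Theorem~\ref{thm:tw_approx} follows by choosing $\delta\approx \epsilon/\log n$ so that the accumulated rounding factor over the $O(\log n)$ balanced height stays within $(1+\epsilon)$, running the rounded version of the DP of Theorem~\ref{thm:tw-exact} (after the metric-closure preprocessing of Lemma~\ref{lem:metric} and the balancing of Theorem~\ref{thm:balance}), and converting its output $\delta$-labeling into a center-set via this lemma, up to a final reparametrisation of $\epsilon$.
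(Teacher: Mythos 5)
Your proposal is correct and follows essentially the same argument as the paper: take $K=\dl_\delta^{-1}(0)$ and prove $d(K,u)\le(1+\epsilon)\dl_\delta(u)$ by induction on the (finitely many, totally ordered) label values, using $\epsilon$-satisfaction, strict positivity of weights, and the triangle inequality. The paper phrases the induction on the exponent $i$ of $(1+\delta)^i$ and anchors it at the smallest nonzero label, whereas you anchor it at label $0$ — a cosmetic difference only.
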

\begin{proof}
Again we set $K=\dl_\delta^{-1}(0)$, where $\dl_\delta$ is a valid
$\delta$-labeling function. Recall that $\dl_\delta$ assigns value
$(1+\delta)^i$, for $i\in\mathbb{N}$ to all vertices of $V\setminus K$. We
prove by induction on $i$, that a vertex $u$ for which
$\dl_\delta(u)=(1+\delta)^i$ is at distance at most $(1+\epsilon)\dl_\delta(u)$
from $K$.  First, if $\dl_\delta(u)=(1+\delta)$ and $u$ is satisfied, then $u$
has a neighbor $v$ with $\dl_\delta(u) \ge \dl_\delta(v)+
\frac{w((v,u))}{1+\epsilon}$. Because $w((v,u))>0$, we must have $v\in K$, so
we conclude that $d(K,u) \le (1+\epsilon)\dl_\delta(u)$. Similarly, if we have
$\dl_\delta(u)=(1+\delta)^{i+1}$ and $u$ is satisfied, there must exist $v\in
N(u)$ with $\dl_\delta(u) \ge \dl_\delta(v) + \frac{w((v,u))}{1+\epsilon}$, so
$\dl_\delta(v) \le (1+\delta)^i$. By the inductive hypothesis, it is $d(K,v)\le
(1+\epsilon) \dl_\delta(v)$. So $d(K,u) \le w((v,u)) + d(K,v) \le
(1+\epsilon)\left(\dl_\delta(v) + \frac{w((v,u))}{1+\epsilon} \right)\le
(1+\epsilon)\dl_\delta(u)$. Since for all $u\in V$ we have $\dl_\delta(u)\le
(1+\epsilon)r$, all vertices are at distance at most $(1+\epsilon)^2r$ from
$K$.  \end{proof}
We are now ready to state the main result of this section:
\begin{theorem}\label{thm:tw_approx}
There is an algorithm which, given a weighted instance of \KC, $[G,k,r]$, a
tree decomposition of $G$ of width $\tw$ and a parameter $\epsilon>0$, runs in
time $O^*((\tw/\epsilon)^{O(\tw)})$ and either returns a
$(k,(1+\epsilon))$-center of $G$, or correctly concludes that $G$ has no
$(k,r)$-center.  
\end{theorem}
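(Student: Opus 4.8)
The plan is to build the approximation algorithm on top of the exact DP of Theorem~\ref{thm:tw-exact}, applying the ``round the table'' technique of~\cite{Lampis14} to the distance labels. First I would preprocess: by repeatedly invoking Lemma~\ref{lem:metric}, replace $G$ by its \emph{bag-wise metric closure} $G'$ --- for every pair $u,v$ occurring together in some bag of the given decomposition, add the edge $(u,v)$ with weight $d_G(u,v)$. By Lemma~\ref{lem:metric} this changes neither the feasibility of \KC\ nor the validity of the decomposition, hence not $\tw$, and it guarantees that when an Introduce node introduces a vertex $u$, $u$ is already adjacent in $G'$ to every other vertex of its bag; thus the satisfaction condition is only ever tested across a single edge whose weight is a genuine distance, which is what lets us avoid error-prone arithmetic. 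I would then apply Theorem~\ref{thm:balance} to replace the decomposition by a balanced one of width $3\tw+2 = O(\tw)$ and height $O(\log n)$.

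Second, I would run the \emph{rounded} version of the DP of Theorem~\ref{thm:tw-exact}. Fix $\delta = \Theta(\epsilon/\log n)$. The rounded table $\tilde D_t$ is indexed as before, except the distance-labeling component ranges only over the $\delta$-labels $\{0\}\cup\{(1+\delta)^i \le (1+\epsilon)r\}$, and the satisfaction test at Introduce nodes is relaxed to the $\epsilon$-satisfaction inequality $\dl_\delta(v)\ge \dl_\delta(u)+\tfrac{w((u,v))}{1+\epsilon}$ of Lemma~\ref{lem:approx-label}. Since the labels live in the index, no arithmetic is ever performed on them, and the number of $\delta$-labels is $O\!\big(\log_{1+\delta}((1+\epsilon)r)\big) = O\!\big((\log n)/\epsilon\big)$ once one assumes (via a standard rescaling of the edge weights) that $r = \mathrm{poly}(n)$. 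Hence each table has $O\!\big((\log n/\epsilon)^{O(\tw)}\big)$ entries, each node is processed in comparable time (reusing fast subset convolution at Join nodes exactly as in Theorem~\ref{thm:tw-exact}), and Lemma~\ref{lem:fpt-logn} upgrades the resulting $O^*\!\big((\log n/\epsilon)^{O(\tw)}\big)$ bound to the claimed $O^*\!\big((\tw/\epsilon)^{O(\tw)}\big)$. The algorithm reads off at the root the minimum cost of a valid $\delta$-labeling; if it is $\le k$ it outputs the corresponding center-set, otherwise it reports that no $(k,r)$-center exists.

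Third, correctness. One direction is immediate: whatever the rounded DP returns is, by construction, a valid $\delta$-labeling, so Lemma~\ref{lem:approx-label} turns cost $\le k$ into a $(k,(1+\epsilon')^2r)$-center, and running the whole procedure with $\epsilon' = \epsilon/3$ produces a $(k,(1+\epsilon)r)$-center. The harder direction is that if $G$ has a $(k,r)$-center then the rounded DP finds cost $\le k$. I would fix an exact optimal distance labeling $\dl^*$ of $G'$, which has cost $\le k$ by Theorem~\ref{thm:tw-exact} and Lemma~\ref{lem:dl}, and prove by induction from the leaves of the balanced decomposition upwards that the rounded DP admits, at every node $t$ of height $h(t)$, a partial solution ``shadowing'' $\dl^*$: it assigns to each bag vertex $v$ a $\delta$-label in $[\dl^*(v),(1+\delta)^{O(h(t))}\dl^*(v)]$, it $\epsilon$-satisfies every vertex that $\dl^*$ satisfies inside $G'[B_t^\downarrow]$, and its cost equals that of $\dl^*$ restricted there. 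The slack grows by only a bounded factor per level, because at each Introduce/Forget step the newly-required $\epsilon$-satisfaction inequality is re-established by rounding the relevant label \emph{up} to the next power of $(1+\delta)$; this keeps it a valid $\delta$-label and absorbs the accumulated error, the $\tfrac{1}{1+\epsilon}$ slack in the test being exactly what makes the rounding admissible. Since the root has height $O(\log n)$, the total slack is $(1+\delta)^{O(\log n)}\le 1+\epsilon/3$, so the root labeling is still a valid $\delta$-labeling of cost $\le k$, and we are done.

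The step I expect to be the main obstacle is the propagation of this slack through Join nodes. There the two children must assign the same $\delta$-label to every shared bag vertex, yet each is only guaranteed to have inflated $\dl^*(v)$ by a factor depending on its own height, so the shadow labeling has to be defined in a globally consistent way --- e.g.\ as $\lceil \dl^*(v)\cdot (1+\delta)^{c\,h(t)}\rceil$ rounded to a power of $(1+\delta)$ at node $t$ --- and one must then verify that feasibility and the $-|\dl^{-1}(0)\cap B_t|$ double-counting correction both survive the merge. Everything else is a routine combination of the exact DP of Theorem~\ref{thm:tw-exact}, the balancing theorem, and the rounding machinery of~\cite{Lampis14}.
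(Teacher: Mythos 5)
Your overall route is the paper's: bag-wise metric closure via Lemma~\ref{lem:metric}, balancing via Theorem~\ref{thm:balance}, a rounded version of the DP of Theorem~\ref{thm:tw-exact} over labels $\{0\}\cup\{(1+\delta)^i\}$ with $\delta=\Theta(\epsilon/\log n)$ and the relaxed $\epsilon$-satisfaction test, the forward direction via Lemma~\ref{lem:approx-label}, the converse by induction on the height of a balanced bag, and the final speed-up via Lemma~\ref{lem:fpt-logn}. However, there is a genuine gap in the converse direction, and it sits exactly where you suspected. Your shadow labeling is one-sided: every label is inflated, $\dl_\delta(v)\in[\dl^*(v),(1+\delta)^{O(h)}\dl^*(v)]$, and errors are ``absorbed by rounding up.'' This fails at an Introduce node where the new vertex $u$ is the \emph{satisfier} of a vertex $v$ already in the bag, i.e.\ $\dl^*(v)\ge\dl^*(u)+w((u,v))$. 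If $v$ was fixed lower in the tree with inflation factor $F_v$ and $u$ is fixed now with factor $F_u>F_v$, the required inequality $F_v\dl^*(v)\ge F_u\dl^*(u)+\frac{w((u,v))}{1+\epsilon}$ reduces to $(F_u-F_v)\dl^*(u)\le w((u,v))\bigl(F_v-\frac{1}{1+\epsilon}\bigr)$, and with $F_u-F_v\approx\delta\cdot\Delta h$ and $\dl^*(u)$ as large as $r$ while $w((u,v))$ may be $1$, this is false. The $\frac{1}{1+\epsilon}$ slack in the test absorbs multiplicative error in $w((u,v))$, not an additive drift of order $\delta\cdot\dl^*(u)$ between the two labels.

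The paper's fix is a \emph{two-sided} invariant, $\frac{\dl(u)}{(1+\delta)^h}\le\dl_\delta(u)\le(1+\delta)^h\dl(u)$, together with a trichotomy at Introduce nodes that the metric closure makes exhaustive: either $\dl(u)=0$, or $u$ is newly satisfied by the bag (then its label is rounded \emph{up} to $(1+\delta)^{\lceil\log_{1+\delta}((1+\delta)^{h-1}\dl(u))\rceil}$), or $u$ newly satisfies others (then its label is rounded \emph{down} to $(1+\delta)^{\lceil\log_{1+\delta}(\dl(u)/(1+\delta)^{h})\rceil}$); the metric closure is what rules out the mixed case in which $u$ simultaneously needs to be satisfied and to satisfy a previously unsatisfied bag vertex, which no single choice of label could accommodate. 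The widening interval also dissolves your Join-node worry: the witness value of a shared vertex is chosen once, and the induction hypothesis at the parent only asks that it lie in a \emph{larger} interval, so no cross-branch coordination is needed. To repair your proof you should replace the monotone ``round up'' rule by this two-sided invariant and carry out the three-case analysis at Introduce nodes; as written, the induction does not go through.
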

\begin{proof} Our algorithm will follow along the same lines as the algorithm
of Theorem \ref{thm:tw-exact}, with the major difference that the labeling
functions we consider are $\delta$-labeling functions, for a value of $\delta$
that we define further below. More precisely, if $\Sigma_r:= \{0\}\cup\{
(1+\delta)^i\ |\ i\in \mathbb{N}, (1+\delta)^i\le (1+\epsilon)r\}$, then for
each bag $B_t$ of $\mathcal{T}$ we define a table as a function $D^\delta_t:
\left( (B_t\to \Sigma_r) \times (B_t\to \{0,1\}) \right) \to
\{0,\ldots,n\}\cup\{\infty\}$.  Let $B_t^\downarrow$ denote the vertices of $G$
contained in bags in the sub-tree rooted at $B_t$. The meaning of $D^\delta_t$
here is similar to that of $D_t$ in the algorithm of Theorem
\ref{thm:tw-exact}. For each $\delta$-distance labeling $\dl_\delta: B_t\to
\Sigma_r$, and for each $S\subseteq B_t$, $D^\delta_t$ tells us what is the
minimum cost of any $\delta$-distance labeling of $B_t^\downarrow$ which
satisfies all vertices $B_t^\downarrow$, except perhaps those of $B_t\setminus
S$, and which agrees with $\dl_\delta$ on $B_t$.

Before we proceed, we pre-process the input using Theorem \ref{thm:balance} and
Lemma \ref{lem:metric}. In particular, we construct a tree decomposition
$\mathcal{T'}$ of width at most $3\tw+2$ and height at most $H\le c\log n$ for
some constant $c$, and then for any two vertices $u,v$ which appear together in a bag of
$\mathcal{T'}$, we add the edge $(u,v)$ to $G$, as in Lemma
\ref{lem:metric}. We define the height of a bag $B_t$ of the decomposition
recursively as follows: the height of a Leaf bag is $1$, for any other bag
$B_t$ its height is $1$ plus the maximum of the heights of its children.
Clearly, under this definition the root bag has height $H$ and all other bags
have height $< H$. We now set $\delta = \frac{\epsilon}{2H} =
\Theta\left(\frac{\epsilon}{\log n}\right)$. Observe that because of this
setting we have for all $h\le H$ that $(1+\delta)^h \le
(1+\frac{\epsilon}{2H})^H \le e^{\epsilon/2} \le (1+\epsilon)$ for sufficiently
small $\epsilon$ (e.g. it suffices to assume without loss of generality
$\epsilon<1/4$). Our goal will be to return a $(k,(1+\epsilon)^2r)$-center, if
a $(k,r)$-center exists by producing a $\delta$-labeling and invoking Lemma
\ref{lem:approx-label}. The approximation ratio can then be brought down to
$(1+\epsilon)$ by adjusting $\epsilon$ appropriately.

Our algorithm now follows the algorithm of Theorem \ref{thm:tw-exact}, with the
only difference that the satisfaction of a vertex follows the definition of
$\epsilon$-satisfaction (that is, we effectively divide edge weights by
$(1+\epsilon)$). Specifically, we have the following:
\begin{itemize}
\item For a Leaf node $B_t$ that only contains a single vertex $u$, we set
$D_t^\delta(\dl_\delta,\{u\})=1$, if $\dl_\delta(u)=0$ and
$D_t^\delta(\dl_\delta,\{u\})=\infty$, otherwise.  We set
$D_t^\delta(\dl_\delta,\emptyset)=0$ for any other distance labeling of $u$.
\item For a Forget node $B_t$ that forgets a vertex $u$ and has a child node
$B_{t'}$, we set $D_t^\delta(\dl_\delta,S) = \min_{\dl_\delta'}
D_{t'}^\delta(\dl_\delta',S\cup\{u\})$ where the minimum ranges over all
labeling functions $\dl_\delta'$ on $B_t'$ that agree with $\dl_\delta$ on
$B_t=B_{t'}\setminus\{u\}$.  In other words, we only retain partial solutions
which have already satisfied the vertex $u$ that we forget.
\item For an Introduce node $B_t$ that introduces a vertex $u$ and has a child
node $B_{t'}$ (so $B_t=B_{t'}\cup\{u\}$), we consider every labeling
$\dl_\delta'$ of $B_{t'}$ and every $S'\subseteq B_{t'}$ such that
$D_{t'}^\delta(\dl_\delta',S')\neq \infty$.  For each such $\dl_\delta', S'$,
and for each $i\in\Sigma_r$ we construct a $\dl_\delta$ function for $B_t$
which agrees with $\dl_\delta'$ on $B_{t'}$ and sets $\dl_\delta(u)=i$. We
compute $S$ to be the set that contains $S'$ as well as all vertices of $B_{t'}$
satisfied by $\dl(u)$, and $u$ if it is satisfied by a vertex of $B_{t'}$ or
$\dl_\delta(u)=0$.  More precisely, we add $v\in B_{t'}$ in $S$ if
$\dl_\delta(v) \ge \dl_\delta(u)+\frac{w((u,v))}{1+\epsilon}$, and we add $u$
to $S$ if $\dl_\delta(u)=0$, or if there exists $v\in B_{t'}$ for which
$\dl_\delta(u) \ge \dl_\delta(v) + \frac{w((v,u))}{1+\epsilon}$.  We set
$D_t^\delta(\dl_\delta,S) = D_{t'}^\delta(\dl_\delta',S')+1$ if $\dl(u)=0$, and
$D_t^\delta(\dl,S) = D_{t'}^\delta(\dl',S')+1$ otherwise. All other entries of
$D_t$ are set to $\infty$.
\item For a Join node $B_t$ with two children $B_{t_1},B_{t_2}$, for each
labeling function $\dl_\delta$ on $B_t$ and each $S\subseteq B_t$, we set
$D_t(\dl_\delta,S) = \min_{S_1,S_2\subseteq S: S_1\cup S_2 = S}
D_{t_1}(\dl_\delta,S_1)+D_{t_2}(\dl_\delta,S_2) - |\dl_\delta^{-1}(0)\cap
B_t|$. 
\end{itemize}
To establish correctness of the above algorithm we proceed in two directions.
First, we show that for any bag $B_t$ we have $D_t^\delta(\dl_\delta,S)\le i$,
if and only if there exists a $\delta$-labeling of $B_t^\downarrow$ which
satisfies all vertices of $B_t^\downarrow$, except perhaps those of
$B_t\setminus S$, and has cost $i$. This can be shown by a standard inductive
argument: the statement is clearly true on Leaf bags, and it can be shown to be
true on other bags if we assume it to be true on their children. As a result,
by examining the table of the root bag at the end of the algorithm's execution,
we can correctly compute the minimum cost of a valid $\delta$-labeling. If
that cost is at most $k$, by Lemma \ref{lem:approx-label} we obtain a
$(k,(1+\epsilon)^2r)$-center of $G$.

It is now more interesting to establish the converse direction: we would like
to show that if a $(k,r)$-center exists then there exists a $\delta$-labeling
of $G$, which will therefore be found by the algorithm. The reason that this
direction is more difficult is that the converse of Lemma
\ref{lem:approx-label} does not hold for any $\delta$; indeed if $\delta$ is
too small it could be the case that a $(k,r)$-center exists, but the graph does
not admit a $\delta$-labeling.  We will thus need to prove that the $\delta$ we
have selected is sufficient.  In the remainder, suppose there exists a $\dl:
V\to \{0,\ldots,r\}$ which encodes a $(k,r)$-center of $G$. 

We will establish by induction the following property: for any bag $B_t$ of the
decomposition such that the height of $B_t$ is $h$, let $\dl_t$ denote the
restriction of $\dl$ to $B_t$ and $S$ the vertices of $B_t$ which are satisfied
in $G[B_t^\downarrow]$ by $\dl$. We will show that there always exists
$\dl_\delta: B_t\to \Sigma_r$, $S_\delta\supseteq S$ such that
$D_t^\delta(\dl_\delta,S_\delta) \le |\dl^{-1}(0)\cap B_t^\downarrow|$ and for
all $u\in B_t$ we have $\frac{\dl(u)}{(1+\delta)^h} \le \dl_\delta(u) \le
(1+\delta)^h\dl(u)$.

We observe that the property holds for Leaf bags $B_t=\{u\}$, because the
algorithm considers both the case where $u\in K$ (in which
$\dl(u)=\dl_\delta(u)=0$), as well as every possible labeling of $u$ with an
integer power of $(1+\delta)$, hence it considers a labeling such that
$\dl(u),\dl_\delta(u)$ are at most a multiplicative factor $(1+\delta)$ apart.
The property can easily be shown inductively for Forget and Join bags, if we
assume the property for their children, because whenever $\dl_\delta(u)\in
[\frac{\dl(u)}{(1+\delta)^h}, (1+\delta)^h\dl(u)]$, this implies
$\dl_\delta(u)\in [\frac{\dl(u)}{(1+\delta)^{h+1}}, (1+\delta)^{h+1}\dl(u)]$.
Hence, the interesting case is Introduce bags.

Consider an Introduce bag $B_t$ with child $B_{t'}$, so that $B_t =
B_{t'}\cup\{u\}$. Let $S\subseteq B_t$ be the set of vertices satisfied by
$\dl$ in $G[B_t^\downarrow]$, and similarly, let $S'\subseteq B_{t'}$ be the
corresponding set in $B_{t'}$. Clearly, $S'\subseteq S$. We now claim that at
least one of the following three must be true: (i) $\dl(u)=0$, (ii)
$S=S'\cup\{u\}$, (iii) $u\not\in S$. To see this, suppose for contradiction
that $\dl(u)\neq 0$ and $S$ contains all of $S'$, $u$ and some vertex $v_1\in
B_t\setminus S'$. Then $\dl(v_1) \ge \dl(u) + w((u,v_1))$, because $v_1\in S$
but $v_1\not\in S'$.  Furthermore, there exists $v_2\in B_{t'}$ such that
$\dl(u) \ge \dl(v_2) + w((v_2,u))$, because $u\in S$ and $\dl(u)\neq 0$.
Hence, $\dl(v_1) \ge \dl(v_2) + w((v_2,u)) + w((u,v_1)) \ge \dl(v_2) +
w((v_2,v_1))$, where the last inequality follows because we have used Lemma
\ref{lem:metric} to take the metric closure of $B_{t'}$. But the last
inequality implies $v_1\in S'$, contradiction. 

We therefore need to establish that, for each of the three cases above, the
algorithm will produce an entry $D^\delta_t(\dl_\delta,S_\delta)$ with
$S\subseteq S_\delta$ and $\dl_\delta(u)$ which is at most a factor of
$(1+\delta)^h$ apart from $\dl(u)$. Assume by the inductive hypothesis that
$D^\delta_{t'}(\dl_\delta',S_\delta')\le i$, for some $S_\delta'\supseteq S'$,
$\dl_\delta'$ which has $\forall v\in B_{t'}, \dl_\delta'(v)\in
[\frac{\dl(u)}{(1+\delta)^{h-1}}, (1+\delta)^{h-1}\dl(u) ] $, and for
$i=|\dl^{-1}(0)\cap B_{t'}^\downarrow|$.  

We have:
\begin{enumerate}
\item[(i)] If $\dl(u)=0$, the algorithm will consider a $\dl_\delta$ which sets
$\dl_\delta(u)=0$ and agrees with $\dl'_\delta$ on $B_{t'}$. From the entry
$D_{t'}^\delta(\dl'_\delta,S'_\delta)\le i$ it will construct an entry
$D_{t}(\dl_\delta,S_\delta)\le i+1$, with $S\subseteq S_\delta$, because for
each $v\in S\setminus S'$, we have $\dl(v)\ge w((u,v))$, therefore,
$\dl_\delta(v) \ge \frac{\dl(v)}{(1+\delta)^{h-1}} \ge
\frac{w((u,v))}{1+\epsilon}$, where we have used the fact that $(1+\delta)^h\le
1+\epsilon$.
\item[(ii)] Here we assume $\dl(u)\neq 0$ and $u\in S$. There exists therefore
$v\in B_{t'}$ with $\dl(u) \ge \dl(v) + w((v,u))$. Let $r:=
(1+\delta)^{h-1}\dl(u)$. The algorithm will consider the function $\dl_\delta$
which agrees with $\dl'_\delta$ on $B_{t'}$ and sets $\dl_\delta(u) =
(1+\delta)^{\lceil \log_{(1+\delta)} r \rceil}$. We have $\dl_\delta(u) \ge
(1+\delta)^{h-1}\dl(u) \ge (1+\delta)^{h-1}(\dl(v)+w((v,u))\ge \dl_\delta(v) +
\frac{w((v,u))}{1+\epsilon}$ , thus the algorithm will (correctly) add $u$
to $S_\delta'$ to obtain $S_\delta\supseteq S$.  Furthermore,
$(1+\delta)^{h-1}\dl(u) \le \dl_\delta(u) \le (1+\delta)^{h}\dl(u)$.
\item[(iii)] Here the interesting case is if $S\setminus S'\neq\emptyset$.
Consider a $v\in S\setminus S'$. We have $\dl(v) \ge \dl(u) + w((u,v))$, since
$v\in S\setminus S'$. Let $r:= \frac{\dl(u)}{(1+\delta)^h}$. The algorithm will
consider the $\dl_\delta$ which agrees with $\dl'_\delta$ on $B_{t'}$ and sets
$\dl_\delta(u) = (1+\delta)^{\lceil \log_{(1+\delta)} r \rceil}$. We have
$\dl_\delta(v) \ge \frac{\dl(v)}{(1+\delta)^{h-1}} \ge
\frac{\dl(u)}{(1+\delta)^{h-1}} + \frac{w((u,v))}{1+\epsilon} \ge \dl_\delta(u)
+  \frac{w((u,v))}{1+\epsilon}$. Hence, the algorithm will extend $S_\delta'$
by adding to it all elements of $S\setminus S'$. Furthermore,
$(1+\delta)^{-h+1}\dl(u) \ge \dl_\delta(u) \ge (1+\delta)^{-h}\dl(u)$.
\end{enumerate}
We can therefore conclude that whenever a $(k,r)$-center exists, we will be
able to recover from the root bag of the algorithm's DP table a
$(k,(1+\epsilon)^2r)$-center with the same or lower cost. In particular, by the
property we established above, if $B_r$ is the root bag, there exists
$\dl_\delta$ such that $D_r^\delta(\dl_\delta,B_r) \le k$, and for all $u\in
B_r$ we have $\dl_\delta(u) \le (1+\delta)^H \dl(u) \le (1+\epsilon)r$.

Finally, let us consider the algorithm's running time. Observe that $|\Sigma_r|
= O(\log_{(1+\delta}r) = O(\frac{\log r}{\log(1+\delta)}) = O(\frac{\log
r}{\delta})$, where we have used that fact that $\ln(1+\delta) \approx \delta$
for sufficiently small $\delta$ (that is, sufficiently large $n$). By recalling
that $\delta = \Theta(\log n/\epsilon)$ we get a running time of $O(\log
n/\epsilon)^{O(\tw)}$, which implies the promised running time by Lemma
\ref{lem:fpt-logn}. 
\end{proof}

\section{Clique-width revisited: FPT approximation scheme}\label{sec:fptas_CW} 

We give here an FPT-AS for \KC\ parameterized by~$\cw$, both
for unweighted and for weighted instances (for a weighted definition of~$\cw$ which we explain below). Our algorithm builds on the algorithm of
Section \ref{sec:fptas}, and despite the added generality of the
parameterization by~$\cw$, we are able to obtain an algorithm with
similar performance: for any $\epsilon>0$, our algorithm runs in time
$O^*((\cw/\epsilon)^{O(\cw)})$ and produces a $(k,(1+\epsilon)r)$-center if the
input instance admits a $(k,r)$-center. 

Our main strategy, which may be of independent interest, is to pre-process the
input graph $G=(V,E)$ in such a way that the answer does not change, yet producing a graph whose~$\tw$ is bounded by $O(\cw(G))$. The main insight
that we rely on, which was first observed by~\cite{GurskiW00},
is that a graph of low~$\cw$ can be transformed into a graph of low~$\tw$ if one removes all large bi-cliques. Unlike previous applications of
this idea (e.g. \cite{LampisMMU14}), we do not use the main theorem of
\cite{GurskiW00} as a ``black box'', but rather give an explicit construction of a
tree decomposition, exploiting the fact that \KC\ allows us to relatively
easily eliminate complete bi-cliques. As a result, we obtain a tree
decomposition of width not just bounded by some function of $\cw(G)$, but
actually $O(\cw(G))$.  

In the remainder we deal with the weighted version of \KC. To allow
clique-width expressions to handle weighted edges, we interpret the
clique-width join operation $\eta$ as taking three arguments. The
interpretation is that $\eta(a,b,w)$ is an operation that adds (directed) edges
from all vertices with label $a$ to all vertices with label $b$ and gives
weight $w$ to all these edges. It is not hard to see that if a graph has a
(standard) clique-width expression with $\cw$ labels, it can also be
constructed with $\cw$ labels in our context, if we replace every standard join
operation $\eta(a,b)$ with $\eta(a,b,1)$ followed by $\eta(b,a,1)$. Hence, the
algorithm we give also applies to unweighted instances parameterized by
(standard) clique-width.
We will also deal with a generalization of \KC, where we are also supplied, along with
the input graph $G=(V,E)$, a subset $I\subseteq V$ of \emph{irrelevant} vertices. In
this version, a $(k,r)$-center is a set $K\subseteq V\setminus I$, with $|K|=k$, such that
all vertices of $V\setminus I$ are at distance at most~$r$ from~$K$. Clearly,
the standard version of \KC\ corresponds to $I=\emptyset$. As we explain in the proof of 
Theorem \ref{thm_cw_approx}, this generalization does not make the problem
significantly harder.
In addition to the above, in this section we will allow edge weights to be
equal to $0$. This does not significantly alter the problem, however, if we are
interested in approximation and allow $r$ to be unbounded, as the following lemma shows: %(proof in Appendix \ref{append_cw_fptas}):
\begin{lemma}\label{lem:zero-ok}
There exists a polynomial algorithm which, for any $\epsilon>0$,  given an
instance $I = [G,w,k,r]$ of \KC, with weight function $w:V\to \mathbb{N}$,
produces an instance $I'=[G,w',k,r']$ on the same graph with weight function
$w':V\to\mathbb{N}^+$, such that we have the following: for any $\rho\ge1$, any
$(k,\rho r')$-center of~$I'$ is a $(k,\rho r)$-center of~$I$; any $(k,\rho r)$-center of~$I$ is a
$(k,(1+\epsilon)\rho r')$-center of~$I'$.  \end{lemma}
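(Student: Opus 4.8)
The plan is to eliminate the zero weights by a uniform \emph{scale-and-shift}: multiply every value of the weight function by a large integer $N$ and then add $1$, so that a former weight of $0$ becomes $1$ (placing the new weights in $\mathbb{N}^+$), while the additive shift stays negligible compared with the scaling. Concretely, I would set $N=\lceil n/\epsilon\rceil$, define $w'(x)=N\cdot w(x)+1$ for every $x$ in the domain $V$, and set $r'=Nr$. Since $w'(x)\ge 1$ is an integer of polynomial bit-size for every $x$, the instance $I'=[G,w',k,r']$ lives on the same graph $G$ and is produced in polynomial time, as required.

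The heart of the argument is a two-sided comparison of distances. The key observation is that along any shortest path $P$ — which may be taken simple and hence uses at most $n$ terms of the weight function — we have $N\cdot w(P)\le w'(P)\le N\cdot w(P)+n$: the lower bound because $w'$ dominates $N\,w$ termwise, the upper bound because each of the at most $n$ steps contributes exactly one extra unit. Taking minima over paths then yields, for every vertex $v$ and every subset $K$,
$$ N\, d(v,K)\;\le\; d'(v,K)\;\le\; N\, d(v,K)+n, $$
where $d,d'$ denote shortest-path distances under $w,w'$. The lower bound on $d'$ uses that \emph{every} $w'$-path beats $N$ times its $w$-length, while the upper bound uses an optimal $w$-path as a feasible candidate in $I'$.

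The two claimed directions follow by arithmetic. For the first, if $K$ is a $(k,\rho r')$-center of $I'$ then $d'(v,K)\le\rho r'=N\rho r$ for every relevant $v$, so the lower bound $d'\ge N\,d$ gives $d(v,K)\le\rho r$ \emph{exactly}; hence the same set $K$ is a $(k,\rho r)$-center of $I$. For the second, if $K$ is a $(k,\rho r)$-center of $I$ then $d'(v,K)\le N\,d(v,K)+n\le N\rho r+n$; since $N\ge n/\epsilon$ and $\rho r\ge 1$ (as $\rho\ge1$ and $r\ge1$) we have $n\le\epsilon N\le\epsilon N\rho r=\epsilon\rho r'$, whence $d'(v,K)\le(1+\epsilon)\rho r'$ and $K$ is a $(k,(1+\epsilon)\rho r')$-center of $I'$.

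The one point that needs care, and the reason the construction is rigid rather than ad hoc, is that the first direction must be exact — the same $\rho$ with no $(1+\epsilon)$ loss. This is what forces both the clean termwise lower bound $w'\ge N\,w$ (so that $d'\ge N\,d$ holds with \emph{no} error) and the choice $r'=Nr$ with no rounding; the entire $(1+\epsilon)$ slack is then spent in the converse direction, where the additive error $n$ is absorbed into a multiplicative factor precisely by taking $N\ge n/\epsilon$ and invoking $\rho r\ge1$. I expect the main (and fairly mild) subtlety to be the bookkeeping that $\rho$ is an arbitrary real $\ge1$ and that the thresholds $\rho r,\rho r'$ need not be integers: this causes no difficulty, because both directions are driven solely by the monotone inequalities above, and no appeal to the triangle inequality is needed since all distances are shortest-path distances by definition.
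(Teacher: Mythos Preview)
Your proof is correct and follows essentially the same scale-and-shift construction as the paper: multiply all weights by a factor of order $n/\epsilon$, add $1$, and scale $r$ by the same factor. Your write-up is in fact a bit more careful than the paper's (taking $N=\lceil n/\epsilon\rceil$ to keep weights integral, and making explicit that a simple shortest path has at most $n$ edges and that the exactness of the first direction forces $r'=Nr$ with no rounding), but the argument is the same.
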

\begin{proof}
We define a scaling factor $B:= \frac{n}{\epsilon}$.   We set $w'((u,v)) =
B\cdot w((u,v)) + 1$, for all $(u,v)\in E$. We set $r'=B\cdot r$. Suppose that
we have a $(k,\rho r)$-center $K$ for $w$. We use the same solution for $w'$
and the maximum distance from $K$ to any vertex is at most $B\rho r + n = \rho
r' + n$. However, $n\le \epsilon r'$, hence this solution has maximum distance
at most $(1+\epsilon)\rho r'$. For the converse direction, suppose that we have
a solution for the function $w'$ with maximum distance $\rho r'$. We use the
same solution and now the cost is at most $\rho r'/B \le \rho r$, since
$w((u,v))\le w'((u,v))/B$.  
\end{proof}
Our main tool is the following lemma, whose strategy is to replace every large
label-set by two ``representative'' vertices, in a way that retains the same
distances among all vertices of the graph. Applying this transformation
repeatedly results in a graph with small treewidth. The main theorem of this section then follows from the above.
\begin{lemma}\label{cw_fptas_transform_lem} Given a \KC\ instance $G=(V,E)$
along with a clique-width expression $T$ for $G$ on $\textrm{cw}$ labels, we
can in polynomial time obtain a \KC\ instance $G'=(V',E')$ with $V\subseteq
V'$, and a tree decomposition of $G'$ of width $\textrm{tw}=O(\textrm{cw})$,
with the following property: for all $k,r$, $G$ has a $(k,r)$-center if and
only if $G'$ has a $(k,r)$-center. \end{lemma}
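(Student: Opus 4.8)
The plan is to process the given clique-width expression $T$ of $G$ from the leaves towards the root, building the output graph $G'$ together with a tree decomposition of it, and maintaining at every node $t$ of $T$ the following invariant: (i) for each label $\ell$ active at $t$ we keep at most two \emph{representative} vertices, which stand in for the whole class of label-$\ell$ vertices of $G_t$ with respect to every operation performed higher in $T$; (ii) the portion of $G'$ constructed so far has a tree decomposition of width $O(\cw)$ whose root bag contains all current representatives (so at most $2\cw$ vertices); (iii) for all $k$ and $r$, the full graph $G$ has a $(k,r)$-center if and only if the correspondingly completed $G'$ does.

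Given this invariant, the routine node types are handled directly. An Introduce node creates one vertex, which is its own representative, with a one-bag decomposition. At a Join node $\eta(a,b,w)$ each of labels $a,b$ has at most two representatives, so we only insert the at most four weighted edges between these representatives and place these four vertices together with the current root bag into a fresh bag; the width stays $O(\cw)$, and, critically, no join ever creates a bi-clique with both sides large --- this is precisely the property (in the spirit of \cite{GurskiW00}, but realised here by an explicit decomposition rather than as a black box) that keeps the treewidth bounded. The substantive work occurs at Rename and Union nodes: there a label may temporarily collect up to four representatives (two inherited from each of the two previously processed pieces now merged under one label), and we must \emph{compress} these four vertices, and the bounded-width structures they summarise, down to two representatives without disturbing the answer for any $k,r$.

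This compression step is the main obstacle. The ideas I would build on are: first, that all vertices sharing a label in $G_t$ are \emph{future-twins} --- every operation above $t$ treats them identically --- so the region of $G_t$ hanging under a label class interacts with the rest of the eventual graph only through a uniform interface; hence the two new representatives need not preserve every pairwise distance (indeed a large weighted bi-clique cannot be simulated distance-for-distance by a bounded-size gadget), but only the ``coverage profile'' of that region as seen by future joins, namely how many centres the region needs given that its interface is reached from outside at a prescribed distance, and, dually, how close to its interface centres placed inside it can reach. Second, I would use the two relaxations established earlier in the section --- that edge weights may be $0$ (Lemma~\ref{lem:zero-ok}) and that a set $I$ of irrelevant vertices is available --- to attach to the two representatives a gadget of $O(1)$ new vertices (with possibly large weights) realising that profile, keeping the original vertices of the summarised region inside $G'$ but rerouting their connections so that they no longer influence either the treewidth or the feasibility above $t$; the corresponding $O(1)$-width piece is then glued into the tree decomposition. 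I expect the delicate part to be proving that two representatives plus an $O(1)$-size auxiliary gadget faithfully capture the profile of an \emph{arbitrarily merged} family of future-twin regions, while keeping all weights in $\mathbb{N}$ and the irrelevant set consistent; the rest --- the inductive correctness of the four node operations, and the final bound $\tw(G')=O(\cw)$, which is immediate since every bag has $O(\cw)$ vertices --- is bookkeeping. Applying the procedure at the root of $T$ then yields $G'$ with $\tw(G')=O(\cw)$ and the same $(k,r)$-center answer as $G$, as required.
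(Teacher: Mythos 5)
Your overall skeleton---walking the clique-width expression, replacing each large label class by a constant number of representatives so that no join ever involves two large sides, and reading a tree decomposition off the resulting expression---is the same as the paper's. But there is a genuine gap at exactly the step you identify as the main obstacle, and the wrong turn is your assertion that ``a large weighted bi-clique cannot be simulated distance-for-distance by a bounded-size gadget.'' It can, and this is the whole point of the paper's proof: when a label $l$ first becomes big (at least a constant $c$ vertices) and still has pending join operations, introduce a source $s$ and a sink $t$, join $s$ to all of label $l$ and all of label $l$ to $t$ with \emph{weight-zero} edges, move the old label-$l$ vertices to a junk label, and let $s,t$ carry the label $l$ forward (this uses $\cw+3$ labels in total). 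Every future edge $(u,v)$ of weight $w$ that would have been incident on the old class is replaced by a path through $t$ (or $s$) of identical total weight, so all pairwise distances among original vertices are preserved \emph{exactly}; $s$ and $t$ are placed in the irrelevant set, so the $(k,r)$-center answer is unchanged for every $k,r$, and Lemma~\ref{lem:zero-ok} is only invoked afterwards, in the approximation theorem, to restore strictly positive weights at a $(1+\epsilon)$ cost. No compression or summarisation of the region is needed at all: the original vertices stay in the graph with their distances intact, and the iteration through rename and union nodes is handled for free because a class that becomes big again is simply re-processed.

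By contrast, your ``coverage profile'' plan is not carried out and does not obviously close. The profile of a region behind a uniform interface is a trade-off among how many centers are placed inside it, the distance at which its interface must be reached from outside for the interior to be covered, and the distance beyond the interface that interior centers can reach; these distances range up to $r$, which is unbounded, so it is not clear that an $O(1)$-vertex gadget with integer weights can realise an arbitrary such profile, let alone the profile of an arbitrarily merged family of such regions at rename and union nodes. Since this compression step is the entire content of the lemma, the proposal as written is incomplete. (Your treewidth bookkeeping at the end is fine and matches the paper: after the transformation every join involves only small label-sets, so collecting the vertices of all small labels at each node into a bag yields width $O(\cw)$.)
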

\begin{proof}
Suppose we are given a clique-width expression of $G$, represented as a binary
tree $T$, using $\cw$ labels. We will first construct a new clique-width
expression $T'$, using $\cw+3$ labels in a way that preserves all
$(k,r)$-centers. The end result will be that $T'$ does not contain any join
operations involving large label-sets. We will use this property to construct a
tree decomposition of the resulting graph.

Let $c$ be an  appropriate small constant (for concreteness, let $c=10$). We
will say that a label $l\in\{1,\ldots,\cw\}$ is \emph{big} in a node $x\in T$, if the
graph described by the sub-expression rooted at $x$ has at least $c$ vertices
with label $l$. We say that $l$ is \emph{newly big} in $x\in T$, if $l$ is big in $x$
but it is not big in any of the children of $x$ in $T$. Finally, we say that
$l$ is \emph{active inside} (respectively \emph{active outside}) in $x\in T$ if
$l$ is newly big in $x$ and, furthermore, there exists an arc $e\in E$ that is
not yet present in $x$, whose head (respectively tail) is incident on a vertex
with label $l$ in $x$.  In other words, a label is active (inside or outside)
in a sub-expression, if it contains many vertices and there is a join operation
(of the respective direction) that is applied to its vertices somewhere higher
in $T$.

Suppose that the nodes of $T$ are ordered in some arbitrary way, so that a
child is ordered before its parent. Our transformation is the following: as
long as there exists $l\in\{1,\ldots,\cw\}$ and $x\in T$ such that $l$ is
active (inside or outside) in $x$, we select the first such $x$ in the ordering
and let $l$ be an active label in $x$.  We insert directly above $x$ in $T$ the
following operations: if $l$ is active inside we introduce a vertex $s$ with
label $\cw+1$, and if $l$ is active outside we introduce a vertex $t$ with
label $\cw+2$; we then add the operations $\eta(\cw+1,l,0)$, $\eta(l,\cw+2,0)$,
followed by the rename operations $\rho(l,\cw+3), \rho(\cw+1,l),
\rho(\cw+2,l)$. In words, this transformation (potentially) adds a ``source''
$s$, or a ``sink'' $t$ to the graph (or both), gives a junk label $\cw+3$ to
all vertices that previously had label $l$ in $x$, and uses $s,t$ as the new
representatives of this class of vertices. We repeat this process until no
$x\in T$ contains an active label; this finishes in polynomial time because
once we eliminate all active labels from a node $x$, we do not re-visit it.  If
the original instance had a set of irrelevant vertices $I\subseteq V$, we
construct a set of irrelevant vertices $I'$ for $G'$ by adding to $I$ all the
$s,t$ vertices created in the above procedure.

Let us first argue that any $(k,r)$-center of the original instance can be
transformed into a $(k,r)$-center of the new instance. Consider one step of the
above procedure applied on a node $x\in T$. The main observation is that such a
step does not change the distance between any two vertices $u,v\in V$ in the
final graph. Suppose $u$ has label $l$ in $x$. Any edge incident on $u$ in $G$
that was removed as a result of this operation does not yet appear in $x$ and
hence can be replaced by a path of the same cost through $s$ or $t$. Similarly,
any $u,v$ path in the new graph that uses $s$ or $t$ in the new graph, must
also use an edge (that is not present in $x$) which was removed by the
operation. Such a path can be replaced by this edge in $G$. 

As a result of the above argument, all distances between vertices of $V$ are
preserved after the end of the procedure. Since all vertices of $V'\setminus V$
are irrelevant in $G'$, any $(k,r)$-center of $G$ is also a $(k,r)$-center of
$G'$ and vice-versa, as any solution can ignore these vertices and no solution can contain any of them. 

To see that $G'$ has small treewidth, observe that we now have a clique-width
expression for $G'$ that uses $\cw+3$ labels and has the following property:
for any join operation $\eta(a,b,w)$ applied on a node $x$, there are at most
$2c$ vertices with labels $a$ or $b$ in $x$. To see this, observe that this
operation is either an operation that was originally in $T$, or an operation
added during the transformation procedure. In the former case, neither label is
big in $x$, otherwise, there would be a descendant of $x$ where a label is
active, and we would have applied the operation. In the latter case, there is a
single vertex with label $b$, and $a$ was newly active, hence it cannot contain
more than $2c-2$ vertices (if it was the result of a rename or union operation
on two previously small label-sets).

We can therefore construct a tree decomposition of $G'$ by taking its
clique-width expression $T'$ and making a bag for each of its nodes. For the
bag $B_x$ that corresponds to node $x$ of $T'$, we find all labels
$l\in\{1,\ldots,\cw+3\}$ which are small in $x$ (that is, there are at most
$2c$ vertices of this label) and place all vertices with that label in $x$ in
$B_x$. This is a valid tree decomposition because: (i) as argued above all join
operations involve two small label-sets, hence any edge of the graph is
contained in some bag, (ii) if $u\in B_x$ but $u\not\in B_y$, for $B_y$ being
the parent of $B_x$, we know that $u$ also does not appear in any ancestor of
$B_y$, because once a vertex is part of a large label, it remains so and it
does not appear in the other child of $B_y$ (if it exists), or its descendants,
because if $B_y$ has two children it must correspond to a disjoint
union operation in $T'$. To complete the proof we observe that we have placed
at most $2c(\cw+3)$ vertices in each bag.  \end{proof}
\begin{theorem}\label{thm_cw_approx} Given $G=(V,E)$, $k,r\in\mathbb{N}^+$, clique-width expression $T$ for $G$ on $\textrm{cw}$
labels and $\epsilon>0$, there exists an algorithm that runs in
time $O^*((\cw/\epsilon)^{O(\cw)})$ and either produces a
$(k,(1+\epsilon)r)$-center, or correctly concludes that no $(k,r)$-center exists.
\end{theorem}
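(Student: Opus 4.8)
The plan is to combine Lemma~\ref{cw_fptas_transform_lem}, Lemma~\ref{lem:zero-ok} and the treewidth FPT-AS of Theorem~\ref{thm:tw_approx}, in that order. Given $[G,k,r]$ and the clique-width expression $T$, first apply Lemma~\ref{cw_fptas_transform_lem} to obtain, in polynomial time, a graph $G'\supseteq G$ with a set $I'$ of irrelevant (representative) vertices and a tree decomposition of $G'$ of width $\tw=O(\cw)$, such that for all $k,r$ the two instances are equivalent. The proof of that lemma in fact shows more: it preserves all distances among vertices of $V$, so any $(k,\rho r)$-center of the generalized instance $G'$, which is necessarily a subset of $V'\setminus I'=V$ and covers all of $V$, is also a $(k,\rho r)$-center of $G$ for every $\rho\ge 1$, and conversely. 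The transformation introduces weight-$0$ edges (those incident to the representatives), so we then run Lemma~\ref{lem:zero-ok} on $G'$ with parameter $\epsilon/3$; this keeps the graph (hence the decomposition and $I'$) and replaces the weights by strictly positive ones, producing a radius $r''$ such that every $(k,\rho r'')$-center of the rescaled instance is a $(k,\rho r)$-center of $G'$, and every $(k,\rho r)$-center of $G'$ is a $(k,(1+\epsilon/3)\rho r'')$-center of the rescaled instance.

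The one new observation needed is that the distance-labeling DP of Theorems~\ref{thm:tw-exact} and~\ref{thm:tw_approx} extends to the generalized problem with no change in running time: for an irrelevant vertex $v$ we simply forbid the label $\dl(v)=0$ (so $v$ is never a center) and add one extra ``label'' meaning ``$v$ is uncovered'', which imposes no satisfaction requirement on $v$ and cannot be used to satisfy any neighbour. This faithfully models the generalized problem, since any path from a center through an uncovered vertex already has length $>r$ and so relaying through such a vertex is useless; and it only enlarges each table by a constant factor. All the correctness arguments of Theorem~\ref{thm:tw_approx} --- the balancing theorem, the metric-closure preprocessing of Lemma~\ref{lem:metric}, and the height-indexed inductive error bound --- go through verbatim, since the only constraints modified concern irrelevant vertices.

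The algorithm then runs this generalized FPT-AS on the rescaled instance of $G'$ with error parameter $\epsilon/3$ and target radius $(1+\epsilon/3)r''$. If it reports that no $(k,(1+\epsilon/3)r'')$-center exists, then by Lemma~\ref{lem:zero-ok} $G'$ has no $(k,r)$-center and hence neither does $G$, so we answer ``no'' --- and this cannot happen spuriously, because if $G$ had a $(k,r)$-center then so would $G'$, whence the rescaled instance would admit a $(k,(1+\epsilon/3)r'')$-center. Otherwise the FPT-AS returns a $(k,(1+\epsilon/3)^2r'')$-center $K\subseteq V$ of the rescaled instance; applying Lemma~\ref{lem:zero-ok} with $\rho=(1+\epsilon/3)^2$ and then the distance-preservation of Lemma~\ref{cw_fptas_transform_lem}, $K$ is a $(k,(1+\epsilon/3)^2 r)$-center of $G$, and $(1+\epsilon/3)^2\le 1+\epsilon$ for $\epsilon\le 1$, which may be assumed. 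Since the decomposition has width $O(\cw)$ and Theorem~\ref{thm:tw_approx} runs in time independent of the (large) radius $r''$, the total running time is $O^*((\tw/\epsilon)^{O(\tw)})=O^*((\cw/\epsilon)^{O(\cw)})$.

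I expect the main thing to get right to be the ordering and error-accounting of the three steps --- Lemma~\ref{cw_fptas_transform_lem} must be applied first, while the weights are still positive, Lemma~\ref{lem:zero-ok} second, once the weight-$0$ edges have appeared, and the FPT-AS last --- together with checking that the generalized (irrelevant-vertex) version of the treewidth DP remains correct. The genuinely hard part, namely converting bounded clique-width into bounded treewidth while preserving $(k,r)$-centers, is already handled by Lemma~\ref{cw_fptas_transform_lem}, so the remainder is essentially bookkeeping.
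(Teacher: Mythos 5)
Your proposal is correct and follows essentially the same route as the paper's own proof: apply Lemma~\ref{cw_fptas_transform_lem} to get a tree decomposition of width $O(\cw)$, then Lemma~\ref{lem:zero-ok} to eliminate zero weights, then run the treewidth FPT-AS adapted to handle irrelevant vertices. Your bookkeeping of the $(1+\epsilon/3)$ factors and your explicit handling of irrelevant vertices in the DP are just slightly more detailed versions of what the paper sketches.
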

\begin{proof}
Given the \KC\ instance, we use Lemma \ref{cw_fptas_transform_lem} to obtain an
instance $G'$ and its tree decomposition of width $O(\cw)$. We then invoke
Lemma \ref{lem:zero-ok} to make all edge weights strictly positive, and then
use the algorithm of Section \ref{sec:fptas}.  We remark that this algorithm
can be easily adjusted to handle the more general case of the problem where
some vertices are irrelevant: we simply need to modify the computation on Forget
nodes to allow the algorithm to retain a solution that has not satisfied a
vertex $u$ if $u$ is irrelevant and the computation on Introduce nodes to disregard a solution that considers $u$ for inclusion in the center-set.  
\end{proof}

\section{Conclusion}
In this paper we considered the \KC\ problem, a distance-based generalization of \textsc{Dominating Set}. Due to the problem's well-investigated hardness and inapproximability, our focus has been on structural parameterization. In particular, we give tight fine-grained bounds on the complexity of \KC\ with respect to the well-known graph parameters treewidth $\tw$, clique-width $\cw$, tree-depth $\td$, vertex cover $\vc$ and feedback vertex set $\fvs$:
\begin{itemize}
 \item A Dynamic Programming algorithm of running time $O^*((3r+1)^{\cw})$ and a matching lower bound based on the SETH, that for $r=1$ closed a complexity gap for \textsc{Dominating Set} parameterized by $\cw$.
 \item W[1]-hardness and ETH-based lower bounds of $n^{o(\vc+k)}$ and $n^{o(\fvs+k)}$ for edge-weighted and unweighted graphs, respectively, while these are complemented by an $O^*(5^{\vc})$-time FPT algorithm for the unweighted case.
 \item An algorithm solving the problem in time $O^*(2^{O(\td)^2})$, that assuming the ETH would be optimal as well.
 \item Algorithms computing for any $\epsilon>0$ a $(k,(1+\epsilon)r)$-center in time $O^*((\tw/\epsilon)^{O(\tw)})$, or $O^*((\cw/\epsilon)^{O(\cw)})$, if a $(k,r)$-center exists in the graph, assuming a tree decomposition of width $\tw$ is provided along with the input.
\end{itemize}
We again observe that the problem's hardness for $\vc$ on edge-weighted graphs (being in itself uncommon) and $\fvs$ for the unweighted case points to the increasing divergence of the problem's complexity from that of \textsc{Dominating Set} for increasing values of $r$. This in turn implies the necessity of keeping the value of this distance parameter bounded for efficient exact resolution, while the alternative of producing inexact solutions is incorporated in our approximation schemes.

Considering also previous related work (notably \cite{BorradaileL16}), the above results can be seen to complete the picture on the problem's structurally parameterized complexity. Nevertheless, some remaining open questions concern the sharpening of our ETH-based lower bounds using as a starting point the more precise SETH, as well as the complexity status of the problem with respect to other structural parameters, such as rankwidth, modular-width or neighborhood diversity.

\section*{Acknowledgement}
The very useful remarks and suggestions of the two anonymous referees are gratefully acknowledged.

\bibliography{kr_center_citations}

\end{document}